\newcommand{\ETH}{{\textsf{ETH}}}
\newcommand{\tw}{\textsf{tw}}
\newcommand{\calO}{\mathcal{O}}
\newcommand{\true}{\texttt{True}}
\newcommand{\false}{\texttt{False}}
\newcommand{\yes}{\textsc{Yes}}
\newtheorem{lemma}{Lemma}[section]
\newtheorem{corollary*}{Corollary}
\newtheorem{corollary}{Corollary}
\newtheorem{definition}{Definition}[section]
\newtheorem{observation}{Observation}[section]
\newtheorem{proposition}{Proposition}[section]
\newcommand{\customlabel}[2]{%
\protected@write \@auxout {}{\string \newlabel {#1}{{#2}{}}}}
\newcommand{\defproblem}[3]{
  \vspace{1mm}
\noindent\fbox{
  \begin{minipage}{0.96\textwidth}
  \begin{tabular*}{\textwidth}{@{\extracolsep{\fill}}lr} #1 \\ \end{tabular*}
  {\bf{Input:}} #2  \\
  {\bf{Question:}} #3
  \end{minipage}
  }
  \vspace{1mm}
}
\title{A Finer View of the Parameterized Landscape of Labeled Graph
Contractions}
\author[1]{Yashaswini Mathur}
\author[2]{Prafullkumar Tale\thanks{The author is supported by INSPIRE Faculty Fellowship offered by DST, Govt of India.}} 
\affil[1]{Indian Institute of Science Education and Research Bhopal, India}
\affil[2]{Indian Institute of Science Education and Research Pune, India}
\date{February 2026} 
\begin{document}

\maketitle

\begin{abstract}
We study the \textsc{Labeled Contractibility} problem, where the input consists of two vertex-labeled graphs $G$ and $H$, and the goal is to determine whether $H$ can be obtained from $G$ via a sequence of edge contractions.

Lafond and Marchand~[WADS 2025] initiated the parameterized complexity study of this problem, 
showing it to be \(\W[1]\)-hard when parameterized by the number 
\(k\) of allowed contractions. 
They also proved that the problem is fixed-parameter tractable when 
parameterized by the tree-width \(\tw\) of \(G\), via 
an application of Courcelle's theorem resulting in a non-constructive algorithm.

In this work, we present a constructive 
fixed-parameter algorithm for \textsc{Labeled Contractibility} with running time 
\(2^{\mathcal{O}(\tw^2)} \cdot |V(G)|^{\mathcal{O}(1)}\).
We also prove that unless the Exponential Time Hypothesis (\ETH) fails, 
it does not admit
an algorithm running in time 
\(2^{o(\tw^2)} \cdot |V(G)|^{\mathcal{O}(1)}\).
This result adds \textsc{Labeled Contractibility} to a small list
of problems that admit such a lower bound
and matching algorithm.

We further strengthen existing hardness results by showing that the problem remains \NP-complete even when both input graphs have bounded maximum degree. 
We also investigate parameterizations by \((k + \delta(G))\) where \(\delta(G)\) denotes the degeneracy of \(G\), and rule out the existence of subexponential-time algorithms. This answers question raised in Lafond and Marchand~[WADS 2025]. 
We additionally provide an improved \FPT\ algorithm with better dependence on \((k + \delta(G))\) than previously known. 
Finally, we analyze a brute-force algorithm for \textsc{Labeled Contractibility} with running 
time \(|V(H)|^{\mathcal{O}(|V(G)|)}\), and show that this running time is optimal under \ETH. 

\vskip5pt\noindent{}{\bf Keywords}: Labeled Contraction, ETH Lower Bound, Treewidth 
\end{abstract}

\newpage

\section{Introduction}

Graph modification problems are a foundational topic in algorithmic graph theory. A general formulation of the $\mathcal{F}$-$\texttt{M}$-\textsc{Modification} problem asks: given a graph $G$ and a fixed set $\texttt{M}$ of allowed modification operations (such as vertex deletion, edge addition/deletion, or edge contraction), can $G$ be transformed into a graph in class $\mathcal{F}$ using at most $k$ operations from $\texttt{M}$.

It is well known that the $\mathcal{F}$-$\texttt{M}$-\textsc{Modification} problem is \NP-hard for many natural graph classes $\mathcal{F}$ when $\texttt{M}$ consists of vertex or edge deletions~\cite{10.1145/800133.804355,LewisY80}. Hardness results for edge contraction were first shown by Watanabe et al.~\cite{WatanabeAN81} and Asano and Hirata~\cite{AsanoH83}. It is noteworthy that while the problem becomes trivially solvable when $\texttt{M}$ includes only vertex or edge deletions and $\mathcal{F}$ is a singleton, this is not the case for edge contractions. In fact, Brouwer and Veldman~\cite{BrouwerV87} established that the problem remains \NP-complete even when $\mathcal{F}$ is a singleton containing a small graph, such as a cycle or a four-vertex path. This difference highlights the increased algorithmic complexity introduced by edge contractions compared to vertex or edge deletions.

Graph modification problems have played a central role in the development of parameterized complexity and the design of key algorithmic techniques. A representative list of relevant works includes~\cite{BliznetsFPP15,BliznetsFPP18,Cao16,Cao17,CaoM15,CaoM16,CrespelleDFG20,DrangeDLS22,DrangeFPV14,DrangeP18,FominKPPV14,FominV13}. For a comprehensive overview, the survey by Crespelle et al.~\cite{crespelle2020survey} provides an in-depth discussion of modification problems involving edge deletion, editing, and completion.

A series of recent papers have studied the parameterized complexity of $\mathcal{F}$ - \textsc{Contraction} for various graph classes $\mathcal{F}$, including paths and trees~\cite{heggernes2014contracting},
generalizations and restrictions of trees~\cite{agarwal2019parameterized, agrawal2017paths}, cactus graphs~\cite{krithika2018fpt}, 
bipartite graphs~\cite{guillemot2013faster, heggernes2013obtaining}, planar graphs~\cite{golovach2013obtaining}, grids~\cite{tale2020parameterized}, cliques~\cite{cai2013contracting}, 
split graphs~\cite{agrawal2019split}, chordal graphs~\cite{lokshtanov2013hardness}, bicliques~\cite{martin2015computational}, and degree-constrained graph classes~\cite{Belmonte:2014,golovach2013increasing, DBLP:conf/iwpec/0001T20}.

Recently, Lafond and Marchand~\cite{DBLP:journals/corr/abs-2502-16096} initiated the parameterized study of edge contraction problems in \emph{uniquely labeled graphs}, where each vertex has a different label. They introduced the following problem:

\defproblem{\textsc{Labelled Contractibility}}{Two vertex-labeled graphs $G$ and $H$ with $V(H) \subseteq V(G)$.}{Is $H$ a labeled contraction of $G$?}

\noindent See Definition~\ref{def:labeled-graphs} in Section~\ref{sec:prelims} for formal definitions.

They also introduced a generalization, motivated by applications in phylogenetic networks~\cite{marchand2024finding}, which are rooted acyclic directed graphs with labeled leaves representing evolutionary histories:

\defproblem{\textsc{Maximum Common Labeled Contraction}}{Two vertex-labeled graphs $G$ and $H$, and an integer $k$.}{Do there exist labeled contraction sequences $S_1$ and $S_2$ such that $G/S_1 = H/S_2$ and $|S_1| + |S_2| \le k$?}

This second problem is closely related to the first. As shown in~\cite{DBLP:journals/corr/abs-2502-16096}, $H$ is a labeled contraction of $G$ if and only if there exists a common contraction of both $G$ and $H$ to a labeled graph $H'$ of size at least $|V(H)|$. This equivalence allows complexity results to be transferred between the two problems. 
We refer readers to the same work~\cite{DBLP:journals/corr/abs-2502-16096} for 
additional results and applications.

Lafond and Marchand~\cite{DBLP:journals/corr/abs-2502-16096} showed that \textsc{Labeled Contractibility} is \(\W[1]\)-hard when parameterized by the natural parameter \(k = |V(G)\setminus V(H)|\), the number of allowed edge contractions. This motivates the exploration of structural parameters. They established that the problem admits a fixed-parameter tractable (\FPT) algorithm when parameterized by the treewidth \(\tw\) of the input graph \(G\) by invoking a variation of Courcelle’s theorem~\cite[Theorem 11.73]{DBLP:series/txtcs/FlumG06}. However, this approach yields a non-elementary dependence on the parameter.

Our first contribution is a constructive dynamic programming algorithm with a running time that is singly exponential in \(\tw^2\). We complement this with a matching lower bound under the Exponential Time Hypothesis (\ETH):

\begin{restatable}{theorem}{treewidth}
\label{thm:treewith}
\textsc{Labeled Contractibility}
\begin{itemize}[nolistsep]
    \item admits an algorithm running in time \(2^{\mathcal{O}(\tw^2)} \cdot |V(G)|^{\mathcal{O}(1)}\); but
    \item does not admit an algorithm running in time \(2^{o(\tw^2)} \cdot |V(G)|^{\mathcal{O}(1)}\), unless the \ETH\ fails.
\end{itemize}
\end{restatable}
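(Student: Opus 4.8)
The plan is to establish the two items separately: a constructive dynamic programming (DP) algorithm over a tree decomposition, and an ETH-based lower bound via a reduction from a problem already known to require $2^{\Omega(n^2)}$ time.

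For the \textbf{algorithmic upper bound}, I would first fix a nice tree decomposition of $G$ of width $\tw$, computable in FPT time. The key structural observation is that a labeled contraction of $G$ to $H$ is determined by a partition $\mathcal{W}$ of $V(G)$ into connected ``witness sets'' (branch sets), exactly one of which, for each $v \in V(H)$, is the set contracted to $v$, and the remaining witness sets must each be singletons (since $V(H) \subseteq V(G)$ and the labels are unique, every label not in $V(H)$ must be absorbed into some $v$-witness set). So the DP at a bag $X_t$ should record, for the portion of $G$ seen so far: (i) which vertices of $X_t$ belong to which witness set, encoded as a partition of $X_t$ together with, for each block, the label $v \in V(H)$ it is destined to become (or a marker that the block is still ``open''); (ii) the connectivity status of the partial witness sets restricted to $X_t$; and (iii) crucially, which adjacencies between \emph{already-finalized} witness sets have been realized — but since at most $\tw+1$ distinct witness sets touch a bag, the needed compatibility with $E(H)$ can be checked locally at the moment two witness sets are both ``complete'' below $t$. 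The number of states per bag is the number of partitions of an at-most-$(\tw+1)$-set times the labelings of blocks, which is $2^{\mathcal{O}(\tw \log \tw)}$ for the partition and $|V(H)|^{\mathcal{O}(\tw)}$ for the labeling — this last factor is too large. To get $2^{\mathcal{O}(\tw^2)}$ one instead records, for each pair of blocks of the bag-partition, one bit saying whether an edge of $H$ is required between them and whether it has been seen, giving $2^{\mathcal{O}(\tw^2)}$ states; the identity of $H$-vertices need not be carried along the whole tree, only reconciled at a designated ``join point'' using the fact that labels in $H$ appear as actual vertices of $G$. The introduce/forget/join transitions are then routine: introducing a vertex extends the partition; forgetting a vertex requires its witness set to already be connected within the processed part or to still have a representative in the bag; joining merges the two partitions, unions the connectivity relations, and ORs the edge-realized bits. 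Correctness follows by the standard exchange argument, and the running time is $2^{\mathcal{O}(\tw^2)}\cdot |V(G)|^{\mathcal{O}(1)}$.

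For the \textbf{ETH lower bound}, I would reduce from a problem that is known not to admit $2^{o(n^2)}$-time algorithms under ETH — the canonical choice is something like \textsc{$3$-Coloring} (or an equivalent CSP) on graphs with $n^2$-style structure, but more directly one reduces from a problem such as \textsc{Partitioned Subgraph Isomorphism} or uses the ``$\sqrt{\cdot}$-blow-up'' template: take an instance of a problem that is hard for parameter $p$ with no $2^{o(p)}$ algorithm (e.g.\ \textsc{$k\times k$ Clique}, \textsc{$k\times k$ Permutation}, or \textsc{Grid Tiling}) and encode it into a \textsc{Labeled Contractibility} instance whose treewidth is $\mathcal{O}(\sqrt{p}) = \mathcal{O}(\sqrt{n^2}) = \mathcal{O}(n)$ — wait, rather the reverse: start from an instance of size $N$ of a $2^{o(N)}$-hard problem, build graphs $G, H$ with $\tw(G) = \mathcal{O}(\sqrt{N})$, so that a $2^{o(\tw^2)}$ algorithm would solve the original in $2^{o(N)}$ time. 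The gadget must force the contraction process to implement global consistency among $\Theta(\tw)$ many ``coordinates'' that each carry $\Theta(\tw)$ bits of information — the natural source is \textsc{$k\times k$-Clique} or \textsc{List Coloring} on a constant-treewidth template, whose $2^{o(k^2)}$-hardness under ETH is established (e.g.\ via Lokshtanov–Marx–Saurabh). Concretely I would build a grid-like structure of $k$ rows; row $i$ has a small gadget whose contraction into a target label in $H$ is possible iff a certain value $c_i \in [k]$ is ``chosen'', the labels of $H$ encode the pair $(i, c_i)$, and consistency edges between consecutive rows in $H$ force the chosen values to form a clique/proper coloring; the treewidth of $G$ is $\mathcal{O}(k)$ because each row-gadget interacts only with its neighbors.

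The \textbf{main obstacle} I anticipate is in the algorithm: cleanly bounding the DP table to $2^{\mathcal{O}(\tw^2)}$ rather than the naive $|V(H)|^{\mathcal{O}(\tw)}$, i.e.\ arguing that one does not need to remember \emph{which} vertex of $H$ each open witness set maps to, only the pattern of required $H$-edges among the $\leq \tw+1$ witness sets currently in the bag — this requires showing that the final reconciliation of labels can be deferred and checked consistently, which is the delicate part of the correctness proof. On the lower-bound side the obstacle is engineering a contraction gadget that is simultaneously (a) forced (the only way to reach $|V(H)|$ surviving vertices is the intended one), (b) label-faithful (distinct choices produce distinct label multisets, exploiting $V(H)\subseteq V(G)$), and (c) treewidth-efficient ($\mathcal{O}(\sqrt{N})$), all at once; balancing these is where most of the construction effort will go.
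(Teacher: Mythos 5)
Your overall shape is right (a $2^{\mathcal{O}(\tw^2)}$-state DP tracking pairwise adjacency information, and a lower bound via ``compress a $2^{o(N)}$-hard problem into treewidth $\mathcal{O}(\sqrt{N})$''), but there is a genuine gap in the algorithmic half, precisely at the point you yourself flag as delicate. You work with a tree decomposition of $G$ alone and propose to drop the identity of the $H$-vertex of each witness set, keeping only pairwise bits ``an $H$-edge is required/has been seen'' between blocks of the bag, with identities ``reconciled at a designated join point.'' As stated this does not work: to decide whether a realized $G$-adjacency between two blocks is \emph{required}, \emph{forbidden} (your bits cannot even express the forbidden case, which is what kills a candidate partial solution), or irrelevant, you must know the $H$-adjacency of the two representatives, and a block can live in many bags long before or after its representative does, so the bit cannot be set correctly at the moment the adjacency is first seen, and no single ``join point'' is guaranteed to see both representatives together. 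The paper's resolution is a concrete missing ingredient in your plan: it runs the DP on a nice tree decomposition of $G \cup H$, using the fact that if $H$ is a contraction of $G$ then $\tw(G\cup H)\le 2\tw(G)$. This guarantees every edge of $H$ appears in some bag, so each required adjacency can be registered as an explicit ``unsatisfied'' requirement (with status \textsf{sat}/\textsf{unsat}/\textsf{inv}) when both endpoints' originators are determined, checked against pseudo-adjacencies of descendant sets, and enforced at forget time; the originator identities are kept (only $\mathcal{O}(\tw)$ $H$-vertices are relevant per bag), so the state count stays $2^{\mathcal{O}(\tw^2)}$ without your deferral trick. Without the $G\cup H$ decomposition (or an equally concrete substitute), your correctness argument for the ``iff'' condition between $E(H)$ and witness-set adjacency does not go through.

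On the lower bound, your template matches the paper's, but your proposed hardness sources are off: $k\times k$ \textsc{Clique}, \textsc{Permutation Clique} and \textsc{Grid Tiling} yield $2^{o(k\log k)}$ or $n^{o(k)}$ lower bounds, not the $2^{o(k^2)}$-type hardness you attribute to them, so plugging them in would not give $2^{o(\tw^2)}$. The paper instead reduces from \textsc{Sub-Cubic Partitioned Vertex Cover} (no $2^{o(n)}$ algorithm under \ETH), whose partition into $\mathcal{O}(\sqrt{n})$ independent sets of size $\mathcal{O}(\sqrt{n})$ is exactly what makes the constructed graph $G'$ have treewidth $\mathcal{O}(\sqrt{n})$: each part $C_i$ gets auxiliary vertices $x_i,y_i,z_i$ and subset-selector vertices $S_i$ (one per $(k_i+1)$-subset of $C_i$), vertices of $C_i$ are forced to contract onto $x_i$ or $y_i$, the budget is enforced by requiring every $s\in S_i$ to end up adjacent to $y_i$, and the vertex-cover property is enforced by requiring the $x_i$'s to form a clique in $H'$. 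Your row-gadget sketch is not yet a construction meeting your own criteria (forced, label-faithful, treewidth-efficient); you would need something of this concreteness, starting from a genuinely $2^{o(n)}$-hard partitioned problem, to complete the second item.
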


We note that this lower bound also holds when parameterized by the pathwidth, 
which is a larger parameter. 
Most known \FPT-algorithms that use dynamic programming on tree decompositions have running times, 
and matching lower bounds, of the form \(2^{\mathcal{O}(\tw)} \cdot n^{\mathcal{O}(1)}\)~\cite[Chapters 7,13]{cygan2015parameterized} or \(2^{\mathcal{O}(\tw \log \tw)} \cdot n^{\mathcal{O}(1)}\)~\cite{DBLP:journals/siamcomp/LokshtanovMS18, DBLP:journals/talg/CyganNPPRW22}. 
For a relatively rare set of problems exhibits a double-exponential dependence on treewidth,
see ~\cite{DBLP:conf/icalp/FoucaudGK0IST24, Tale24, ChakrabortyFM24} and references within. 
To our knowledge, the only problems with single-exponential-but-polynomial dependencies on pathwidth are mentioned in~\cite{DBLP:journals/iandc/CyganMPP17,DBLP:conf/mfcs/Pilipczuk11,DBLP:journals/iandc/SauS21}. Similar results parameterized by vertex cover number can be found in~\cite{agrawal2019split,DBLP:conf/stacs/FoucaudGK0IST25, BergougnouxKN22, ChakrabortyDFM25}.

Lafond and Marchand~\cite{DBLP:journals/corr/abs-2502-16096} observed that smaller parameters such as degeneracy do not yield tractable results. Specifically, it is known that the \textsc{Maximum Common Labeled Contraction} problem is \NP-hard even when both input graphs have constant maximum degree (and thus constant degeneracy), and that \textsc{Labeled Contractibility} is \NP-hard for graphs of bounded degeneracy~\cite[Theorem 8]{DBLP:conf/wabi/MarchandTSL24}. We strengthen this latter result by proving hardness under an even stricter constraint:

\begin{restatable}{theorem}{NPhardBoundedMaxDegree}
\label{thm:bounded-max-degree}
\textsc{Labeled Contractibility} is \NP-hard even when both \(G\) and \(H\) have bounded maximum degree.
\end{restatable}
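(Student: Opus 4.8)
The plan is to reduce from a known NP-hard problem on bounded-degree instances — the natural candidate is the NP-hardness of \textsc{Labeled Contractibility} (or \textsc{Maximum Common Labeled Contraction}) on bounded-degeneracy graphs cited as \cite[Theorem 8]{DBLP:conf/wabi/MarchandTSL24}, or a direct reduction from a classical bounded-degree NP-hard problem such as \textsc{3-SAT} with bounded occurrences, \textsc{Vertex Cover} on cubic graphs, or \textsc{3-Dimensional Matching}. The difficulty is that contracting an edge merges two neighborhoods, so even if $G$ has bounded degree, partial contractions can create high-degree vertices; conversely, forcing $H$ to have bounded degree constrains how many vertices of $G$ can be merged into a single label class. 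So the gadgetry has to be designed so that every label class of $G$ that gets contracted into a single vertex of $H$ has bounded size and its union of neighborhoods stays bounded.

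**Key steps.** First I would fix a source problem; I will sketch it with a reduction from a restricted satisfiability or covering problem whose instances have bounded degree/bounded occurrences. Second, I would build $G$ and $H$ as vertex-labeled graphs: the labels present in both $G$ and $H$ mark the "skeleton" vertices that must survive, and each such vertex $v$ of $H$ with label $\ell$ corresponds to a small connected gadget in $G$ (all of whose vertices will eventually be contracted into the single vertex labeled $\ell$), with the gadget encoding a local choice (a truth value, or a matching decision). Third, I would wire the gadgets together using the edge set so that: (a) the contraction of each gadget is always possible (connectivity), and (b) two skeleton vertices of $H$ are adjacent in $H$ if and only if, after contracting all gadgets consistently with a satisfying assignment, the corresponding contracted blobs in $G$ are adjacent — this is where consistency/propagation of the Boolean (or matching) choice is enforced, because a "bad" assignment would either create an edge in $G/S$ not present in $H$ or destroy one that is present. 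Fourth, I would verify the degree bound: since each gadget has $O(1)$ vertices and each is incident to $O(1)$ other gadgets (inherited from the bounded degree of the source instance), both $G$ and the contracted graph — hence $H$ — have maximum degree bounded by an absolute constant. Finally, the correctness proof splits into the two standard directions: a satisfying assignment (resp. a valid matching) yields an explicit contraction sequence producing exactly $H$; and conversely, any contraction sequence from $G$ to $H$ must, by a counting/connectivity argument on label classes, contract each gadget into its designated vertex, and reading off the per-gadget choices yields a satisfying assignment.

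**Main obstacle.** The hard part will be step three: ensuring the edges of $G/S$ match those of $H$ \emph{exactly} (both no spurious edges and no missing edges) while keeping degrees bounded. In particular, spurious-edge avoidance is delicate — if two gadgets that should not be adjacent in $H$ share a contracted neighbor, an edge appears — so the wiring must route all "consistency" information through intermediate skeleton vertices rather than through direct adjacencies among contracting blobs, and each such intermediate vertex can only absorb a bounded amount of traffic. I would likely need a small "propagation gadget" per edge of the source instance, itself of bounded size and bounded degree, whose two valid contraction outcomes correspond to the two ways a literal can satisfy (or fail to satisfy) a clause, and I would need to check that these propagation gadgets compose without degree blow-up. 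Once the gadget is pinned down, the degree bound and the forward direction are routine, and the reverse direction follows from the rigidity of which label classes each vertex of $G$ can belong to.
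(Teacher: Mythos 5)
There is a genuine gap: what you have written is a plan for a reduction, not a reduction. The entire content of the theorem lies in the step you defer — designing gadgets so that (a) every vertex of $G$ is \emph{forced} into a specific witness set (rigidity), (b) the adjacencies of the contracted blobs reproduce $E(H)$ exactly with no spurious edges, and (c) both graphs stay bounded-degree — and your proposal explicitly leaves this unresolved (``I would likely need a small propagation gadget\dots''). Without a concrete gadget and a concrete rigidity argument there is nothing to verify: the converse direction, which you dismiss as following ``from the rigidity of which label classes each vertex of $G$ can belong to,'' is precisely the part that requires an actual mechanism. Your first suggested route, reducing from the known hardness on bounded-degeneracy instances, also does not work as stated: bounded degeneracy does not give bounded maximum degree, and no degree-reduction transformation compatible with labeled contractions is offered, so that option collapses back to the same missing construction.

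It is also worth noting that the paper's resolution of the obstacle is structurally different from what you sketch. You assume each vertex of $H$ corresponds to an $O(1)$-size gadget of $G$ (which is how you argue the degree bound), whereas in the paper the witness sets are \emph{not} of constant size: $H$ is a long path on $2m+2$ vertices, $G$ attaches two complete binary trees to the ends of that path, and the distance-from-endpoint constraints imposed by the path force every tree vertex into the witness set matching its level — this is the rigidity mechanism. The clause vertices $C_j$, $a_j$ and their primed copies are thereby forced into the two middle witness sets, and each variable vertex $u_i$ must be absorbed into exactly one of these two adjacent sets; connectivity of $C_j$ (resp.\ $C_j'$) inside its witness set then requires at least one true and at least one false literal per clause, which is why the source problem is \textsc{Positive-Not-All-(3,4)-SAT} rather than plain bounded-occurrence \textsc{3-SAT}. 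The bounded degree of $H$ comes from $H$ being a path, not from small witness sets. So beyond being incomplete, your outline's degree-bound and correctness arguments are predicated on a gadget architecture (constant-size blobs, per-edge propagation gadgets) that you would still have to invent and that differs from the one the paper actually uses.
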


These results suggest that tractability is unlikely when parameterizing solely by the solution size \(k\) or the degeneracy \(\delta(G)\). Nevertheless, Lafond and Marchand~\cite{DBLP:journals/corr/abs-2502-16096} showed the problem becomes tractable for the combined parameter \((k + \delta(G))\), yielding an algorithm with a running time of \((\delta(G) + 2k)^k \cdot n^{\mathcal{O}(1)}\). They asked whether a subexponential-time algorithm exists for this parameterization, a question we resolve negatively:

\begin{restatable}{theorem}{degeneracyeth}
\label{thm:bounded-degeneracy-eth}
\textsc{Labeled Contractibility} does not admit an algorithm running in time \(2^{o(|V(G)| + |E(G)|)}\) even when both \(G\) and \(H\) have bounded degeneracy, unless the \ETH\ fails.
\end{restatable}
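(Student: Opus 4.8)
The plan is to design a polynomial-time reduction from a problem that is already known to have no $2^{o(n+m)}$-time algorithm under \ETH\ — the natural candidate being \textsc{3-SAT} with a linear bound on clauses (or, more conveniently for graph gadgets, \textsc{3-Coloring} on graphs of bounded degree, which by the sparsification lemma has no $2^{o(n+m)}$ algorithm under \ETH). Given such an instance $\Phi$ with $N$ variables and $M$ clauses, I would build vertex-labeled graphs $G$ and $H$ whose sizes are $\mathcal{O}(N+M)$ and which both have bounded degeneracy, such that $H$ is a labeled contraction of $G$ if and only if $\Phi$ is satisfiable (resp.\ $3$-colorable). Since $|V(G)|+|E(G)| = \mathcal{O}(N+M)$, a $2^{o(|V(G)|+|E(G)|)}$ algorithm for \textsc{Labeled Contractibility} would give a $2^{o(N+M)}$ algorithm for the source problem, contradicting \ETH.

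The technical heart is to realize Boolean choices using bounded-degeneracy gadgets. The key point is that degeneracy is small whenever the graph is globally sparse, even if a few vertices have high degree; so, unlike the bounded-\emph{max-degree} reduction of Theorem~\ref{thm:bounded-max-degree}, I can afford a handful of ``hub'' vertices of large degree as long as their number is $\mathcal{O}(1)$ or their total contribution keeps the edge count linear. Concretely I would: (i) attach to each variable $x_i$ a small variable gadget — a pair of labeled pendant structures whose only possible contractions correspond to setting $x_i$ to true or false, exploiting the fact that contracting an edge merges labels and the target $H$ dictates which merges are allowed; (ii) attach to each clause $C_j$ a clause gadget that can be ``collapsed down'' to its image in $H$ precisely when at least one of its literals has been set to satisfy it, again by encoding the admissible label-mergings in $H$; and (iii) fix the number of permitted contractions $k$ to be exactly the number of edges one must contract to collapse all gadgets under a satisfying assignment, so that any contraction sequence realizing $H$ must make a consistent choice in every variable gadget and must satisfy every clause gadget. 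Since each gadget has constant size and the incidences between variable and clause gadgets form a graph of bounded degree (bounded-degree \textsc{3-SAT}/\textsc{3-Coloring}), the union has bounded degeneracy; one verifies this by exhibiting an elimination order peeling off low-degree gadget vertices first.

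The main obstacle I anticipate is the rigidity of labeled contraction: because every vertex of $G$ carries a \emph{distinct} label and $V(H)\subseteq V(G)$, a contraction of $G$ to $H$ is essentially a partition of $V(G)$ into connected ``bags,'' one per vertex of $H$, where each bag contains exactly one vertex whose label survives (the one also present in $H$) — and the quotient must match $H$ as a labeled graph \emph{exactly}, including non-edges. This makes it delicate to design a gadget that has two genuinely distinct legal ``folding'' options (for true/false) without accidentally permitting a third, or without one choice in a variable gadget forcing something in a distant clause gadget through a spurious edge or non-edge constraint. I would handle this by keeping gadgets ``locally confined'': each gadget contracts into a region of $H$ that is label-disjoint from the others except at designated interface vertices, so that the only global interaction is the intended one (a literal vertex getting absorbed into a clause bag), and I would carefully check the non-edge constraints of $H$ to rule out unintended merges. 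A secondary, more routine obstacle is bookkeeping the exact value of $k$ and proving the ``hard'' direction — that a contraction sequence achieving $H$ within the budget cannot cheat — which I expect to follow from a counting argument: each gadget needs a certain minimum number of contractions to reach its $H$-image, these minima sum to $k$, so the budget is tight and every gadget must be collapsed in one of its canonical ways.
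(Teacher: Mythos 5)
Your high-level strategy is the same as the paper's: a linear-size, polynomial-time reduction from a sparsified SAT-type problem, exploiting exactly the observation that degeneracy tolerates a few high-degree hubs (the paper uses two global hub vertices $g_T$ and $g_F$ adjacent to all literal vertices). However, what you have written is a plan, not a proof: the entire technical content of the theorem — the concrete gadgets and the verification that the edge/non-edge constraints of $H$ force a consistent, satisfying assignment — is exactly the part you defer. You yourself name the central obstacle (designing a variable gadget with precisely two legal foldings and a clause gadget that certifies satisfaction, without spurious third options or unintended merges across gadgets) but do not resolve it. The paper resolves it by reducing from \textsc{1-in-3-SAT} (after a linear-size reduction from \textsc{3-SAT} plus the Sparsification Lemma): each variable contributes $v_i,v_i'$ in $G$ only, both adjacent to $g_T,g_F$ and to a surviving vertex $u_i$ whose required $H$-neighbourhood $\{g_T,g_F\}$ forces one literal onto $g_T$ and the other onto $g_F$; each clause contributes $w_{j0},\dots,w_{j3}$ with a $6$-cycle through the literal vertices, so that the required edges $(w_{j0},g_T)$ and $(g_F,w_{j1}),(g_F,w_{j2}),(g_F,w_{j3})$ in $H$ force exactly one literal of the clause onto $g_T$. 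The ``exactly one'' (or at least a not-all-equal-style) semantics is what makes such hub gadgets work, since a contraction-created adjacency to $g_T$/$g_F$ certifies ``at least one literal mapped there'' and a non-edge certifies ``none mapped there''; your proposed sources (\textsc{3-SAT} with plain ``at least one true'', or \textsc{3-Coloring}) do not come with a gadget, and no construction is exhibited for them.

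A secondary issue: your step (iii), fixing a contraction budget $k$ and closing the argument with a tightness/counting argument, does not fit the problem. \textsc{Labeled Contractibility} has no budget — the contracted vertices are exactly $V(G)\setminus V(H)$, so the number of contractions is forced — and the hardness must come entirely from the adjacency and non-adjacency constraints of the labeled target $H$ (equivalently, from the witness-structure characterization), which is again the part your proposal leaves unspecified. So while the approach is in the right spirit and the degeneracy observation is correct, the proposal as it stands has a genuine gap: it asserts, rather than constructs and verifies, the gadgets on which the whole reduction rests.
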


This theorem rules out the possibility of a subexponential-time algorithm parameterized by \((k + \delta(G))\) under the assumption of the \ETH. We also provide the following algorithm, which improves upon previous work:

\begin{restatable}{theorem}{degeneracy}
\label{thm:degeneracy-algo}
\textsc{Labeled Contractibility} can be solved in time \((\delta(H) + 1)^k \cdot |V(G)|^{\mathcal{O}(1)}\), where $k$ is the solution size and \(\delta(H)\) is the degeneracy of $H$.
\end{restatable}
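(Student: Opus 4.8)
The plan is to recast the problem in terms of \emph{witness functions}: $H$ is a labeled contraction of $G$ exactly when there is a surjection $\phi\colon V(G)\to V(H)$ that restricts to the identity on $V(H)$, such that every \emph{witness set} $W_h:=\phi^{-1}(h)$ induces a connected subgraph of $G$, and such that for all distinct $h,h'\in V(H)$ there is an edge of $G$ between $W_h$ and $W_{h'}$ if and only if $hh'\in E(H)$. Since $\sum_{h}(|W_h|-1)=|V(G)|-|V(H)|=k$, exactly $k$ vertices of $G$ (the \emph{absorbed} ones) lie in witness sets of size at least $2$, and every witness set contains precisely one vertex of $V(H)$. I will build such a $\phi$, if it exists, by a bounded search tree in which every branching node permanently commits one absorbed vertex to one of at most $\delta(H)+1$ possibilities; a root-to-leaf path therefore has at most $k$ branching nodes, so the tree has at most $(\delta(H)+1)^{k}$ leaves, each reached with $|V(G)|^{\mathcal{O}(1)}$ work, giving the stated bound.

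To make the branching factor $\delta(H)+1$ I would fix a degeneracy elimination order $h_1,h_2,\dots,h_m$ of $H$, so that each $h_i$ has at most $d:=\delta(H)$ neighbours in $\{h_{i+1},\dots,h_m\}$, and reconstruct the witness sets in this order: first $W_{h_1}$, then $W_{h_2}$, and so on. While processing $h_i$, the algorithm keeps a connected set $S$ with $h_i\in S$ --- the portion of $W_{h_i}$ found so far --- and a partial map $\psi$ that has already committed some absorbed vertices to later witness sets. Whenever some still-uncommitted absorbed vertex $v$ is adjacent in $G$ to $S$, the algorithm branches on the witness set of $v$: because $v$ is adjacent to $S\subseteq W_{h_i}$ and the sets $W_{h_1},\dots,W_{h_{i-1}}$ are already frozen and do not contain the uncommitted $v$, the group of $v$ must lie in $\{h_i\}\cup\bigl(N_H(h_i)\cap\{h_{i+1},\dots,h_m\}\bigr)$, which has at most $d+1$ elements. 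In the branch ``$v\in W_{h_i}$'' we add $v$ to $S$; in a branch ``$v\in W_{h_j}$'' with $j>i$ we set $\psi(v)=h_j$ and leave $v$ outside $S$, to be re-absorbed once $h_j$ is reached. When no uncommitted absorbed vertex is adjacent to $S$ we declare $W_{h_i}=S$, contract it, and pass to $h_{i+1}$. After all $m$ groups are processed we check directly that the produced family is a partition of $V(G)$ whose quotient is exactly $H$, and accept if and only if some branch passes this test.

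For soundness, note that $S$ stays connected throughout and is only grown through neighbours, so every $W_h$ is connected, and the final test certifies the quotient, so an accepting branch indeed exhibits a valid $\phi$. For completeness I would take a genuine witness function $\phi$ and follow the branch that always agrees with $\phi$: the crucial point is that in this branch the stopping rule ``no uncommitted absorbed neighbour of $S$'' fires exactly when $S=W_{h_i}$, because if $S\subsetneq W_{h_i}$ then connectivity of $G[W_{h_i}]$ forces an absorbed vertex of $W_{h_i}\setminus S$ to be adjacent to $S$, and such a vertex is either uncommitted or was already committed to $h_i$ and hence already lies in $S$. Thus the $\phi$-branch rebuilds $\phi$ exactly, and every branching node of this (or any) path commits one of the $k$ absorbed vertices, bounding the search tree by $(\delta(H)+1)^{k}$ nodes.

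The step I expect to be most delicate is the bookkeeping that justifies the stopping rule and the map $\psi$: one has to verify that a vertex committed to a later group $h_j$ while processing $h_i$ is indeed re-incorporated into $W_{h_j}$ when $h_j$ is processed --- again via connectivity of $W_{h_j}$, noting that such a vertex, once it becomes a neighbour of the growing $S$ for $h_j$, is added without further branching --- and that ``orphan'' absorbed vertices, which can only be produced along incorrect branches, are caught by the final partition check. One also has to be careful to use the degeneracy order in the direction that bounds the number of \emph{later} neighbours of each representative, and to dispose of the trivial case where $G$ has a connected component avoiding $V(H)$, which makes the instance a no-instance.
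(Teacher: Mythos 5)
Your proposal is correct, but it follows a genuinely different route from the paper. The paper does not branch at all: it fixes a proper $(\delta(H)+1)$-colouring of $H$ (obtained greedily from a degeneracy ordering) and argues that, for each absorbed vertex $x$, at most one vertex per colour class among $N_G(x)\cap V(H)$ can serve as $\phi(x)$, because assigning $x$ to $h$ forces every other $H$-neighbour of $x$ in $G$ to be adjacent to $h$ in $H$; it then enumerates all $(\delta(H)+1)^k$ resulting assignments $\phi\colon V(G)\setminus V(H)\to V(H)$ and verifies each candidate witness structure in polynomial time. You instead build a bounded search tree driven by a degeneracy elimination order of $H$, growing the witness sets one representative at a time and using connectivity of $G[W_h]$ to guarantee that every absorbed vertex is branched upon while adjacent to the current partial set, with branching factor $1+\delta(H)$ (the current representative plus its at most $\delta(H)$ later neighbours). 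The paper's static enumeration is shorter and immediately yields the refinement stated as a corollary there, namely a $\chi(H)^k$ bound obtained by substituting an optimal colouring, which your order-driven branching does not give directly since your branching factor is tied to back-degrees rather than to the chromatic number. Conversely, your connectivity-driven growth cleanly handles absorbed vertices that have no neighbour in $V(H)$ at all, or that reach their representative only through other absorbed vertices; the paper's counting tacitly restricts the candidates for $\phi(x)$ to $N_G(x)\cap V(H)$ and is silent about such vertices, so on this point your argument is the more careful one. The price is exactly the bookkeeping you flag yourself: before a witness set is closed, vertices previously committed to it must be re-absorbed without further branching (otherwise the stopping rule could fire prematurely on the correct branch), and with that rule in place your completeness argument via connectivity, the permanent-commitment bound of $k$ branchings per root-to-leaf path, and the final partition-and-quotient check together give the claimed $(\delta(H)+1)^k\cdot |V(G)|^{\mathcal{O}(1)}$ running time.
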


This algorithm improves upon the \((\delta(H) + k)^k \cdot n^{\mathcal{O}(1)}\) algorithm from~\cite[Section 3.3]{DBLP:journals/corr/abs-2502-16096}. The previous result relied on the bound \(\delta(H) \le \delta(G) + k\). We refine this with a tighter upper bound: $\delta(H) \le \delta(G) \cdot \frac{|V(G)|}{|V(G)| - k}$. For cases where \(|V(G)| \ge (1 + \epsilon) \cdot k\), this bound yields \(\delta(H) \le \delta(G) \cdot c_{\epsilon}\), where \(\epsilon\) and \(c_{\epsilon}\) are absolute constants. This substantially improves the previous analysis, leading to a more efficient algorithm under these conditions.

Finally, as our last contribution, we analyze a brute-force algorithm for \textsc{Labeled Contractibility} and establish its optimality under the \ETH. Notably, our analysis also applies to the more general \textsc{Maximum Common Labeled Contraction} problem.

\begin{restatable}{theorem}{Bruteforce}
\label{thm:brute-force-optimal}
\textsc{Labeled Contractibility}
\begin{itemize}[nolistsep]
    \item admits an algorithm running in time \(|V(H)|^{\mathcal{O}(|V(G)|)}\); but
    \item does not admit an algorithm running in time \(|V(H)|^{o(|V(G)|)}\), unless the \ETH\ fails.
\end{itemize}
\end{restatable}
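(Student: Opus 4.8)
For the $|V(H)|^{\mathcal{O}(|V(G)|)}$ algorithm I would enumerate all candidate contractions directly. A labeled contraction of $G$ onto $H$ is completely described by the function $f\colon V(G)\to V(H)$ that sends each vertex of $G$ to the vertex of $H$ it is eventually merged into; since the label of every $w\in V(H)$ must survive, $f(w)=w$ for all $w\in V(H)$, so there are at most $|V(H)|^{|V(G)|}$ candidates. For a fixed $f$, put $W_w:=f^{-1}(w)$; this $f$ witnesses ``$H$ is a labeled contraction of $G$'' if and only if each $G[W_w]$ is connected and contains no vertex of $V(H)$ other than $w$, and for all $w\neq w'$ there is an edge of $G$ between $W_w$ and $W_{w'}$ exactly when $ww'\in E(H)$. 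All of this is checkable in polynomial time, so iterating over all $f$ gives running time $|V(H)|^{\mathcal{O}(|V(G)|)}$. The same enumeration, carried out over the two contraction sequences, handles \textsc{Maximum Common Labeled Contraction}.

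\noindent\textbf{Plan for the lower bound.}
I would reduce from \textsc{3-SAT}. By the Sparsification Lemma one may assume the input formula $\phi$ has $n$ variables and $m=\mathcal{O}(n)$ clauses, so that under the \ETH\ it cannot be decided in time $2^{o(n)}$. The goal is a polynomial-time reduction producing an equivalent instance $(G,H)$ of \textsc{Labeled Contractibility} with
\[
|V(G)|=\mathcal{O}\!\left(\frac{n}{\log n}\right),
\]
and with $|V(H)|$ polynomially bounded in $n$; in fact $|V(H)|=\Theta(n/\log n)$, since always $|V(H)|\le |V(G)|$, which is large enough that the statement is a genuine ``brute force is optimal'' claim. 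Given such a reduction, a hypothetical algorithm running in time $|V(H)|^{o(|V(G)|)}\cdot|V(G)|^{\mathcal{O}(1)}$ would decide $\phi$ in time at most $|V(G)|^{o(|V(G)|)}\cdot n^{\mathcal{O}(1)}=\big(\mathcal{O}(n/\log n)\big)^{o(n/\log n)}\cdot n^{\mathcal{O}(1)}=2^{o(n)}$, contradicting the \ETH; the same conclusion transfers to \textsc{Maximum Common Labeled Contraction} via the equivalence recalled in the introduction.

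\noindent\textbf{Plan for the construction.}
To hit the size target I would use the standard device of packing $\Theta(\log n)$ bits of the assignment into each vertex, effectively compressing the gadgetry underlying Theorem~\ref{thm:bounded-degeneracy-eth} by a logarithmic factor. Partition the variables of $\phi$ into $t=\lceil n/\log n\rceil$ blocks $B_1,\dots,B_t$, each of size at most $\lceil\log n\rceil$; a partial assignment to a block is then one of at most $n$ possibilities, playing the role of a ``colour'' from a set of size $\mathcal{O}(n)$. The graph $G$ would consist of one gadget per block plus a fixed verification skeleton, arranged so that in any labeled contraction of $G$ onto $H$: (i) the gadget of $B_i$ is forced to commit to a single partial assignment $a_i$ of $B_i$, the commitment being recorded by the adjacency pattern in the contracted graph of the $H$-vertex onto which that gadget maps; and (ii) $H$ is built so that all of these adjacency patterns can be realised simultaneously if and only if $a_1,\dots,a_t$ together satisfy every clause of $\phi$. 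Since each clause reads variables from at most three blocks, it constrains at most three of the $a_i$'s, and the consistency data for all $m$ clauses is baked once and for all into the structure of $H$ rather than into $G$.

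\noindent\textbf{Expected main obstacle.}
The crux is point (ii): verifying all $\Theta(n)$ clauses while keeping $|V(G)|$ down to $\mathcal{O}(n/\log n)$. One cannot afford a separate gadget vertex of $G$ per clause, so the clause checks must be folded entirely into the (likewise sublinear-in-$n$) structure of $H$ together with the pattern of which connected part of $G$ is contracted onto which vertex of $H$. What makes this delicate---and genuinely harder than in the analogous homomorphism-type lower bounds---is the rigidity of contraction: $H$ must be an honest quotient of $G$, so in particular $|V(H)|\le |V(G)|$ and one cannot simply list all $\Theta(tn)$ \emph{(block, colour)} pairs as vertices of $H$. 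The mechanism I would aim for is that each block gadget advertises its chosen colour through a short adjacency ``fingerprint'' into a shared pool of verification vertices of $H$, with $H$ engineered so that fingerprints corresponding to a falsified clause cannot coexist in a single quotient. Making ``exactly one colour per block'' and ``every clause inspected'' hold simultaneously with only $\mathcal{O}(n/\log n)$ gadget vertices---and certifying that no spurious contraction of $G$ onto $H$ slips through---is where I expect essentially all the work to lie.
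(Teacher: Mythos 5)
Your upper bound is correct and is essentially the paper's argument: enumerate the at most $|V(H)|^{|V(G)|}$ assignments of vertices of $V(G)\setminus V(H)$ to vertices of $H$ (equivalently, candidate witness structures), and verify connectivity of each part and the edge/adjacency correspondence in polynomial time. Nothing to add there.

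The lower bound, however, has a genuine gap: you never give the construction, and the part you defer is exactly the hard part. Your plan is the standard ``pack $\Theta(\log n)$ variables per vertex'' scheme to get an equivalent instance with $|V(G)|=\mathcal{O}(n/\log n)$, but as you yourself note, the crux is forcing each block gadget to commit to exactly one of its $\mathcal{O}(n)$ partial assignments and simultaneously verifying all $\Theta(n)$ clauses while keeping both $|V(G)|$ and $|V(H)|$ sublinear in $n$ --- and doing this under the rigidity constraints of labeled contraction ($H$ must be a quotient of $G$, so you cannot afford one $H$-vertex per (block, colour) pair or per clause). No mechanism achieving this is described, and there is no argument ruling out spurious contractions, so the reduction does not exist yet; the sketch is a restatement of the difficulty rather than a proof. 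The paper sidesteps this entirely by reducing from \textsc{Cross Matching}, which already carries an $n^{o(n)}$ ETH lower bound (Fomin et al.), so all the ``grouping'' work is inherited from that source problem. The reduction itself is then very short: $G$ is the \textsc{Cross Matching} instance $G'$ plus two new vertices $x_1,x_2$ adjacent to all of $B$, and $H$ is the complete graph on $A\cup\{x_1,x_2\}$ minus the edge $(x_1,x_2)$; the missing edge $(x_1,x_2)$ forces every $b\in B$ to be contracted into a distinct vertex of $A$, recovering the cross matching, and the sizes $|V(G)|=2n+2$, $|V(H)|=n+2$ immediately give the $|V(H)|^{o(|V(G)|)}$ lower bound. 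If you want to salvage your direct-from-\textsc{3-SAT} route, you would essentially have to reprove the grouping-based hardness that \textsc{Cross Matching} encapsulates, which is a substantial construction, not a routine step.
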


\noindent \textbf{Organization.}
We adopt standard notation from graph theory and parameterized complexity, as outlined in Section~\ref{sec:prelims}.
In Section~\ref{sec:treewidth}, we present a dynamic programming algorithm based on tree decompositions and establish a matching lower bound. Specifically, we provide a parameter-preserving reduction from the \textsc{Sub-Cubic Partitioned Vertex Cover} problem~\cite{agrawal2019split}, which proves Theorem~\ref{thm:treewith}.
Section~\ref{sec:np-hard-max-degree-bound} proves Theorem~\ref{thm:bounded-max-degree} by a reduction from \textsc{Positive-Not-All-Equal-SAT}, showing \NP-hardness even for graphs of bounded maximum degree.
Section~\ref{sec:degeneracy} is dedicated to a conditional lower bound under the \ETH. Using a reduction from \textsc{1-in-3-SAT}, we prove this result even when both input graphs have bounded degeneracy. We also present an improved \FPT\ algorithm for the combined parameter \( (k+\delta(G)) \). These results together establish Theorems~\ref{thm:bounded-degeneracy-eth} and~\ref{thm:degeneracy-algo}.
Section~\ref{sec:brute-force} analyzes a brute-force algorithm and establishes its optimality under the \ETH. This is achieved via a reduction from the \textsc{Cross Matching} problem~\cite{DBLP:journals/toct/FominLMSZ21} and establishes Theorem~\ref{thm:brute-force-optimal}. Finally, Section~\ref{sec:conclusion} concludes with open problems and conjectures.
\section{Preliminaries}
\label{sec:prelims}

For a positive integer $n$, we denote $[n] = \{1, 2, \ldots, n\}$. Let $G$ be a graph. We denote by $V(G)$ and $E(G)$ the vertex set and edge set of $G$, respectively.
Two vertices $u, v \in V(G)$ are said to be \emph{adjacent} if $(u, v) \in E(G)$.
For a vertex $u \in V(G)$, let $N_G(u)$ denote the (open) neighborhood of $u$, i.e.,
the set of all vertices adjacent to $u$ in $G$, and define the 
closed neighborhood as $N_G[u] = N_G(u) \cup \{u\}$. 
We omit the subscript if the graph is clear from the context.
Given two disjoint subsets $X, Y \subseteq V(G)$, we say that $X$ and $Y$
are \emph{adjacent} if there exists an edge in $G$ with one endpoint 
in $X$ and the other in $Y$. 
For any $X \subseteq V(G)$, the subgraph of $G$ induced by $X$ is denoted by $G[X]$.
We write $G - X$ to denote the induced subgraph $G[V(G) \setminus X]$.
The \emph{maximum degree} of $G$, denoted $\Delta(G)$, is defined as 
$\max_{u \in V(G)} |N_G(u)|$. 
The \emph{degeneracy} of $G$, denoted $\delta(G)$, is the smallest integer $d$ 
such that every subgraph of $G$ contains a vertex of degree at most $d$.

To cope up with the hardness of \NP-hard problems, these problems are studied
through the framework of \emph{parameterized complexity}. 
This approach enables a finer analysis of 
computational problems by 
introducing a parameter $\ell$ 
(e.g., solution-size, treewidth, vertex cover number), which may be significantly smaller than the input size. 
A problem is \emph{fixed-parameter tractable} 
(\FPT) if it admits an algorithm running in time $f(\ell) \cdot |I|^{\mathcal{O}(1)}$ 
for some computable function $f$ and instance size $|I|$. Problems that are hard for 
the class $\W[1]$ are believed not to admit such algorithms under standard assumptions, and those that 
remain \NP-hard even for fixed parameter values are called $\textsf{para}$-\NP-hard.
We refer readers to book \cite{cygan2015parameterized}
for other related terms.

The Exponential Time Hypothesis (\ETH) of 
Impagliazzo and Paturi \cite{DBLP:journals/jcss/ImpagliazzoP01} implies that 
the \textsc{3-Sat} problem on $n$ 
variables cannot be solved in time 
$2^{o(n)}$.
The Sparsification Lemma of 
Impagliazzo et al. \cite{DBLP:journals/jcss/ImpagliazzoPZ01} 
implies that if the 
\ETH\ holds, then there is no algorithm solving 
a \textsc{3-Sat} formula with $n$ variables 
and $m$ clauses in time $2^{o(n+m)}$.

Throughout this work, we consider uniquely labeled graphs. Following the convention in~\cite{DBLP:journals/corr/abs-2502-16096}, we refer to vertices directly by their labels rather than using explicit labeling functions. Two labeled graphs $G$ and $H$ are considered \emph{equal}, denoted $G=H$, if and only if $V(G)=V(H)$ and $E(G)=E(H)$. This is distinct from the standard notion of graph isomorphism, where a bijection $\sigma: V(G') \to V(H')$ exists such that $(u, v) \in E(G')$ if and only if $(\sigma(u), \sigma(v)) \in E(H')$. In our setting, we assume that the labels of $H$ are a subset of those of $G$.

\begin{definition}
\label{def:labeled-graphs}
Let $G$ be a labeled graph and $(u,v) \in E(G)$. The \emph{labeled contraction} of the edge $(u,v)$ is an operation that transforms $G$ by:
\begin{itemize}[nolistsep]
    \item For every vertex $w \in N(v) \setminus N[u]$, adding the edge $(u,w)$ to $G$.
    \item Removing the vertex $v$ and all incident edges.
\end{itemize}
The resulting graph is denoted by $G/(u,v)$.
\end{definition}

The same is illustrated on a toy example in Figure~\ref{fig:contraction-example}.

\begin{figure}
\centering
\begin{tikzpicture}[
    vertex/.style={circle, draw, inner sep=1.5pt, font=\small},
    rep/.style={circle, draw, double, inner sep=1.5pt, font=\small},
    witness/.style={rounded corners, fill, opacity=0.35},
    scale=1
]

\begin{scope}
\node at (1.5,2.3) {$G$};

\node[vertex]    (g1) at (0,2) {1};
\node[rep] (g2) at (-1,1) {2};
\node[vertex] (g5) at (-0.5,0) {5};

\node[rep]    (g3) at (1.5,1) {3};
\node[vertex] (g4) at (1,0) {4};

\draw (g1)--(g2);
\draw (g2)--(g5);
\draw (g5)--(g3);
\draw (g3)--(g4);
\draw[red, thick] (g2)--(g3);

\end{scope}

\draw[->, thick] (3,1) -- (4,1);

\begin{scope}[xshift=6cm]
\node at (1.5,2.3) {$M$};

\node[vertex] (m1) at (0.5,2) {1};
\node[rep] (m2) at (-0.5,0.8) {2};
\node[vertex] (m5) at (0.5,0) {5};
\node[vertex] (m4) at (2,0) {4};

\draw (m1)--(m2);
\draw (m2)--(m4);
\draw[purple, thick] (m2)--(m4);
\draw (m2)--(m5);
\end{scope}

\end{tikzpicture}
\caption{Contraction along edge $(2,3)$}
\label{fig:contraction-example}
\end{figure}

Note that in general, $G/(u,v) \neq G/(v,u)$, as the vertex that is retained after the contraction (and thus the label) is different. The two resulting graphs are, however, isomorphic under the standard unlabeled notion. For simplicity, in the remainder of this paper, we assume all contractions are labeled and refer to them simply as contractions.

A labeled contraction sequence $S$ on a graph $G$ is a sequence of vertex pairs $S=((u_1, v_1), \ldots, (u_k, v_k))$ such that for each $i \in \{0, \ldots, k-1\}$, if $G^i$ is the graph obtained after the first $i$ contractions (with $G^0 := G$), then $(u_{i+1}, v_{i+1}) \in E(G^i)$ and $G^{i+1} = G^i/(u_{i+1}, v_{i+1})$. The graph obtained after the full sequence is denoted by $G/S$. If $S$ is not a valid contraction sequence on $G$, then $G/S$ is undefined. We say that a graph $H$ is a labeled contraction of $G$ if there exists a contraction sequence $S$ such that $G/S=H$. A graph $M$ is a common labeled contraction of graphs $G$ and $H$ if it is a labeled contraction of both. A maximum common labeled contraction of $G$ and $H$ is a common labeled contraction with the largest possible number of vertices.

Consider two graphs $H$ and $G$ such that $V(H) \subseteq V(G)$.
\begin{definition}
\label{def:witness}
A witness structure of $G$ into $H$ is a partition $\mathcal{W}=\{W_1, \ldots, W_{|V(H)|}\}$ of $V(G)$ into non-empty sets satisfying:
\begin{itemize}[nolistsep]
    \item For each $W_i \in \mathcal{W}$, the induced subgraph $G[W_i]$ is connected.
    \item Each $W_i$ contains exactly one vertex from $V(H)$, called its representative vertex.
    \item For any distinct $u,v \in V(H)$, the edge $(u,v) \in E(H)$ if and only if the sets in $\mathcal{W}$ containing $u$ and $v$ are adjacent in $G$.
\end{itemize}
\end{definition}

We illustrate the same in Figure~\ref{fig:Witness-Structure}.

\begin{figure}
\centering
\begin{tikzpicture}[
    vertex/.style={circle, draw, inner sep=1.5pt, font=\small},
    rep/.style={circle, draw, double, inner sep=1.5pt, font=\small},
    witness/.style={rounded corners, fill, opacity=0.35},
    scale=1
]

\begin{scope}
\node at (1.5,2.3) {$G$};

\node[rep]    (g1) at (0,2) {1};
\node[rep] (g2) at (-1,1) {2};
\node[vertex] (g5) at (-0.5,0) {5};

\node[rep]    (g3) at (1.5,1) {3};
\node[vertex] (g4) at (1,0) {4};

\node[vertex] (g6) at (2.5,0) {6};
\node[rep] (g7) at (3.5,1) {7};

\draw (g1)--(g2);
\draw (g2)--(g5);
\draw (g5)--(g3);
\draw (g3)--(g4);
\draw (g4)--(g6);
\draw (g6)--(g7);
\draw (g3)--(g7);
\draw (g4)--(g5);

\begin{pgfonlayer}{background}
    \node[witness, fill=red, fit=(g1)] {};  
    \node[witness, fill=orange, fit=(g2)(g5)] {};
    \node[witness, fill=cyan, fit=(g3)(g4)(g6)] {};
    \node[witness, fill=green, fit=(g7)] {};
\end{pgfonlayer}
\end{scope}

\draw[->, thick] (4.5,1) -- (5.3,1);

\begin{scope}[xshift=7cm]
\node at (1.2,2.3) {$H$};

\node[rep] (m1) at (0,2) {1};
\node[rep] (m2) at (-0.5,0.8) {2};
\node[rep] (m3) at (1,0.8) {3};
\node[rep] (m7) at (2.2,0.8) {7};

\draw (m1)--(m2);
\draw (m2)--(m3);
\draw (m3)--(m7);

\begin{pgfonlayer}{background}
    \node[witness, fill=red, fit=(m1)] {};  
    \node[witness, fill=orange, fit=(m2)] {};
    \node[witness, fill=cyan, fit=(m3)] {};
    \node[witness, fill=green, fit=(m7)] {};
\end{pgfonlayer}

\end{scope}

\end{tikzpicture}
\caption{Shaded regions in $G$ form the witness sets, each containing exactly one representative vertex (circled). Contracting each witness set yields $H$.}
\label{fig:Witness-Structure}
\end{figure}

There is a natural bijection between the sets in $\mathcal{W}$ and the vertices of $H$ via their representatives. It is known that $H$ is a labeled contraction of $G$ if and only if a witness structure of $G$ into $H$ exists (see Observation 2 in~\cite{DBLP:journals/corr/abs-2502-16096}). Furthermore, contractions within a single witness set can be performed in any order as long as the representative vertex is preserved (Observation 3 in the same work). For any vertex $u \in V(G)$, we denote by $\mathcal{W}(u) \in \mathcal{W}$ the unique set that contains $u$.
\section{Parameterized by Treewidth}
\label{sec:treewidth}

In this section, we present the proof of Theorem~\ref{thm:treewith} which we restate.

\treewidth*

In the first subsection, we describe the algorithmic result by presenting a dynamic 
programming algorithm on tree decomposition. 
In the second subsection, we establish a matching conditional lower bound.

\subsection{Algorithmic Result}

In this subsection we will present a dynamic programming (DP) based algorithm for 
the \textsc{Labeled Contraction} problem when parameterized by treewidth of the input graph.
For the sake of completeness, we start with the standard definitions.

\begin{definition}[\cite{cygan2015parameterized}]
A \emph{tree decomposition} of a graph \( G \) is a pair \( (\mathcal{T}, \{X_t\}_{t \in V(\mathcal{T})}) \), where \( \mathcal{T} \) is a rooted tree and each node \( t \in V(\mathcal{T}) \) is associated with a set \( X_t \subseteq V(G) \), referred to as a \emph{bag}, satisfying the following conditions:
\begin{itemize}[nolistsep]
    \item For every vertex \( v \in V(G) \), the set \( \{t \in V(\mathcal{T}) \mid v \in X_t\} \) is nonempty and induces a connected subtree of \( \mathcal{T} \);
    \item For every edge \( \{u,v\} \in E(G) \), there exists a node \( t \in V(\mathcal{T}) \) such that \( \{u,v\} \subseteq X_t \).
\end{itemize}
The \emph{width} of a tree decomposition is \( \max_{t \in V(\mathcal{T})} |X_t| - 1 \). The \emph{treewidth} of \( G \), denoted \( \tw(G) \), is the minimum width over all valid tree decompositions of \( G \).
\end{definition}

\begin{definition}[\cite{cygan2015parameterized}]
A \emph{nice tree-decomposition} of a graph $G$ is a {rooted} tree-decomposition 
\( (\mathcal{T}, \{X_t\}_{t \in V(\mathcal{T})}) \) where
\(\mathcal{T}\) is rooted at $r$ with $X_r = \emptyset$, and
each node of $\mathcal{T}$ belongs to one of the following types:
\begin{itemize}[nolistsep]
\item A leaf node $t$ is a leaf of \(\mathcal{T}\) with $X_t=\emptyset$.
\item An introduce vertex node $t$ has one child $t_1$ such that $X_t = X_{t_1}\cup \{x\}$, for some $x \notin X_{t_1}$.
\item A forget node $t$ has one child $t_{1}$ such that $X_t  = X_{t_1} \setminus \{x\}$, where $x \in X_{t_1}$.
\item An introduce edge node $t$ has one child \( t' \) with \( X_t = X_{t'} \), and 
is labeled with an edge \( (u,v) \in E(G) \) such that 
\( \{u,v\} \subseteq X_t \). We say edge \((u,v)\) is introduced at \( t \).
\item A join node $t$ has two children $t_1$ and $t_2$ such that $X_t = X_{t_1} = X_{t_2}$.
\end{itemize}
\end{definition}
Without loss of generality, we assume that a nice tree-decomposition of width $\tw$ is provided. 
If not, one can construct an optimum tree decomposition in time $2^{\mathcal{O}(\text{tw}^2)} n^{\mathcal{O}(1)}~$\cite{KorhonenL23}.

Recall that we denote \( G \cup H \) as the graph with vertex set \( V(G) \) and edge set 
\( E(G) \cup E(H) \). 
The following lemma enables us to work with tree decompositions of \( G \cup H \) instead of just \( G \), which is crucial for our formulation.

\begin{lemma}[\cite{DBLP:journals/corr/abs-2502-16096}, Lemma 13]

If \( H \) is a contraction of \( G \), then \( \tw(G \cup H) \leq 2 \tw(G) \).
\end{lemma}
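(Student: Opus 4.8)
The plan is to realize \(G \cup H\) as the union of \(G\) and a tree decomposition of \(H\) living on the \emph{same} tree as a tree decomposition of \(G\). Since \(H\) is a contraction of \(G\), fix a witness structure \(\calW\) of \(G\) into \(H\) (Observation 2 of~\cite{DBLP:journals/corr/abs-2502-16096}), and for \(W \in \calW\) let \(r(W)\) denote its representative vertex, i.e. the unique element of \(W \cap V(H)\). Start from an optimal tree decomposition \((\calT,\{X_t\}_{t \in V(\calT)})\) of \(G\), of width \(\tw(G)\), and set \(Y_t := \{\, r(W) : W \in \calW,\ W \cap X_t \neq \emptyset \,\}\). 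The first claim is that \((\calT,\{Y_t\})\) is a tree decomposition of \(H\): an edge \((r(W_i),r(W_j)) \in E(H)\) comes from adjacent witness sets \(W_i, W_j\), hence from a \(G\)-edge with one endpoint in each, which lies in some bag \(X_t\), so \(r(W_i),r(W_j) \in Y_t\); and \(\{t : r(W) \in Y_t\} = \{t : X_t \cap W \neq \emptyset\} = \bigcup_{x \in W}\{t : x \in X_t\}\) induces a connected subtree of \(\calT\) because \(G[W]\) is connected — the standard fact that the nodes whose bags meet a connected subgraph form a subtree. Note \(|Y_t| \le |X_t| \le \tw(G)+1\).

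Next I would merge the two decompositions node by node: put \(Z_t := X_t \cup Y_t\). All edges of \(G\) are covered since \(X_t \subseteq Z_t\); all edges of \(H\) are covered since \(Y_t \subseteq Z_t\); and for the connectivity axiom, a non-representative vertex \(v\) satisfies \(v \in Z_t \iff v \in X_t\), so its set of bags is unchanged, while a representative \(r(W)\) satisfies \(r(W) \in Z_t \iff X_t \cap W \neq \emptyset\) (because \(r(W) \in W\)), which was just shown to be a connected subtree. Hence \((\calT,\{Z_t\})\) is a tree decomposition of \(G \cup H = (V(G), E(G) \cup E(H))\). All of this is routine verification once the witness structure is in hand.

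The real content is the width estimate: \(|Z_t| = |X_t \cup Y_t| \le |X_t| + |Y_t| \le 2(\tw(G)+1)\), so \((\calT,\{Z_t\})\) has width at most \(2\tw(G)+1\). This already yields a bound of the form \(O(\tw(G))\), which is all the dynamic programming on tree decompositions requires; recovering the sharp constant \(2\tw(G)\) in the statement calls for a finer accounting — charging each representative in \(Z_t\) to a vertex of \(X_t\), noting that a representative already lying in \(X_t\) is not counted twice, and bounding how many distinct witness sets a single bag can meet — and this bookkeeping is the one step I expect to need genuine care; the structural axioms are mechanical by comparison.
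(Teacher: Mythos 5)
Two remarks. First, the paper does not prove this statement at all: it is imported verbatim from Lafond and Marchand (their Lemma~13), so there is no in-paper proof to compare against; your construction (add to each bag the representatives of the witness sets meeting it) is the natural argument one would expect behind such a bound, and your verification of the three tree-decomposition axioms for \(Z_t = X_t \cup Y_t\) is correct. As written, it establishes \(\tw(G\cup H)\le 2\tw(G)+1\).

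The gap is in the last step, and the fix you gesture at does not exist. The ``finer accounting'' you describe --- charge each added representative to a vertex of \(X_t\), don't double-count representatives already in \(X_t\), bound the number of witness sets a bag can meet --- is exactly the argument \(|Y_t|\le|X_t|\) that you have already used, and it cannot give anything better than \(|Z_t|\le 2|X_t|\le 2(\tw(G)+1)\), i.e.\ width \(2\tw(G)+1\). Indeed, for this construction the bound is tight: nothing prevents a maximum-size bag \(X_t\) from consisting entirely of non-representative vertices lying in pairwise distinct witness sets whose representatives all lie outside \(X_t\), in which case \(|Z_t|=2|X_t|=2\tw(G)+2\). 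A concrete instance: let \(G\) be a long path (\(\tw(G)=1\)) with the standard decomposition by bags \(\{v_i,v_{i+1}\}\), and choose witness sets so that \(v_i\) and \(v_{i+1}\) belong to different witness sets with representatives elsewhere; then \(Z_t\) has four vertices, so your decomposition has width \(3=2\tw(G)+1\), even though \(\tw(G\cup H)\le 2\) here (the union of a path with its contraction is outerplanar). So the loss is in the construction itself, not in the bookkeeping, and recovering the constant \(2\tw(G)\) of the statement requires a genuinely different idea (e.g.\ modifying the decomposition or placing representatives more selectively), not a refined count on \(Z_t\). For what it is worth, the weaker bound \(2\tw(G)+1\) --- indeed any \(O(\tw(G))\) bound --- is all that the paper's application (the \(2^{\mathcal{O}(\tw^2)}\) dynamic program over a decomposition of \(G\cup H\)) actually needs.
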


Henceforth, we fix a nice tree decomposition \( \mathcal{T} \) of \( G \cup H \). This choice is crucial because it ensures that for any edge $(u,v) \in E(H)$, there exists some bag $X_t$ such that $u,v \in X_t$, a condition that is not guaranteed in a tree decomposition of $G$ alone.
For each node \( t \in \mathcal{T} \), we define \( G_t \) and \( H_t \) to be the subgraphs of \( G \) and \( H \), respectively, induced by the union of all bags in the sub-tree of \( \mathcal{T} \) rooted at \( t \).

\begin{figure}[t]
\centering

\begin{subfigure}[t]{0.6\textwidth}
\centering
\begin{tikzpicture}
    \colorlet{lightgray}{gray!20}
    \colorlet{darkgray}{gray!60}

    \path[top color=lightgray, bottom color=lightgray, opacity=1] (0.5,0) -- (7.5,0) -- (9,-3.5) -- (-1,-3.5) -- cycle;

    \draw[top color=white, bottom color=white, opacity=1, draw=black, dotted, thick] 
        (0.5,0) -- (7.5,0) -- (8.5,2.3) -- (-0.5,2.3) -- cycle;

    \node[ellipse, fill=darkgray, minimum width=8cm, minimum height=1.5cm, opacity=1, draw=black] (Xt_oval) at (4, 0) {};

    \node[circle, fill=black, inner sep=1.5pt, label=right:{\scriptsize $u_8$}] (n1) at (0.7,0) {};
    \node[circle, fill=black, inner sep=1.5pt, label=left:{\scriptsize $u_7$}] (n2) at (2,0) {};
    \node[circle, fill=black, inner sep=1.5pt, label=left:{\scriptsize $u_1$}] (n3) at (3.5,0.3) {};
    \node[fill=black, inner sep=2pt, label=left:{\scriptsize $h_2$}] (n4) at (5,0.3) {};
    \node[circle, fill=black, inner sep=1.5pt, label=left:{\scriptsize $u_2$}] (n5) at (3.5,-0.3) {};
    \node[circle, fill=black, inner sep=1.5pt, label=left:{\scriptsize $u_9$}] (n6) at (5,-0.3) {};
    \node[circle, fill=black, inner sep=1.5pt, label=left:{\scriptsize $u_5$}] (n7) at (6.5,0) {};

    \node[circle, fill=black, inner sep=1.5pt] (n8) at (0.8,-1.6) {};
    \node[circle, fill=black, inner sep=1.5pt, label=left:{\scriptsize $u_3$}] (n9) at (2,-1.5) {};
    \node[circle, fill=black, inner sep=1.5pt] (n10) at (1,-1) {};
    \node[fill=black, inner sep=2pt, label=left:{\scriptsize $h_1$}] (n11) at (1.5,-2) {};

    \node[circle, fill=black, inner sep=1.5pt] (n12) at (3,-1.4) {};
    \node[circle, fill=black, inner sep=1.5pt] (n13) at (4,-1.2) {};
    \node[circle, fill=black, inner sep=1.5pt] (n14) at (4.5,-1.2) {};
    \node[circle, fill=black, inner sep=1.5pt] (n15) at (3.5,-2) {};

    \node[circle, fill=black, inner sep=1.5pt, label=right:{\scriptsize $u_4$}] (n16) at (6,-1.3) {};
    \node[fill=black, inner sep=2pt, label=above:{\scriptsize $h_3$}] (n17) at (7,-2) {};

    \node[circle, fill=black, inner sep=1.5pt] (h1) at (3.5, 1.5) {};
    \node[circle, fill=black, inner sep=1.5pt, label=right:{\scriptsize $u_6$}] (h2) at (4.5, 1.5) {};

    \node[draw, rounded corners=15pt, thick, fit=(n1)(n2)(n8)(n9)(n10)(n11), inner sep=7pt, label=below:{\scriptsize $W_1$}] {};
    \node[draw, rounded corners=15pt, thick, fit=(n3)(n4)(n5)(n6)(n12)(n13)(n14)(n15)(h1)(h2), inner sep=7pt, label={[label distance=9pt]below:{\scriptsize $W_2$}}] {};
    \node[draw, rounded corners=15pt, thick, fit=(n7)(n16)(n17), inner sep=7pt, label=below:{\scriptsize $W_3$}] {};

    \draw (h1) -- (h2);
    \draw (n1) -- (n10);
    \draw (n10) -- (n8);
    \draw (n2) -- (n9);
    \draw (n10) -- (n11);
    \draw (n9) -- (n11);
    \draw (h1) -- (n3);
    \draw (h2) -- (n4);
    \draw (n3) -- (n5);
    \draw (n4) -- (n6);
    \draw (n5) -- (n12);
    \draw (n6) -- (n13);
    \draw (n6) -- (n14);
    \draw (n12) -- (n15);
    \draw (n7) -- (n16);
    \draw (n16) -- (n17);
    \draw (h2) -- (n7);
    \draw (n9) .. controls (4,-3) .. (n16);

    \node[left] at (8.8,0) {$X_t$};
    \node[right] at (7.8, -3) {$G_t$}; 
    \node[right] at (6, 1.8) {$Rest~of~G$}; 

\end{tikzpicture}
\caption{Illustration of $G_t$ and $X_t$}
\end{subfigure}
\hfill
\begin{subfigure}[t]{0.35\textwidth}
\centering
\begin{tikzpicture}[
    sq/.style={rectangle, draw, minimum size=2mm},
    every node/.style={font=\small}
]

\node[sq] (n1) at (0,0) {};
\node[sq] (n2) at (2,0) {};
\node[sq] (n3) at (4,0) {};

\node[below=2pt of n1] {$h_1$};
\node[below=2pt of n2] {$h_2$};
\node[below=2pt of n3] {$h_3$};

\draw (n2) -- (n3);
\draw[bend left=40] (n1) to (n3);

\end{tikzpicture}
\caption{Graph $H$}
\end{subfigure}

\caption{Depicting definitions relevant to the algorithm}
\label{fig:side-by-side}
\end{figure}
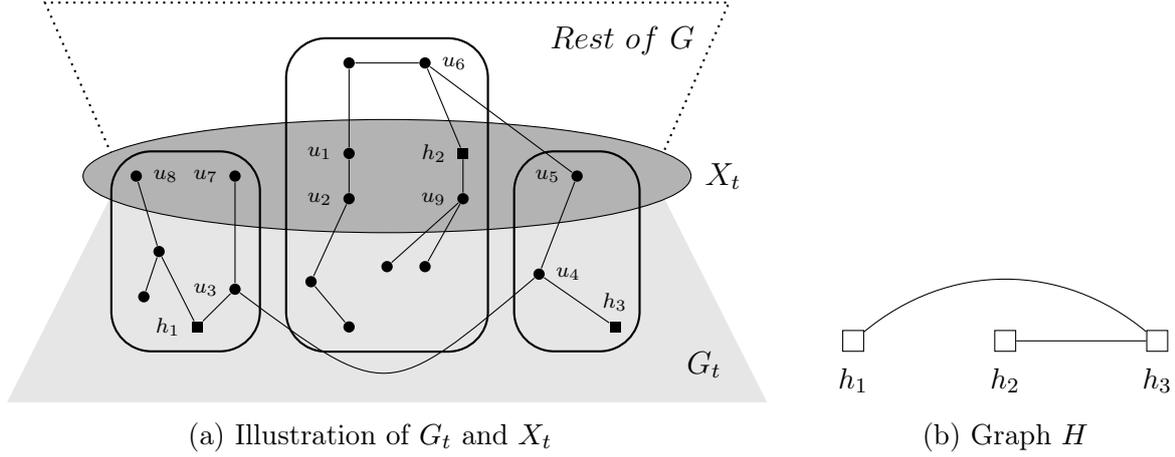

Assume that \( H \) is a contraction of \( G \), and let \( \mathcal{W} = \{W_1, \dots, W_{|V(H)|}\} \) denote a
hypothetical but fixed witness structure for this contraction. Recall that we refer to the unique vertex \( h \in W_i \cap V(H) \) as the \emph{representative vertex} of the witness set \( W_i \).

\begin{definition}[Active Vertices]
\label{def:active-vertices}
Let \( t \in V(\mathcal{T}) \) be a node with bag \( X_t \). A representative vertex \(h \in V(H)\) is active at \(t\) if its corresponding witness set \(W_h \in \mathcal{W}\) intersects with \(X_t\), i.e., \(W_h \cap X_t \neq \emptyset\). The set \(A_t \subseteq V(H)\) is the collection of all active vertices at \(t\).
\end{definition}

\begin{observation}
\label{obs:active-bound}
The number of active vertices \(|A_t|\) is upper bounded by the size of the bag, \(|A_t| \le |X_t|\), as each active vertex corresponds to a unique witness set that intersects \(X_t\). Furthermore, due to the connectivity properties of the witness sets, a representative vertex that becomes inactive in a bag \(X_t\) (i.e., its witness set \(W_h\) no longer intersects \(X_t\)) remains inactive in all subsequent bags in the traversal towards the root.
\end{observation}

\begin{definition}[Witness Impression]
\label{def:new-witness-impression}
For a given witness structure \( \mathcal{W} \), the \emph{witness impression} on the bag \(X_t\) of node \(t\) is the pair \((\mathcal{P}, \tau)\), where:
\begin{itemize}
    \item \(\mathcal{P}\) is the partition of \(X_t\) induced by the witness sets $\mathcal{W}$:
    \[\mathcal{P} = \{W \cap X_t \mid W \in \mathcal{W} \text{ and } W \cap X_t \neq \emptyset\}.\]
    \item \(\tau : \mathcal{P} \to (X_t \cap V(H)) \cup \{\uparrow, \downarrow\}\) is a function that maps each part \(P \in \mathcal{P}\) to the status of its unique representative vertex \(h \in V(H)\):
    \[ \tau(P) =
    \begin{cases}
    h & \text{if } h \in X_t \quad \text{(Representative in Bag)}\\
    \downarrow & \text{if } h \in V(H_t) \setminus X_t \quad \text{(Representative Forgotten)}\\
    \uparrow & \text{if } h \in V(H) \setminus V(H_t) \quad \text{(Representative Above)}
    \end{cases}
    \]
\end{itemize}
\end{definition}

\begin{definition}[Adjacency Sets]
\label{def:new-pseudo-adjacency}
Consider a witness impression \((\mathcal{P}, \tau)\) at node \(t\). 
For any pair of distinct partitions \((P_i, P_j) \in \binom{\mathcal{P}}{2}\), 
let \(h_i\) and \(h_j\) be their respective representative vertices. 
We define the following two sets:
\begin{itemize}
    \item \(\mathcal{A}_t\) is the set of pseudo-adjacent pairs \((P_i, P_j)\) such that there is an edge in 
    \(G\) between \(W_i \cap V(G_t)\) and \(W_j \cap V(G_t)\), where \(W_i \cap X_t = P_i\) and \(W_j \cap X_t = P_j\). 

    \item \(\mathcal{R}_t\) is the set of anti-adjacent pairs \((P_i, P_j)\) such that the representatives \(h_i\) and \(h_j\) are non-adjacent in \(H\), i.e., \((h_i, h_j) \notin E(H)\).
\end{itemize}
\end{definition}

In other words, \(\mathcal{A}_t\) stores the pair of partitions such that the 
adjacency between the corresponding witness sets \(W_i\) and \(W_j\) is witnessed by
some edge in \(E(G_t)\). It might be possible that both endpoints are completely in \(V(G_t) \setminus X_t\).
Consider an example of witness sets \(W_1\) and \(W_3\)
in Figure~\ref{fig:side-by-side}.
Similarly, \(\mathcal{R}_t\) stores the pair
of partitions that should remain non-adjacent.
This information is obvious if active vertices 
of the corresponding witness sets are in the bag.
However, if one of both of them have been forgotten,
the pairs in \(\mathcal{R}_t\) provides the critical
information.
For example, in Figure~\ref{fig:side-by-side},
\(\mathcal{R}_t\) contains a pair \((W_1, W_2)\)
as their active vertices are non-adjacent in \(H\).

We are now in a state to describe our
dynamic programming algorithm.
The dynamic programming algorithm maintains a table $\texttt{dp}[t; S]$, where the signature $S$ at node $t$ is the tuple $(\mathcal{P}, \tau, \mathcal{A}_t,\mathcal{R}_t)$. We set $\texttt{dp}[t; S] = \true$ if and only if there exists an $H$-witness structure $\mathcal{W}$ of $G$ such that $S$ is the witness impression and adjacency sets derived from $\mathcal{W}$ at node $t$.

\begin{definition}[Valid Signature] 
\label{def:valid-signature}
Consider a partition \(\mathcal{P}\) of \(X_t\), a function \(\tau: \mathcal{P} \mapsto (X_t \cap V(H))\cup \{\uparrow, \downarrow\}\), and two mutually disjoint sets \(\mathcal{A}_t, \mathcal{R}_t \subseteq \binom{\mathcal{P}}{2}\). The tuple \(S=(\mathcal{P}, \tau, \mathcal{A}_t, \mathcal{R}_t)\) is an \emph{valid signature} if all of the following conditions hold:
\begin{enumerate}
    \item (Cardinality) There is no partition \(P_i \in \mathcal{P}\) that contains more than one vertex from \(V(H)\).
    \item (Unique Mapping) There do not exist two distinct partitions \(P_i, P_j \in \mathcal{P}\) such that \(\tau(P_i) = \tau(P_j) \not\in \{\uparrow, \downarrow\}\).
    \item (Adjacency Condition) There do not exist two partitions \(P_i, P_j\) such that \((P_i, P_j) \in \mathcal{R}_t\) and $P_i$ and $P_j$ are adjacent in $G$ (i.e., there is an edge between $P_i$ and $P_j$).
    \item (Internal DP Constraint) There do not exist two partitions \(P_i, P_j\) such that \(\tau(P_i) = \uparrow\), \(\tau(P_j) = \downarrow\), and $P_i$ and $P_j$ are adjacent in $G$.
\end{enumerate}
\end{definition}

An \emph{invalid signature} is defined as a signature that does not satisfy the conditions of a valid signature.

\begin{lemma}
\label{lem:invalid-false}
For an invalid signature \(S\), \(\texttt{dp}[t; S] = \false\).
\end{lemma}
\begin{proof}
We demonstrate that if a signature $S = (\mathcal{P}, \tau, \mathcal{A}_t, \mathcal{R}_t)$ satisfies any of the invalidity conditions, it cannot be derived from any valid $H$-witness structure $\mathcal{W}$ of $G$. Thus, the $\texttt{dp}$ entry must be $\false$.

Let $\mathcal{W}$ be an arbitrary $H$-witness structure.

\begin{enumerate}
    \item (Cardinality Error) If $P_i \in \mathcal{P}$ contains distinct vertices $h_a, h_b \in V(H)$, then the witness set $W_i$ (of which $P_i$ is a fragment) contains multiple representative vertices. This violates the fundamental property of an $H$-witness structure, which requires each $W_i \in \mathcal{W}$ to contract to a single representative in $H$.
    
    \item (Unique Mapping Error) If distinct partitions $P_i, P_j \in \mathcal{P}$ map to the same representative $\tau(P_i) = \tau(P_j) = h \in X_t \cap V(H)$, then both $P_i$ and $P_j$ must belong to the same witness set $W_h$. Since $\mathcal{P}$ is the partition of $X_t$ induced by the distinct sets in $\mathcal{W}$, $P_i$ and $P_j$ cannot be distinct partitions, a contradiction.
    
    \item (Adjacency Contradiction) If $(P_i, P_j) \in \mathcal{R}_t$, then their representatives $h_i$ and $h_j$ are non-adjacent in $H$.
    If, however, $P_i$ and $P_j$ are adjacent in $G$, 
    the graph contraction rule requires $h_i$ and $h_j$ to be adjacent in $H$ to cover this edge. This direct contradiction implies no valid contraction exists for this signature.
    
    \item (Internal DP Constraint) If partitions $P_i$ ($\tau(P_i) = \uparrow$) and $P_j$ ($\tau(P_j) = \downarrow$) are adjacent in $G$, the adjacency is between a witness set whose representative $h_i$ is yet to be introduced into the subgraph ($h_i \in V(H) \setminus V(H_t)$) and a set whose representative $h_j$ has already been forgotten ($h_j \in V(H_t) \setminus X_t$). 
    This configuration represents the case when two witness sets are adjacent in \(G\) but their representative vertices
    are not adjacent in \(H\). Note that this situation can not be changed with additional vertices or edges. This structural inconsistency rules out a valid DP state.
\end{enumerate}
Therefore, no valid $H$-witness structure can produce an invalid signature $S$.
\end{proof}

In rest of the section, whenever we mention a signature, we assume it to be valid according to Definition~\ref{def:valid-signature}.
We now present how the dynamic programming
table is updated at each type of nodes.

\paragraph{Leaf Node}
The following lemma is simple to prove.

\begin{lemma}
\label{lem:dp-leaf-node-new}
    Let $t$ be a leaf node. The DP entry is $\texttt{dp}[t; S]= \true$ if and only if the signature is the empty tuple, $S=(\emptyset,\emptyset,\emptyset,\emptyset)$.
\end{lemma}
\begin{proof}
The bag is $X_t = \emptyset$. Since the partition $\mathcal{P}$ is defined on $X_t$, $\mathcal{P}$ must be empty. Consequently, $\tau$ (with domain $\mathcal{P}$) and the adjacency sets $\mathcal{A}_t$ and $\mathcal{R}_t$ (defined on pairs of $\mathcal{P}$) must all be empty sets. As $G_t$ contains no vertices, the only signature consistent with any partial witness structure is the empty one.
\end{proof}

\paragraph{Introduce Vertex Node}

Let $t$ be an \emph{introduce vertex} node in the tree decomposition $T$, with child $t'$, such that $X_t = X_{t'} \cup \{x\}$ for some vertex $x \in V(G) \setminus V(G_{t'})$.
Note that as we treat introduce edge node separately, at this stage, we do not have to consider
any new edge.
Let $S = (\mathcal{P}, \tau, \mathcal{A}_t, \mathcal{R}_t)$ be a signature at node $t$, and let $S' = (\mathcal{P}', \tau', \mathcal{A}_{t'}, \mathcal{R}_{t'})$ be a signature at node $t'$.

\begin{definition}[Introduce-Vertex Compatibility]
\label{def:intro-vertex-comp}
Valid signatures $S = (\mathcal{P}, \tau, \mathcal{A}_t, \mathcal{R}_t)$ at node $t$ and $S' = (\mathcal{P}', \tau', \mathcal{A}_{t'}, \mathcal{R}_{t'})$ at its child $t'$ are \emph{introduce-vertex compatible} 
if there exists a mapping $\phi: \mathcal{P}' \to \mathcal{P}$ satisfying one of the following two cases:

\begin{enumerate}[label=(\roman*)]
    \item $x$ joins an existing witness set: There exists a unique $P'_i \in \mathcal{P}'$ such that $\phi(P'_i) = P'_i \cup \{x\}$, and for all other $P'_j \in \mathcal{P}'$, $\phi(P'_j) = P'_j$.
    \begin{itemize}
        \item[(C1)] If $x \in V(H)$, then $\tau(\phi(P'_i)) = x$ and $\tau'(P'_i) = \uparrow$. Otherwise, $\tau(\phi(P'_j)) = \tau'(P'_j)$ for all $P'_j \in \mathcal{P}'$.
        \item[(C2)] $\mathcal{A}_t = \{(\phi(P'_a), \phi(P'_b)) \mid (P'_a, P'_b) \in \mathcal{A}_{t'}\}$.
        \item[(C3)] $\mathcal{R}_t = \{(\phi(P'_a), \phi(P'_b)) \mid (P'_a, P'_b) \in \mathcal{R}_{t'}\}$.
    \end{itemize}

    \item $x$ starts a new witness set: $\mathcal{P} = \mathcal{P}' \cup \{P_x\}$ where $P_x = \{x\}$. The mapping $\phi$ is the identity $\phi(P') = P'$ for all $P' \in \mathcal{P}'$.
    \begin{itemize}
        \item[(C1)] If $x \in V(H)$, then $\tau(P_x) = x$. Otherwise, $\tau(P_x) = \uparrow$.
        \item[(C2)] $\mathcal{A}_t = \mathcal{A}_{t'}$. (Note that $P_x$ cannot be part of any pseudo-adjacency as no edges are introduced).
        \item[(C3)] $\mathcal{R}_t = \mathcal{R}_{t'} \cup \{(P_x, P_j) \mid P_j \in \mathcal{P}' \text{ and } \tau(P_x), \tau(P_j) \in V(H) \text{ and } (\tau(P_x), \tau(P_j)) \notin E(H)\}$.
    \end{itemize}
    
\end{enumerate}
\end{definition}

\begin{lemma}
\label{lem:dp-introduce-node}
Let $t$ be an introduce vertex node with child $t'$. For a valid signature $S$, $\texttt{dp}[t; S] = \true$ if and only if there exists a signature $S'$ such that $\texttt{dp}[t'; S'] = \true$ and $S, S'$ are introduce-vertex compatible.
\end{lemma}

\begin{proof}
$(\Rightarrow)$ If $\texttt{dp}[t; S] = \true$ with witness structure $\mathcal{W}$, its restriction $\mathcal{W}'$ to $G_{t'}$ induces a signature $S'$. Since $x$ has no incident edges in $G_t$, the adjacency set $\mathcal{A}_t$ is simply the projection of $\mathcal{A}_{t'}$. If $x$ joins an existing part $P'_i$ (Case i), the mapping $\phi$ updates the part. If $x$ starts a new part (Case ii), $\mathcal{R}_t$ is correctly initialized to reflect non-adjacency requirements in $H$. In both cases, $S$ and $S'$ are compatible.

$(\Leftarrow)$ Given a witness structure for $G_{t'}$ inducing $S'$, we extend it by adding $x$ to a witness set as specified by $\phi$. Since $x$ is not incident to any edges in $G_t$, no new pseudo-adjacencies are created, making the update $\mathcal{A}_t = \mathcal{A}_{t'}$ consistent. The updates to $\tau$ and $\mathcal{R}_t$ ensure that representative identities and non-adjacency requirements are correctly tracked. Thus, the extended structure is a valid partial $H$-witness structure for $G_t$.
\end{proof}

\paragraph{Introduce Edge Node}

Let \( t \) be an \emph{introduce edge} node in the tree decomposition \( T \), with child \( t' \), such that \( X_t = X_{t'} \), and an edge \( e = (u, v) \in E(G) \) is introduced at \( t \), where \( u, v \in X_t \).
Let \( P_u, P_v \in \mathcal{P} \) be the parts containing \( u \) and \( v \), respectively.

Let \( S = (\mathcal{P}, \tau, \mathcal{A}_t, \mathcal{R}_t) \) be a signature at node \( t \), and \( S' = (\mathcal{P}', \tau', \mathcal{A}_{t'}, \mathcal{R}_{t'}) \) be a signature at node \( t' \).

\begin{definition}[Introduce-Edge Compatibility]
\label{def:intro-edge-comp}
Valid signatures \( S = (\mathcal{P}, \tau, \mathcal{A}_t, \mathcal{R}_t) \) at node \( t \) and \( S' = (\mathcal{P}', \tau', \mathcal{A}_{t'}, \mathcal{R}_{t'}) \) at child \( t' \) are \emph{introduce-edge compatible} if all of the following conditions hold:
\begin{itemize}
    \item[(C1)] \(\mathcal{P} = \mathcal{P}'\), \(\tau = \tau'\), and \(\mathcal{R}_t = \mathcal{R}_{t'}\).
    \item[(C2)] Adjacency Update: 
    \begin{enumerate}[label=(\roman*)]
        \item If \( P_u = P_v \), then \( \mathcal{A}_t = \mathcal{A}_{t'} \).
        \item If \( P_u \neq P_v \), then \( \mathcal{A}_t = \mathcal{A}_{t'} \cup \{(P_u, P_v)\} \).
    \end{enumerate}
\end{itemize}
\end{definition}

\begin{lemma}\label{lem:dp-introduce-edge}
Let \( t \) be an introduce edge node with child \( t' \), and let \( e = (u, v) \) be the introduced edge. Then,
$\texttt{dp}[t; S] = \true$ if and only if there exists a signature $S' \in \mathcal{S}_{t'}$ such that $\texttt{dp}[t'; S'] = \true$ and $S$ and $S'$ are introduce-edge compatible.
\end{lemma}

\begin{proof}
$(\Rightarrow)$ If $\texttt{dp}[t; S] = \true$ with witness structure $\mathcal{W}$, then $\mathcal{W}$ restricted to $G_{t'}$ induces a signature $S'$. Since $X_t = X_{t'}$, only the adjacency set $\mathcal{A}_t$ can change. If the introduced edge $e = \{u, v\}$ connects different parts $P_u$ and $P_v$, it witnesses a new pseudo-adjacency $(P_u, P_v) \in \mathcal{A}_t$, while all other components remain identical to $S'$.

$(\Leftarrow)$ Given a witness structure for $G_{t'}$ inducing $S'$, we add the edge $e = \{u, v\}$ to form $G_t$. If $P_u \neq P_v$, this edge creates a new adjacency between witness sets, which is correctly reflected in $\mathcal{A}_t$ by the compatibility condition. Because $S$ is valid, this new adjacency does not violate any non-adjacency constraints in $\mathcal{R}_t$, so the structure remains a valid partial $H$-witness structure for $G_t$.
\end{proof}

\paragraph{Forget Vertex Node}

Let \( t \) be a \emph{forget vertex} node in the tree decomposition \( T \), with child \( t' \), such that \( X_t = X_{t'} \setminus \{x\} \) for some vertex \( x \in X_{t'} \).

Let \( S = (\mathcal{P}, \tau, \mathcal{A}_t, \mathcal{R}_t) \) be a signature at node \( t \), and \( S' = (\mathcal{P}', \tau', \mathcal{A}_{t'}, \mathcal{R}_{t'}) \) be a signature at node \( t' \). Let $P_x \in \mathcal{P}'$ be the part containing $x$, and $P_x^* = P_x \setminus \{x\}$.

\begin{definition}[Forget-Vertex Compatibility]
\label{def:forget-node-comp}
Valid signatures \( S = (\mathcal{P}, \tau, \mathcal{A}_t, \mathcal{R}_t) \) at node \( t \) and \( S' = (\mathcal{P}', \tau', \mathcal{A}_{t'}, \mathcal{R}_{t'}) \) at child \( t' \) are \emph{forget-vertex compatible} if one of the following two cases holds:
\begin{itemize}
    \item $x$ is not the only vertex in its part ($P_x \neq \{x\}$):
    \begin{itemize}
        \item[(C1)] $\mathcal{P} = (\mathcal{P}' \setminus \{P_x\}) \cup \{P_x^*\}$.
        \item[(C2)] For all $P \in \mathcal{P}' \setminus \{P_x\}$, $\tau(P) = \tau'(P)$. For the resulting part $P_x^*$:
        \begin{enumerate}
            \item If $\tau'(P_x) = x$, then $\tau(P_x^*) = \downarrow$.
            \item Otherwise, $\tau(P_x^*) = \tau'(P_x)$.
        \end{enumerate}
        \item[(C3)] $\mathcal{A}_t$ and $\mathcal{R}_t$ are the restrictions of $\mathcal{A}_{t'}$ and $\mathcal{R}_{t'}$ to $\mathcal{P}$.
    \end{itemize}

    \item $x$ is the last vertex in its part ($P_x = \{x\}$):
    \begin{itemize}
        \item[(C1)] $\mathcal{P} = \mathcal{P}' \setminus \{P_x\}$.
        \item[(C2)] For all $P \in \mathcal{P}$, $\tau(P) = \tau'(P)$.
        \item[(C3)] $\mathcal{A}_t$ and $\mathcal{R}_t$ are the restrictions of $\mathcal{A}_{t'}$ and $\mathcal{R}_{t'}$ to $\mathcal{P}$.
        \item[(C4)] For all $P_j \in \mathcal{P}$, if $(P_x, P_j) \notin \mathcal{R}_{t'}$, then $(P_x, P_j) \in \mathcal{A}_{t'}$.
    \end{itemize}
\end{itemize}
\end{definition}

\begin{lemma}
\label{lem:dp-forget-node}
Let \( t \) be a forget vertex node with child \( t' \). For a valid signature \( S \), \(\texttt{dp}[t; S] = \true\) if and only if there exists a signature \( S' \) such that \(\texttt{dp}[t'; S'] = \true\) and \( S, S' \) are forget-vertex compatible.
\end{lemma}

\begin{proof}
$(\Rightarrow)$ If $\texttt{dp}[t; S] = \true$ with witness structure $\mathcal{W}$, then $\mathcal{W}$ induces a signature $S'$ for $G_{t'}$ containing $x$. Since $G_t = G_{t'}$, the only difference is the restriction of the partition to $X_t \setminus \{x\}$. If $x$ was the last vertex of its part in the bag ($P_x = \{x\}$), any witness set $W_j$ required to be adjacent to $W_x$ in $H$ (where $(P_x, P_j) \notin \mathcal{R}_{t'}$) must have already established an adjacency in $G_{t'}$, satisfying (C4).

$(\Leftarrow)$ A valid witness structure for $G_{t'}$ remains valid for $G_t$. The compatibility cases ensure $S$ correctly tracks the representative status after $x$ is removed. If $P_x$ is removed from the bag (Case 2), condition (C4) guarantees that $W_x$ has already satisfied all required adjacencies with other witness sets. Since $S$ is valid, the forgotten witness set $W_x$ does not violate any global $H$-contraction requirements, so $\texttt{dp}[t; S] = \true$.
\end{proof}

\paragraph{Join Node}
Let \( t \) be a \emph{join} node with children \( t_L \) and \( t_R \), such that \( X_t = X_{t_L} = X_{t_R} \).

\begin{definition}[Join-Node Compatibility]
\label{def:join-node-comp}
Valid signatures \( S = (\mathcal{P}, \tau, \mathcal{A}_t, \mathcal{R}_t) \) at node \( t \), \( S_L = (\mathcal{P}_L, \tau_L, \mathcal{A}_L, \mathcal{R}_L) \) and \( S_R = (\mathcal{P}_R, \tau_R, \mathcal{A}_R, \mathcal{R}_R) \) at children \( t_L, t_R \) are \emph{join-compatible} if the following conditions hold:
\begin{itemize}
    \item[(C1)] $\mathcal{P} = \mathcal{P}_L = \mathcal{P}_R$. Furthermore, for every $P \in \mathcal{P}$:
    \begin{itemize}
        \item If $\tau_L(P) \in V(H)$ and $\tau_R(P) = \uparrow$, then $\tau(P) = \tau_L(P)$.
        \item If $\tau_R(P) \in V(H)$ and $\tau_L(P) = \uparrow$, then $\tau(P) = \tau_R(P)$.
        \item If $\tau_L(P) = \downarrow$ or $\tau_R(P) = \downarrow$, then $\tau(P) = \downarrow$.
        \item Otherwise, $\tau_L(P) = \tau_R(P) = \tau(P)$.
    \end{itemize}
    \item[(C2)] $\mathcal{A}_t = \mathcal{A}_L \cup \mathcal{A}_R$.
    \item[(C3)] $\mathcal{R}_t = \mathcal{R}_L = \mathcal{R}_R$.
\end{itemize}
\end{definition}

\begin{lemma}
\label{lem:dp-join-node}
Let \( t \) be a join node with children \( t_L, t_R \). For a valid signature \( S \), \(\texttt{dp}[t; S] = \true\) if and only if there exist signatures \( S_L, S_R \) such that \(\texttt{dp}[t_L; S_L] = \true\), \(\texttt{dp}[t_R; S_R] = \true\), and \( S, S_L, S_R \) are join-compatible.
\end{lemma}

\begin{proof}
$(\Rightarrow)$ If $\texttt{dp}[t; S] = \true$, the witness structure $\mathcal{W}$ for $G_t$ restricted to $G_{t_L}$ and $G_{t_R}$ must induce $S_L$ and $S_R$. Since $V(G_{t_L}) \cap V(G_{t_R}) = X_t$, the bag partitions must be identical. The status $\tau(P)$ correctly merges the representative's state (present, missing, or forgotten) from both branches. Since $E(G_t) = E(G_{t_L}) \cup E(G_{t_R})$, any set-to-set adjacency witnessed in $G_t$ must appear in $\mathcal{A}_L \cup \mathcal{A}_R$.

$(\Leftarrow)$ Given compatible $S_L, S_R$, we merge their witness structures by taking the union of corresponding witness sets. Compatibility ensures they agree on the bag partition and representatives. The union $\mathcal{A}_t = \mathcal{A}_L \cup \mathcal{A}_R$ accounts for all edges in $G_t$. Since $S$ is valid, this union satisfies all non-adjacency requirements in $\mathcal{R}_t$, making the merged structure a valid partial $H$-witness structure for $G_t$.
\end{proof}

We are finally at a position to present proof of algorithm mentioned in Theorem~\ref{thm:treewith}.

\begin{proof}(Algorithm in Theorem~\ref{thm:treewith})
The algorithm employs dynamic programming over a nice tree decomposition of \(G \cup H\), maintaining states indexed by signatures \( S = (\mathcal{P}, \tau, \mathcal{A}_t, \mathcal{R}_t) \). For a bag \( X_t \) of size at most \(\tw + 1\), the total number of distinct signatures $|\mathcal{S}|$ is bounded by combining the state space components. The number of partitions $\mathcal{P}$ of $X_t$ is bounded by $2^{\mathcal{O}(\tw \log \tw)}$. The status function $\tau$, which maps each part to one of $\mathcal{O}(\tw)$ possible representatives or status markers ($\uparrow, \downarrow$), yields at most $(\tw+3)^{|\mathcal{P}|} = 2^{\mathcal{O}(\tw \log \tw)}$ possibilities. Since the number of pairs of parts is $\mathcal{O}(\tw^2)$, the total number of combinations for the adjacency sets $\mathcal{A}_t$ and $\mathcal{R}_t$ is $2^{\mathcal{O}(\tw^2)}$. Combining these, the total number of signatures per node is $|\mathcal{S}| \le 2^{\mathcal{O}(\tw^2)}$. Each DP transition at Leaf, Introduce, and Forget nodes iterates over $\mathcal{O}(|\mathcal{S}|)$ states of the child, performing checks in polynomial time in $\tw$. At Join nodes, pairs of states are combined, resulting in $\mathcal{O}(|\mathcal{S}|^2) = 2^{\mathcal{O}(\tw^2)}$ combined states, merged in polynomial time. Since the tree decomposition has $\mathcal{O}(|V(G)|)$ nodes, the overall running time is $\mathcal{O}(|V(G)|) \cdot 2^{\mathcal{O}(\tw^2)} = 2^{\mathcal{O}(\tw^2)} \cdot |V(G)|^{\mathcal{O}(1)}$. 
This complexity proves the correctness of the algorithm mentioned in Theorem~\ref{thm:treewith}.
\end{proof}

\subsection{Conditional Lower Bound}

In this section, we demonstrate that unless the \ETH\ 
fails, \textsc{Labeled Contraction} does not admit an algorithm running in time $2^{o(\tw^2)} \cdot |V(G)|^{\mathcal{O}(1)}$, where $\tw$ is the treewidth of the input graph $G$. 
To obtain the lower bound, 
we reduce from a special case of \textsc{Vertex Cover} on sub-cubic graphs, defined in \cite{agrawal2019split} as follows:

\defproblem{\textsc{Sub-Cubic Partitioned Vertex Cover (Sub-Cubic PVC)}}{A sub-cubic 
graph $G$; an integer $t$; 
for $i \in [t]$, an integer $k_i \ge 0$; 
a partition $\mathcal{P} = \{C_1, \ldots, C_t\}$ of $V(G)$ such that 
$t \in \mathcal{O}(\sqrt{|V(G)|})$ and for all $i \in [t]$, 
$C_i$ is an independent set and $|C_i| \in \mathcal{O}(\sqrt{|V(G)|})$. 
Furthermore, for $i,j \in [t]$, $i \ne j$, 
$|E(G[C_i \cup C_j]) \cap E(G)| = 1$.}{Does $G$ have a vertex cover 
$X$ such that for all $i \in [t]$, $|X \cap C_i| \le k_i$?}

\begin{proposition}[Theorem~3.9 in \cite{agrawal2019split}]
\label{prop:sub-cubic-pvc-eth}
\textsc{Sub-Cubic PVC} does not admit an algorithm running 
in time $2^{o(n)}$, unless the \ETH\ fails.
\end{proposition}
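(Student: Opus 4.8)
The plan is to obtain the bound by a polynomial-time, \emph{linear-size} reduction from a problem with a known $2^{o(n)}$ lower bound. A convenient starting point is \textsc{Vertex Cover} on sub-cubic graphs (equivalently, one may start from \textsc{3-SAT} and invoke the Sparsification Lemma): via the standard chain of reductions from \textsc{3-SAT} — bound the number of occurrences of each variable, apply the clause/variable gadget, then reduce the maximum degree to three — one knows that \textsc{Vertex Cover} on $N$-vertex sub-cubic graphs has no $2^{o(N)}$-time algorithm unless the \ETH\ fails. The goal is to transform such an instance $(H,\ell)$, $|V(H)|=N$, into an equivalent \textsc{Sub-Cubic PVC} instance on $O(N)$ vertices; a $2^{o(n)}$-time algorithm for the latter would then decide \textsc{Vertex Cover} on sub-cubic graphs in $2^{o(N)}$ time, contradicting the \ETH.

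\textbf{The construction.} Set $t=\Theta(\sqrt N)$. The observation guiding the construction is that in any \textsc{Sub-Cubic PVC} instance the ``quotient'' obtained by collapsing each $C_i$ to a point is exactly the complete graph $K_t$ (one edge between every pair of blocks), that there are precisely $\binom{t}{2}=\Theta(N)$ such edges, and that reading ``$|X\cap C_i|\le k_i$'' for a vertex cover $X$ as ``$|C_i\setminus X|\ge |C_i|-k_i$'' turns the problem into: select an independent set $I$ of $G$ with $|I\cap C_i|\ge m_i$ for each $i$, where the only constraints beyond the cardinality lower bounds are the ``at most one endpoint of this edge lies in $I$'' conditions on the $\binom{t}{2}$ inter-block edges. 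Since a sparsified source instance carries only $\Theta(N)$ ``local'' constraints, there is just enough room: I would lay out the variable- and clause-gadget vertices into $t$ blocks of size $O(\sqrt N)$ so that each block is independent and each vertex is an endpoint of at most three inter-block edges (keeping $G$ sub-cubic), route each constraint of the source instance onto a distinct inter-block edge, fill every remaining (edgeless) block-pair with a ``neutral'' edge on two fresh vertices, and choose the lower bounds $m_i$ (equivalently the budgets $k_i$) so that a budget-respecting solution is forced to realise a valid truth assignment. Checking that a satisfying assignment yields a valid selection, and conversely, is then routine.

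\textbf{Main obstacle.} The crux — and the point that dictates the exact gadgetry — is the rigidity of the target format: \emph{exactly} one edge between \emph{every} pair of blocks (not zero, not two), maximum degree three, and the $\sqrt N\times\sqrt N$ block shape, all simultaneously. Because a block-pair may carry only a single edge, a constraint that naturally wants two ``conflict'' edges between the same two groups of variables must be split by inserting an intermediate vertex and rerouting its edges through a fresh block-pair; and the padding of the edgeless block-pairs with neutral edges must be done so that those edges contribute a fixed, pre-computable amount to each incident budget $k_i$ and interact with nothing else — otherwise the per-block budgets could be traded across blocks and spoil the equivalence. Ensuring, simultaneously, independence of every $C_i$, the exact-$K_t$ quotient, sub-cubicity, block sizes $O(\sqrt N)$, and a total of $O(N)$ vertices (so that the $2^{o(n)}$ bound transfers) is where essentially all the work lies; once the gadgets are verified to meet these constraints and to be budget-tight, the correctness argument and the \ETH\ transfer are immediate.
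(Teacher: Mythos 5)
The paper does not prove this proposition at all: it is imported verbatim as Theorem~3.9 of~\cite{agrawal2019split}, so what you are really attempting is a reconstruction of that external proof. Your high-level plan (linear-size reduction from sub-cubic \textsc{Vertex Cover}, blocks of size $\mathcal{O}(\sqrt{N})$, padding the missing block-pairs with pendant ``neutral'' edges, per-block budgets) is the right shape, but as written it is an outline with the central construction missing. The entire content of the theorem is the existence of a partition of a sub-cubic, linear-size vertex-cover instance into $t=\Theta(\sqrt{N})$ \emph{independent} blocks of size $\mathcal{O}(\sqrt{N})$ such that the edges are in bijection with the $\binom{t}{2}$ block pairs (exactly one edge between every pair, hence $|E(G)|=\binom{t}{2}$ exactly), while the instance stays equivalent and of size $\mathcal{O}(N)$. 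You explicitly defer this (``where essentially all the work lies'') and never exhibit the assignment of vertices to blocks, so the proposal does not establish the statement.

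Two of the specific devices you gesture at also do not obviously work as stated and would need to be repaired or replaced. First, ``splitting'' a conflict by inserting an intermediate vertex is an edge subdivision, and a single subdivision does not preserve vertex-cover equivalence (one needs an even number of subdivisions, with the budget adjusted accordingly), so routing one of two parallel block-pair edges through a middle vertex is not a neutral operation. Second, ``rerouting through a fresh block-pair'' is not free: a fresh block must itself meet the exactly-one-edge requirement with \emph{every} other block, so each new block forces $t-1$ further padding edges and further budget bookkeeping, and one must check this cascade still yields block sizes $\mathcal{O}(\sqrt{N})$, sub-cubicity, and $\mathcal{O}(N)$ total vertices, and that the padding budgets cannot be traded against the real constraint edges. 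Until the block assignment and these gadget repairs are written down and verified, the argument is a plausible plan rather than a proof; the honest alternative is to do what the paper does and cite Theorem~3.9 of~\cite{agrawal2019split}, where exactly this bookkeeping is carried out.
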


We now prove the lower bound in Theorem~\ref{thm:treewith}.

Let \((G, \mathcal{P} = \{C_1, C_2, \ldots, C_t\}, k_1, \ldots, k_t)\) 
be an instance of the \textsc{Sub-Cubic PVC} problem, 
where \( n = |V(G)| \). 
We construct an instance \((G', H')\) of the \textsc{Labeled Contraction} 
problem as follows.

\noindent\emph{Construction of \(G'\):}
We begin by setting \( V(G') \supseteq V(G) \), i.e., all vertices of 
\( G \) are copied into \( G' \) (without copying any edges). 
For each partition \( C_i \in \mathcal{P} \), 
we introduce a set \( S_i \) consisting of 
\( \binom{|C_i|}{k_i + 1} \) new vertices in \( G' \), 
where each vertex \( s \in S_i \) corresponds uniquely 
to a subset \( C_i' \subseteq C_i \) of size exactly 
\( k_i + 1 \). 
For each such vertex \( s \in S_i \), we add edges 
between \( s \) and every vertex in the subset 
\(C_i \).
Additionally, for every \( C_i \in \mathcal{P} \), we 
introduce three auxiliary vertices \( \{x_i, y_i, z_i\} \) in \( G' \).
Each of these vertices is connected to every vertex in \( C_i \). 
Moreover, we add edges $(z_i, x_i)$ and
$(z_i, y_i)$.
Each vertex \( s \in S_i \) is connected to the corresponding vertex \( x_i \). (Note that in the figure, an edge between $x_i$ and $S_j$ implies an edge between $x_i$ and every vertex in $S_j$.)

We now encode the edges in $G$.
Consider a pair \( i, j \in [t] \) and an edge 
\( (u, v) \in E(G) \) with \( u \in C_i \) and \( v \in C_j \).
By the problem definition, such an edge exists.
We add edges \( (u, x_j) \) and \( (v, x_i) \) to \( G' \).

Finally, the set \( \{y_1, y_2, \ldots, y_t\} \) is made into a clique 
by adding all possible edges between them, and 
for every \( i, j \in [t] \), we add the edge \( (x_i, y_j) \) in $G'$.

\noindent\emph{Construction of \(H'\):}
We start constructing $H'$ from $G'$ by deleting all vertices in $G$ in it.
Formally,
$V(H') = \left( \bigcup_{i=1}^{t} S_i \right) \cup \left( \bigcup_{i=1}^{t} \{x_i, y_i, z_i\} \right)$,
that is, it includes all newly introduced vertices from the sets 
\( S_i \) and the auxiliary vertices \( x_i, y_i, z_i \) 
for each \( i \in [t] \), 
but does not include any vertex from the original graph \( G \).
The edge set of \( H' \) is defined as follows. 
First, a clique is formed on the vertex set 
\( \{y_1, y_2, \ldots, y_t\} \cup \{x_1, x_2, \ldots, x_t\} \), 
adding all edges between these vertices. 
Additionally, for every \( i \in [t] \), we include the edges 
\( (x_i, z_i) \) and \( (y_i, z_i) \). 
Finally, for every \( i \in [t] \) and each vertex \( s \in S_i \), 
we add the edge \( (s, y_i) \).

This concludes the construction.
We refer the reader to Figure~\ref{fig:treewidth-lower-bound} 
for an illustration of the reduction.
We now prove the correctness of the reduction.

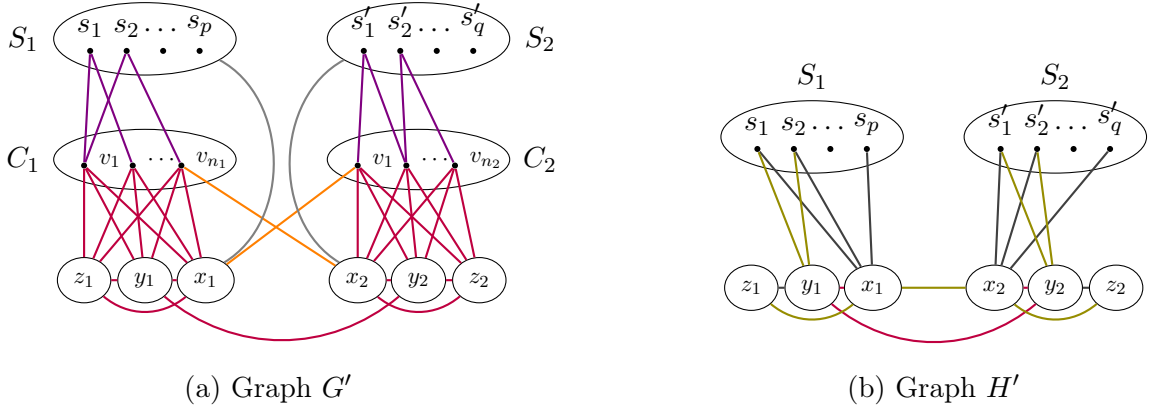
\begin{figure}[t]
\centering

\begin{subfigure}[t]{0.45\textwidth}
\centering
\begin{tikzpicture}[
scale = 0.85,
dot/.style={circle, fill, inner sep=1pt, scale = 0.8},
sq/.style={rectangle, draw, minimum size=4mm},
sball/.style={ellipse, draw, minimum height=1.2cm, minimum width=3.5cm},
edge/.style={thick, purple},
vc/.style={orange, thick},
subs/.style={violet, thick},
new/.style={green, thick},
side/.style={gray, thick},
every node/.style={font=\small},
]


\node[ellipse, draw, minimum height=1cm, minimum width=3cm, scale = 0.8] (C1) at (0,0.5) {};
\node at ($(C1)+(-2,0)$) {$C_1$};

\node[ellipse, draw, minimum height=1.2cm, minimum width=3cm, scale = 0.8] (S1) at (0,2.5) {};
\node at ($(S1)+(-2,0)$) {$S_1$};

\node[dot, label=above:$s_1$] (s1) at ($(S1)+(-0.9,-0.2)$) {};
\node[dot, label=above:$s_2$] (s2) at ($(S1)+(-0.3,-0.2)$) {};
\node[dot, label=above:$\cdots$] (s3) at ($(S1)+(0.3,-0.2)$) {};
\node[dot, label=above:$s_p$] (s4) at ($(S1)+(0.9,-0.2)$) {};

\node[dot, label={[xshift=1pt, yshift=2pt, scale = 0.8]right:$v_1$}] (v1) at ($(C1)+(-1,-0.1)$) {};
\node[dot, label={[xshift=1pt, yshift=2pt, scale = 0.8]right:$\cdots$}] (v2) at ($(C1)+(-0.2,-0.1)$) {};
\node[dot, label={[xshift=1pt, yshift=2pt, scale = 0.8]right:$v_{n_1}$}] (v3) at ($(C1)+(0.6,-0.1)$) {};

\draw[subs] (s1) -- (v1);
\draw[subs] (s1) -- (v2);
\draw[subs] (s2) -- (v1);
\draw[subs] (s2) -- (v3);

\node[ellipse, draw, minimum height=7.5mm, minimum width=7.5mm, scale = 0.8] (x1) at (1,-1.5) {$x_1$};
\node[ellipse, draw, minimum height=7.5mm, minimum width=7.5mm, scale = 0.8] (y1) at (0,-1.5) {$y_1$};
\node[ellipse, draw, minimum height=7.5mm, minimum width=7.5mm, scale = 0.8] (z1) at (-1,-1.5) {$z_1$};
\draw[edge] (v1) -- (x1);
\draw[edge] (v2) -- (x1);
\draw[edge] (v3) -- (x1);
\draw[edge] (v1) -- (y1);
\draw[edge] (v2) -- (y1);
\draw[edge] (v3) -- (y1);
\draw[edge] (v1) -- (z1);
\draw[edge] (v2) -- (z1);
\draw[edge] (v3) -- (z1);
\draw[edge] (x1) -- (y1);

\node[ellipse, draw, minimum height=1cm, minimum width=3cm, scale = 0.8] (C2) at (4.5,0.5) {};
\node at ($(C2)+(2,0)$) {$C_2$};

\node[ellipse, draw, minimum height=1.2cm, minimum width=3cm, scale = 0.8] (S2) at (4.5,2.5) {};
\node at ($(S2)+(2,0)$) {$S_2$};

\node[dot, label=above:$s_1'$] (s1p) at ($(S2)+(-0.9,-0.2)$) {};
\node[dot, label=above:$s_2'$] (s2p) at ($(S2)+(-0.3,-0.2)$) {};
\node[dot, label=above:$\cdots$] (s3p) at ($(S2)+(0.3,-0.2)$) {};
\node[dot, label=above:$s_q'$] (skp) at ($(S2)+(0.9,-0.2)$) {};

\node[dot, label={[xshift=1pt, yshift=2pt, scale = 0.8]right:$v_1$}] (v4) at ($(C2)+(-1,-0.1)$) {};
\node[dot, label={[xshift=1pt, yshift=2pt, scale = 0.8]right:$\cdots$}] (v5) at ($(C2)+(-0.2,-0.1)$) {};
\node[dot, label={[xshift=1pt, yshift=2pt, scale = 0.8]right:$v_{n_2}$}] (v6) at ($(C2)+(0.6,-0.1)$) {};

\draw[subs] (s1p) -- (v4);
\draw[subs] (s1p) -- (v5);
\draw[subs] (s2p) -- (v5);
\draw[subs] (s2p) -- (v6);

\node[ellipse, draw, minimum height=7.5mm, minimum width=7.5mm, scale = 0.8] (x2) at (3.5,-1.5) {$x_2$};
\node[ellipse, draw, minimum height=7.5mm, minimum width=7.5mm, scale = 0.8] (y2) at (4.5,-1.5) {$y_2$};
\node[ellipse, draw, minimum height=7.5mm, minimum width=7.5mm, scale = 0.8] (z2) at (5.5,-1.5) {$z_2$};
\draw[edge] (v4) -- (x2);
\draw[edge] (v5) -- (x2);
\draw[edge] (v6) -- (x2);
\draw[edge] (v4) -- (y2);
\draw[edge] (v5) -- (y2);
\draw[edge] (v6) -- (y2);
\draw[edge] (v4) -- (z2);
\draw[edge] (v5) -- (z2);
\draw[edge] (v6) -- (z2);
\draw[edge] (x2) -- (y2);
\draw[edge, bend right = 40] (y1) to (y2);
\draw[edge] (y1) to (z1);
\draw[edge, bend right = 40] (z1) to (x1);
\draw[edge] (y2) to (z2);
\draw[edge, bend left = 40] (z2) to (x2);
\draw[vc] (v3) -- (x2);
\draw[vc] (v4) -- (x1);

\draw[side, bend left = 60] (S1) to (x1);
\draw[side, bend right = 60] (S2) to (x2);

\end{tikzpicture}
\caption{Graph $G'$}
\label{fig:G}
\end{subfigure}
\hfill
\begin{subfigure}[t]{0.45\textwidth}
\centering
\begin{tikzpicture}[
scale = 0.85,
dot/.style={circle, fill, inner sep=1pt, scale = 0.8},
edge/.style={thick, purple},
vc/.style={orange, thick},
subs/.style={violet, thick},
new/.style={olive, thick},
side/.style={darkgray, thick},
every node/.style={font=\small},
]


\node[ellipse, draw, minimum height=1.2cm, minimum width=3cm, scale = 0.8] (S1) at (0,2.5) {};
\node at ($(S1)+(0,1)$) {$S_1$};

\node[dot, label=above:$s_1$] (s1) at ($(S1)+(-0.9,-0.2)$) {};
\node[dot, label=above:$s_2$] (s2) at ($(S1)+(-0.3,-0.2)$) {};
\node[dot, label=above:$\cdots$] (s3) at ($(S1)+(0.3,-0.2)$) {};
\node[dot, label=above:$s_p$] (s4) at ($(S1)+(0.9,-0.2)$) {};

\node[ellipse, draw, minimum height=7.5mm, minimum width=7.5mm, scale = 0.8] (x1) at (1,0) {$x_1$};
\node[ellipse, draw, minimum height=7.5mm, minimum width=7.5mm, scale = 0.8] (y1) at (0,0) {$y_1$};
\node[ellipse, draw, minimum height=7.5mm, minimum width=7.5mm, scale = 0.8] (z1) at (-1,0) {$z_1$};
\draw[edge] (x1) -- (y1);

\draw[side] (s1) -- (x1);
\draw[side] (s2) -- (x1);
\draw[side] (s4) -- (x1);
\draw[new] (s1) -- (y1);
\draw[new] (s2) -- (y1);

\node[ellipse, draw, minimum height=1.2cm, minimum width=3cm, scale = 0.8] (S2) at (4,2.5) {};
\node at ($(S2)+(0,1)$) {$S_2$};

\node[dot, label=above:$s_1'$] (s1p) at ($(S2)+(-0.9,-0.2)$) {};
\node[dot, label=above:$s_2'$] (s2p) at ($(S2)+(-0.3,-0.2)$) {};
\node[dot, label=above:$\cdots$] (s3p) at ($(S2)+(0.3,-0.2)$) {};
\node[dot, label=above:$s_q'$] (skp) at ($(S2)+(0.9,-0.2)$) {};

\node[ellipse, draw, minimum height=7.5mm, minimum width=7.5mm, scale = 0.8] (x2) at (3,0) {$x_2$};
\node[ellipse, draw, minimum height=7.5mm, minimum width=7.5mm, scale = 0.8] (y2) at (4,0) {$y_2$};
\node[ellipse, draw, minimum height=7.5mm, minimum width=7.5mm, scale = 0.8] (z2) at (5,0) {$z_2$};
\draw[edge] (x2) -- (y2);

\draw[edge, bend right = 40] (y1) to (y2);

\draw[new, bend left = 40] (x1) to (z1);
\draw[side] (y1) to (z1);

\draw[new, bend right = 40] (x2) to (z2);
\draw[side] (y2) to (z2);

\draw[side] (s1p) -- (x2);
\draw[side] (s2p) -- (x2);
\draw[side] (skp) -- (x2);
\draw[new] (s1p) -- (y2);
\draw[new] (s2p) -- (y2);
\draw[new] (x1) -- (x2);

\end{tikzpicture}
\caption{Graph $H'$}
\label{fig:H'}
\end{subfigure}

\caption{Illustration of the graphs $G'$ and $H'$ in the reduction for the lower bound of Theorem~\ref{thm:treewith}}
\label{fig:treewidth-lower-bound}
\end{figure}

\noindent $(\Rightarrow)$      
Suppose we are given a solution to the \textsc{Sub-Cubic PVC} instance, 
that is, a vertex cover \( X \subseteq V(G) \) such that for every 
\( i \in [t] \), it holds that \( |X \cap C_i| \le k_i \). 
We now describe a contraction sequence on \( G' \) that yields \( H' \).
For each partition \( C_i \in \mathcal{P} \), we proceed as follows:
For every vertex \( a \in X \cap C_i \), we contract the edge \( (a, x_i) \).
For every vertex \( b \in C_i \setminus X \), we contract the edge \( (b, y_i) \).

Observe that since \( X \) is a vertex cover of \( G \), 
for each pair \( i \ne j \in [t] \), 
there exists exactly one edge in \( G \) between \( C_i \) and \( C_j \), 
say \( (u,v) \in E(G) \), with \( u \in C_i \) and \( v \in C_j \). 
The contractions described above will result in \( u \) 
being contracted to \( x_i \) or \( y_i \), 
and \( v \) to \( x_j \) or \( y_j \). 
Since \( u \in X \) or \( v \in X \), 
at least one of the endpoints is contracted to an \( x \)-vertex, 
ensuring that each pair \( (x_i, x_j) \) becomes adjacent 
in the resulting graph, thereby forming a clique on the set \( \{x_1, x_2, \ldots, x_t\} \).

Furthermore, for each \( i \in [t] \), since \( |X \cap C_i| \le k_i \), 
it follows that no subset of \( C_i \) of size \( k_i + 1 \) 
is entirely mapped to \( x_i \). 
Therefore, for every vertex \( s \in S_i \), which 
corresponds to a subset of \( C_i \) of size \( k_i + 1 \), 
at least one vertex in that subset is contracted to \( y_i \). 
Consequently, the edge \( (s, y_i) \) appears in the 
resulting graph, as required by \( H' \).
The edges $(x_i, y_i)$ and \( (y_i, z_i) \) are preserved directly from the original construction.
Thus, the contraction process yields graph \( H' \), completing 
the forward direction of the reduction.

\noindent $(\Leftarrow)$
Suppose we are given a solution to the 
\textsc{Labeled Contraction} instance, i.e., a sequence of 
edge contractions that transforms \( G' \) into \( H' \). 
We construct a corresponding solution \( X \subseteq V(G) \) 
to the original \textsc{Sub-Cubic PVC} instance as follows.

Recall that the vertex set of \( H' \) is precisely 
\( V(G') \setminus V(G) \), which implies that every 
contracted edge in the solution involves a vertex from \( V(G) \),
more precisely vertices in \( C_i \) for each \( i \in [t] \). 
Furthermore, observe that for each \( i \in [t] \), 
the vertices \( z_i \) and those in \( S_i \) do not share 
edges with each other     
or with vertices in \( V(G) \) outside of 
\( C_i \) neither in $G'$ nor in $H'$.
In particular, the absence of edges of the form 
\( (z_i, s) \) for any \( s \in S_i \), and 
the absence of edges \( (z_i, x_j) \) or \( (z_i, y_j) \) 
for \( i \ne j \), implies that 
vertices in $C_i$ can only be either contracted to $x_i$ 
or $y_i$.

Now, for each vertex \( s \in S_i \), we note that 
\( s \) is adjacent to \( y_i \) in \( H' \). 
In the construction of \( G' \), the vertex \( s \) was made 
adjacent to a subset \( C_i' \subseteq C_i \) of size \( k_i + 1 \).
In order for \( s \) to remain adjacent to \( y_i \) in \( H' \), 
it is necessary that at least one vertex in \( C_i' \) 
is contracted to \( y_i \). 
Thus, no subset of \( C_i \) of size \( k_i + 1 \) can be 
entirely contracted to \( x_i \). 
This ensures that at most \( k_i \) vertices from \( C_i \) 
are contracted to \( x_i \).
Based on this observation, we define the vertex set 
\( X \subseteq V(G) \) by including, for each 
\( i \in [t] \), all vertices in \( C_i \) that are 
contracted to \( x_i \). 
From the argument above, we have 
\( |X \cap C_i| \le k_i \) for all \( i \in [t] \).

It remains to show that \( X \) is a vertex cover of \( G \). 
Observe that in the construction of \( G' \), for each pair 
\( i, j \in [t] \), \( i \ne j \), such that 
\( E(G[C_i \cup C_j]) \cap E(G) = \{(u,v)\} \) with 
\( u \in C_i \), \( v \in C_j \), 
we added edges \( (u, x_j) \) and \( (v, x_i) \) to \( G' \). 
Since \( x_i \) and \( x_j \) are 
adjacent in \( H' \), and since no edge directly connects \( x_i \) 
and \( x_j \) in \( G' \), at least one of the vertices 
\( u \) or \( v \) must be contracted to \( x_i \) or \( x_j \), 
respectively. 
This implies that either \( u \in X \) or \( v \in X \), and 
hence \( X \) intersects every edge in \( E(G) \). 
Therefore, \( X \) is a valid vertex cover of \( G \).
This completes the proof of correctness for the reduction.

To upper bound the treewidth of the graph \( G' \), define the set
\(W = \{x_i, y_i, z_i \mid i \in [t]\}\).
Observe that the graph \( G' - W \) consists of \( t \) 
vertex-disjoint subgraphs, each corresponding to a 
complete bipartite graph between \( C_i \) and \( S_i \). 
For each \( i \in [t] \), the set \( C_i \) is of size 
\( \mathcal{O}(\sqrt{n}) \). 
This implies that the treewidth of each \( G'[C_i \cup S_i] \) is 
\( \mathcal{O}(\sqrt{n}) \). 
Therefore, treewidth of \( G' \) is at most
$\tw(G') \le |W| + \max_{i \in [t]} \tw(G'[C_i \cup S_i]) = 
\mathcal{O}(\sqrt{n})$.

Now suppose, for the sake of contradiction, that there exists an algorithm 
\( \mathcal{A} \) that solves \textsc{Labeled Contraction} in time 
\( 2^{o(\tw(G')^2)} \cdot |V(G')|^{\mathcal{O}(1)}\).
We use \( \mathcal{A} \) as a subroutine to construct 
an algorithm \( \mathcal{B} \) for solving \textsc{Sub-Cubic PVC}. 
On input an instance \( (G, \mathcal{P} = \{C_1, C_2, \ldots, C_t\}, k_1, \ldots, k_t) \) of \textsc{Sub-Cubic PVC} with \( n = |V(G)| \), 
algorithm \( \mathcal{B} \) performs the following steps:
It constructs the corresponding instance \( (G', H') \) of 
\textsc{Labeled Contraction} as described in the reduction. 
This step takes time \( 2^{\mathcal{O}(\sqrt{n})} \).
Then, it invoke algorithm \( \mathcal{A} \) on the instance 
\( (G', H') \) and return the result produced by \( \mathcal{A} \).

The correctness of algorithm \( \mathcal{B} \) follows directly from 
the correctness of the reduction. 
Since \( \tw(G') = \mathcal{O}(\sqrt{n}) \), the total
running time of \(\mathcal{B} \) is $2^{o(n)}$,
which contradicts the \ETH, 
by Proposition~\ref{prop:sub-cubic-pvc-eth}. 
This establishes the claimed conditional lower bound stated in Theorem~\ref{thm:treewith}.

\section{\NP-hardness When Maximum Degree is Bounded}
\label{sec:np-hard-max-degree-bound}

In this section, we prove Theorem~\ref{thm:bounded-max-degree} by establishing the \NP-hardness of the \textsc{Labeled Contractibility} problem. 
We restate it for reader's conveience.

\NPhardBoundedMaxDegree*

We present a polynomial-time reduction from a variation of the \textsc{Positive-Not-All-3-SAT} problem to \textsc{Labeled Contractibility}, even when both input graphs \(G\) and \(H\) have constant maximum degree. The source problem is formally defined as follows and is known to be \NP-complete~\cite{DBLP:journals/tcs/DarmannD20}.

\defproblem{\textsc{Positive-Not-All-$(3, 4)$-SAT}}{A $3$-CNF Boolean formula $\psi$ with a set of variables \( \{x_1, \dots, x_n\} \) and a set of clauses \( \{C_1, \dots, C_m\} \), where every literal is positive and each variable appears in at most 4 clauses.}{Does there exist a truth assignment such that every clause is satisfied and, in each clause, not all three literals are assigned \true?}

For our reduction, we assume without loss of generality that the number of clauses \(m\) is such that \(2m+2\) is a power of 2. Let $p = 2m+2 = 2^q$ for some integer $q$. We now describe the construction of two vertex-labeled graphs \(G\) and \(H\) from the formula $\psi$.

\emph{Construction of \(H\):} The target graph $H$ is a simple path on $p$ vertices, labeled \( \{v_1, v_2, \ldots, v_p\} \).

\emph{Construction of \(G\):} We build \(G\) by connecting two symmetric graph structures.

\begin{enumerate}
    \item Path Structure: Create a path of $p$ vertices, labeled \(v_1, v_2, \ldots, v_p\).
    \item First Binary Tree: We build a complete binary tree of depth \(q\) with its root at \(v_1\). This tree has $2^q = p$ leaves.
        \begin{itemize}[noitemsep, nolistsep]
            \item The vertices on the shortest path from $v_1$ to the leaf corresponding to $v_{p/2}$ are labeled $v_2, \ldots, v_{p/2-1}$.
            \item The other leaves of the tree are labeled \( \{C_1, a_1, C_2, a_2, \dots, C_m, a_m\} \), alternating between clause vertices \(C_j\) and auxiliary vertices \(a_j\).
            \item We add edges between consecutive auxiliary vertices, i.e., \( (a_i, a_{i+1}) \) for all $i \in [m-1]$, creating a path among them.
        \end{itemize}
    \item Second Binary Tree: Construct a mirrored copy of the first binary tree, rooted at \(v_p\). The leaves of this tree are labeled \( \{C_1', a_1', C_2', a_2', \dots, C_m', a_m'\} \), corresponding to the unprimed vertices. The vertices on the shortest path from $v_p$ to $v_{p/2+1}$ are labeled $v_{p/2+2}, \ldots, v_p$.
    \item Connecting Structures: The two binary trees are connected via a central edge $(v_{p/2}, v_{p/2 + 1})$. We also add an edge \( (C_j, C_j') \) for each clause \( j \in [m] \), as well as edges $(a_1, v_{p/2})$ and $(a_1',v_{p/2+1})$.
    \item Variable Gadgets: For each variable \( x_i \in \{x_1, \dots, x_n\} \), we introduce a new vertex \( u_i \). For every clause \( C_j \) in which \( x_i \) appears, we add edges connecting \( u_i \) to the vertices \( \{C_j, a_j, C_j', a_j'\} \).
\end{enumerate}

This concludes the description of the reduction. Note that in the final graph $G$, each vertex has a maximum degree of at most 16. The internal structure of the tree and the connections are illustrated in Figure~\ref{fig:maxdegred}.

Next, we prove the correctness of the above reduction.

$(\Rightarrow)$
Let $\mathcal{W} = \{ W_1, W_2, \dots, W_{p/2}, W_{p/2+1}, \cdots, W_p \}$ denote the witness structure corresponding to $H$. Let $\mathcal{W}(v_i) = W_i$ for all $i\in [p]$, where $\mathcal{W}(u)$ is the witness set where $u\in V(H)$ is the representative vertex and $v_i\in W_i$.
We will describe a sequence of contractions such that, for all 
$j \in [m]$, the vertices $\{C_j, a_j\}$ are mapped to 
$W_{p/2}$ and the vertices $\{C_j', a_j'\}$ are mapped 
to $W_{p/2+1}$. More specifically, any vertex that is distance $i$ from $v_1$ must belong to the witness set $W_i$. 

Suppose we are given a \yes-instance $\psi$ of \textsc{Positive-Not-All-$(3, 4)$-SAT}, and construct $(G, H)$ as described above. 
Let $x_i$ be a variable assigned {\true} in the satisfying assignment. 
For each clause $C_j$ in which $x_i$ appears, 
we contract the edge $(x_i, a_j)$. 
This contraction introduces the edge $(C_j, a_j)$, 
since both $C_j$ and $a_j$ are connected to the same vertex $x_i$.
Similarly, for every clause $C_j$ in which a variable $x_k$ 
is assigned \false, we contract the edge $(x_k, a_j')$, 
introducing the edge $(C_j', a_j')$.

Note that by the definition of \textsc{Positive-Not-All-$(3,4)$-SAT}, 
every clause must contain at least one variable set to {\true} and at 
least one set to \false. 
Therefore, for every clause $C_j$, there exists some $x_i$ 
with $x_i = \true$ and $x_i \in C_j$, 
ensuring that $(C_j, a_j)$ appears in the contracted graph. 
Likewise, every $C_j'$ will be associated with some $x_k = \false$, 
ensuring that $(C_j', a_j')$ appears.

Further observe that now we have a path from $v_i\ \forall i \in [p]$ to all vertices that are at a $i$ distance from $v_1$. Contracting edges along this path will result in our desired witness structure.


($\Leftarrow$) Assume there exists 
a valid witness structure for a labeled contraction from $G$ to $H$.
Note that witness sets are connected components and a set $W_i$ is adjacent only to $W_{i-1}$ and $W_{i+1}$. We use the following 2 observations:
If a vertex $u$ is at a distance $\leq k$ from $v_1$, then it must belong to some $W_i$ for $i \in [1,k]$. Similarly any vertex at a distance $\leq (p-k)$ must belong to some $W_j$ such that $j \in [k, p]$. Therefore, we conclude that any vertex at the $k^{\text{th}}$ level (for $k \in [1, p/2]$) of the binary tree rooted at $v_1$ (respectively, at the $k^{\text{th}}$ level for $k \in [p/2+1, p]$ of the tree rooted at $v_p$) must belong to the witness set $W_k$.

Therefore, we see that to preserve the path structure of $H$, the only feasible witness sets for $\{C_j, a_j\}$ and $\{C_j', a_j'\}$ are $W_{p/2}$ and $W_{p/2+1}$ respectively. For the other vertices note that we have a path in $G$ starting from $v_i\ \forall i \in \{ 1,2,\cdots,v_{p/2-1},v_{p/2+1},\cdots,v_p \} $ to all vertices that are at exactly $i$ distance from $v_1$. A similar construction can be only provided for the clauses by the variable vertices as any other contractions introduce edges across non-adjacent witness sets. 

More specifically, recall that $H$ consists of a path on the vertex sequence 
$\{v_1, v_2, \dots, v_p\}$. In this path, the shortest path between 
$v_{p/2 - 1}$ and $v_{p/2 + 2}$ is exactly of length 3. 
In order to preserve this distance, any contraction sequence 
must not create shortcuts across this path, for instance, 
we cannot contract a vertex from the left subtree (rooted at $v_1$) 
into any vertex in the right subtree (rooted at $v_p$), or vice versa.

Now, for the edge $(C_j, a_j)$ to exist in $G$ before contraction, there must exist some $x_i$ such that the edge $(x_i, a_j)$ was contracted. This implies that $x_i$ was set to {\true} in the assignment encoded by the contraction (since this is how $(C_j, a_j)$ is formed in the forward direction). Likewise, for the edge $(C_j', a_j')$ to appear, there must exist some $x_k$ set to {\false} that was contracted to $a_j'$. 

Since the contraction sequence produces both $(C_j, a_j)$ and $(C_j', a_j')$, it must be the case that each clause $C_j$ has at least one true and at least one false literal—otherwise one of these edges could not be formed while preserving witness structure and distances.

Thus, the contraction sequence encodes a satisfying assignment of $\psi$ such that in every clause, not all literals are true. This shows that $\psi$ is a \yes-instance of \textsc{Positive-Not-All-$3$-SAT}.

This completes the proof for Theorem \ref{thm:bounded-max-degree}, that is \textsc{Labeled Contraction} is \NP-hard even when the maximum degree of both $G$ and $H$ are constant.

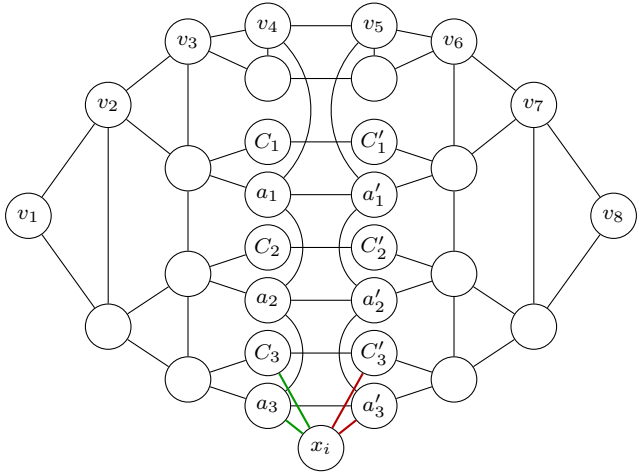
\begin{figure}[t]
\centering
\begin{tikzpicture}[
    scale=1,
    every node/.style={draw, circle, minimum size=6mm, inner sep=0pt},
    font=\scriptsize
]
\node (v1) at (0,-3.6) {$v_1$};
\node (v2) at (1.5,-1.5) {$v_2$};
\node (v3) at (3,-0.3) {$v_3$};
\node (v4) at (4.5,0) {$v_4$};
\node (v5) at (6.5,0) {$v_5$};
\node (v6) at (8,-0.3) {$v_6$};
\node (v7) at (9.5,-1.5) {$v_7$};
\node (v8) at (11,-3.6) {$v_8$};

\node (d1) at (1.5,-5.7) {};
\node (d2) at (3,-2.7) {};
\node (d3) at (3,-4.7) {};
\node (d4) at (3,-6.7) {};
\node (d1') at (9.5,-5.7) {};
\node (d2') at (8,-2.7) {};
\node (d3') at (8,-4.7) {};
\node (d4') at (8,-6.7) {};
\node (dq) at (4.5,-1) {};
\node (dq') at (6.5,-1) {};
\draw (v1) -- (v2) -- (v3) -- (v4)--(v5) -- (v6) -- (v7) -- (v8);
\draw (v3)--(dq)--(dq')--(v6);

\node (c1)  at ($(v4) + (0,-2.2)$) {$C_1$};
\node (cp1) at ($(v5) + (0,-2.2)$) {$C_1'$};
\node (a1)  at ($(v4) + (0,-3.2)$) {$a_1$};
\node (ap1) at ($(v5) + (0,-3.2)$) {$a_1'$};
\draw (d2) -- (c1) -- (cp1) -- (d2');
\draw (d2) -- (a1) -- (ap1) -- (d2');

\node (c2)  at ($(v4) + (0,-4.2)$) {$C_2$};
\node (cp2) at ($(v5) + (0,-4.2)$) {$C_2'$};
\node (a2)  at ($(v4) + (0,-5.2)$) {$a_2$};
\node (ap2) at ($(v5) + (0,-5.2)$) {$a_2'$};
\draw (d3) -- (c2) -- (cp2) -- (d3');
\draw (d3) -- (a2) -- (ap2) -- (d3');

\node (c3)  at ($(v4) + (0,-6.2)$) {$C_3$};
\node (cp3) at ($(v5) + (0,-6.2)$) {$C_3'$};
\node (a3)  at ($(v4) + (0,-7.2)$) {$a_3$};
\node (ap3) at ($(v5) + (0,-7.2)$) {$a_3'$};
\draw (d4) -- (c3) -- (cp3) -- (d4');
\draw (d4) -- (a3) -- (ap3) -- (d4');

\draw (v1) -- (d1);
\draw (v2) -- (d2);
\draw (d1) -- (d3);
\draw (d1) -- (d4);

\draw (v8) -- (d1');
\draw (v7) -- (d2');
\draw (d1') -- (d3');
\draw (d1') -- (d4');

\draw (v2) -- (d1);
\draw (v3) -- (d2) -- (d3) -- (d4);

\draw (v7) -- (d1');
\draw (v6) -- (d2') -- (d3') -- (d4');

\draw[bend left=50] (a1) to (a2);
\draw[bend left=50] (a2) to (a3);

\draw[bend right=50] (ap1) to (ap2);
\draw[bend right=50] (ap2) to (ap3);

\draw[bend left=43] (ap1) to (v5);
\draw[bend right=43] (a1) to (v4);

\draw (v4) -- (dq);
\draw (v5) -- (dq');

\node (x1) at ($(5.5,-8)$) {$x_i$};

\draw[green!60!black, thick] (x1) -- (a3);
\draw[red!70!black, thick] (x1) -- (ap3);
\draw[green!60!black, thick] (x1) -- (c3);
\draw[red!70!black, thick] (x1) -- (cp3);

\end{tikzpicture}
\caption{
Example of the construction of $G$ with $m=3$. We only show one $x_i$ for convenience that appears in $C_3$
}
\label{fig:maxdegred}
\end{figure}
   
\section{Parameterized by Solution Size Plus Degeneracy}
\label{sec:degeneracy}

We prove Theorem~\ref{thm:bounded-degeneracy-eth} and
Theorem~\ref{thm:degeneracy-algo} in the following two subsections.

\subsection{Conditional Lower Bound}

We remark that most of the previous reduction about 
\textsc{Labeled Contraction} are from the variation
of \textsc{3-SAT} problem called \textsc{Not-All-Equal-3-SAT}.
In this variation, the input is a $3$-CNF formula
and the objective is to find an assignment that sets
at least one but not all literals in every clause to 
set to \true. 
We consider the following, different version of \textsc{$3$-SAT}
as starting point of our reduction
and present a simple proof from \textsc{3-SAT}
to get desired conditional lower bound.

\defproblem{\textsc{$1$-in-$3$-SAT}}{A Boolean formula 
$\psi$ in 3-CNF}{Does there exist a satisfying 
assignment such that exactly one literal is set to true 
in each clause?}

Consider the following polynomial-time reduction from 
\textsc{3-SAT} to \textsc{1-in-3-SAT} where 
each clause \( C^i = (x \lor \neg y \lor z) \) in the original formula is replaced by three clauses:
\[
C^i_1 = (\neg x \lor a_i \lor b_i), \quad
C^i_2 = (y \lor b_i \lor c_i), \quad
C^i_3 = (\neg z \lor c_i \lor d_i),
\]
where \( a_i, b_i, c_i, d_i \) are fresh variables unique to the 
clause \( C^i \). This reduction introduces a linear blow-up in the number of variables and clauses. 
Combining this reduction with the (ETH)~\cite{DBLP:journals/jcss/ImpagliazzoP01} and 
the Sparsification Lemma~\cite{DBLP:journals/jcss/ImpagliazzoPZ01}, 
we obtain the following:

\begin{proposition}
\label{prop:1in3sat-eth}
Unless the \ETH\ fails, \textsc{$1$-in-$3$-SAT} cannot be solved in 
time $2^{o(n + m)}$, where $n$ and $m$ are the number 
of variables and clauses in the input formula, respectively.
\end{proposition}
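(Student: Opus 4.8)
The plan is to check that the displayed reduction from \textsc{3-SAT} to \textsc{$1$-in-$3$-SAT} is correct, to record that it blows the instance size up only linearly, and then to compose it with the \ETH\ and the Sparsification Lemma.

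The heart of the argument is a local claim about a single gadget. For a clause $C = (\ell_1 \lor \ell_2 \lor \ell_3)$ of the input formula, the reduction produces three \textsc{1-in-3} constraints on the literals of $C$ (some possibly negated) together with four fresh variables $a, b, c, d$ private to $C$, arranged in a chain in which consecutive constraints share exactly one fresh variable. I would first show: \emph{for every assignment to the variables of $C$, the three constraints can be satisfied simultaneously, each with exactly one true literal, by a suitable choice of values for $a,b,c,d$ if and only if $C$ is satisfied.} This is a finite case analysis. Reading the chain from left to right, the first constraint forces the value of the shared variable $b$ as a function of the first literal, the last constraint forces the value of the shared variable $c$ as a function of the third literal, and the middle constraint turns out to be consistent with the resulting pair of values precisely for those literal-combinations in which at least one literal of $C$ is true; equivalently, the gadget is unsatisfiable exactly when all three literals of $C$ are false. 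The step needing care is fixing the orientations (which literals enter negated) so that ``gadget unsatisfiable'' coincides with ``clause falsified''.

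Summing this equivalence over all clauses yields that a $3$-CNF formula $\varphi$ with $n$ variables and $m$ clauses is satisfiable if and only if the constructed formula $\psi$ admits an assignment setting exactly one literal true in each clause: in the forward direction, extend a satisfying assignment of $\varphi$ by the fresh-variable values supplied by the case analysis; in the reverse direction, restrict a $1$-in-$3$ assignment of $\psi$ to the original variables and observe that a falsified clause of $\varphi$ would force its gadget to be unsatisfiable. The formula $\psi$ has $n + 4m$ variables, $3m$ clauses, and exactly three literals per clause, and is computable from $\varphi$ in linear time. Consequently, if \textsc{$1$-in-$3$-SAT} had an algorithm running in time $2^{o(N + M)}$ on instances with $N$ variables and $M$ clauses, composing it with the reduction would solve \textsc{3-SAT} in time $2^{o(n + m)}$, contradicting the \ETH~\cite{DBLP:journals/jcss/ImpagliazzoP01} via the Sparsification Lemma~\cite{DBLP:journals/jcss/ImpagliazzoPZ01}. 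The only non-routine part of this plan is the per-clause case analysis, especially its forward half, where the values of $a, b, c, d$ must be produced explicitly.
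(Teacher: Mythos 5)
Your proposal takes essentially the same route as the paper: the identical chain gadget (three $1$-in-$3$ constraints per clause sharing the private fresh variables $a_i,b_i,c_i,d_i$), the observation that the blow-up in variables and clauses is linear, and composition with the \ETH\ via the Sparsification Lemma, with the paper simply stating the gadget and omitting the case analysis you outline. Your insistence on fixing the literal orientations carefully is well placed --- with the first and third literals of each clause entering negated and the middle one keeping its polarity, the gadget is unsatisfiable exactly when all three original literals are false, which is precisely the equivalence needed.
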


We now reduce \textsc{$1$-in-$3$-SAT} to \textsc{Labeled Contraction},
ensuring that the number of vertices in $G$ and $H$ is
linearly bounded by the number of variables and clauses in 
and instance of \textsc{$1$-in-$3$-SAT}.


We now proceed with the proof of Theorem~\ref{thm:bounded-degeneracy-eth} which we restate here.

\degeneracyeth*

\begin{proof}
Let $\psi$ be an instance of \textsc{$1$-in-$3$-SAT} with variable set 
\( \{x_1, x_2, \dots, x_n\} \) and clause set 
\( \{C_1, C_2, \dots, C_m\} \). 
We construct a corresponding instance \( (G, H) \) of \textsc{Labeled Contraction} such that $\psi$ is satisfiable if and only if \( H \) 
is a labeled contraction of \( G \).

We simultaneously define the graphs \( G \) and \( H \) as follows (See
Figure~\ref{fig:combined-gadgets} for an illustration.):
\begin{itemize}[nolistsep]
\item \emph{Global vertices:} Add two special vertices 
\( g_T \) and \( g_F \) to both \( G \) and \( H \), connected by 
the edge \( (g_T, g_F) \). 
These represent the `true' and `false' assignments, respectively, for literals.

\item \emph{Variable gadget:} For each variable 
\( x_i \), \( i \in [n] \):
\begin{itemize}[nolistsep]
    \item Add a vertex \( u_i \) to both \( G \) and \( H \), representing variable \( x_i \).
    \item Add two vertices \( v_i \) and \( v'_i \) to \( G \), corresponding to literals \( x_i \) and \( \neg x_i \), respectively.
    \item In \( G \), connect \( u_i \) to both \( v_i \) and \( v'_i \). Also connect both \( v_i \) and \( v'_i \) to \( g_T \) and \( g_F \).
    \item In \( H \), add edges \( (u_i, g_T) \), \( (u_i, g_F) \), and retain \( (g_T, g_F) \).
\end{itemize}

\item \emph{Clause gadget:} 
For each clause \( C_j = (\ell_{j1} \lor \ell_{j2} \lor \ell_{j3}) \), 
let \( \mathsf{litv}(\ell) \) denote \( v_i \) if \( \ell = x_i \), and 
\( v_i' \) if \( \ell = \neg x_i \).
\begin{itemize}[nolistsep]
    \item Add four vertices \( w_{j0}, w_{j1}, w_{j2}, w_{j3} \) to both \( G \) and \( H \).
    \item In \( H \), connect \( w_{j0} \) to \( g_T \) and \( g_F \).
    \item In \( G \), connect \( w_{j0} \) to \( \mathsf{litv}(\ell_{j1}) \), \( \mathsf{litv}(\ell_{j2}) \), and \( \mathsf{litv}(\ell_{j3}) \).
    \item For each \( k \in \{1,2,3\} \), add edge \( (w_{jk}, g_T) \) to both \( G \) and \( H \), and edge \( (w_{jk}, g_F) \) only to \( H \).
    \item Finally, form a 6-cycle in \( G \) by connecting the three \( w_{jk} \)'s with the three \( \mathsf{litv}(\ell_{jk}) \)'s in an alternating cycle such that no two \( w_{jk} \)'s nor any two \( \mathsf{litv}(\ell_{jk}) \)'s are adjacent.
\end{itemize}
\end{itemize}

This completes the reduction.
It is easy to see that it takes polynomial time, and both graphs 
\( G \) and \( H \) have degeneracy at most $3$.

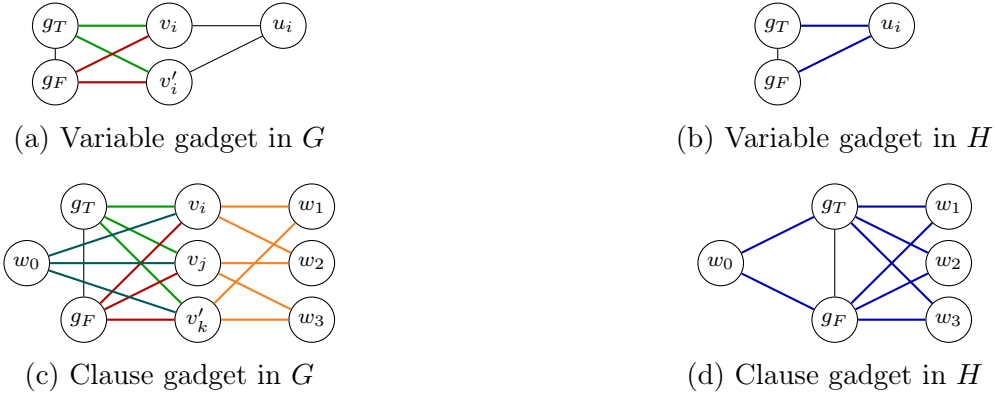
\begin{figure}[t]
\centering

\begin{subfigure}[t]{0.45\textwidth}
\centering
\begin{tikzpicture}[scale=0.75, every node/.style={draw, circle, minimum size=6mm, inner sep=0pt}, font=\scriptsize]
\node (gt) at (0,0) {$g_T$};
\node (vi) at (2,0) {$v_i$};
\node (ui) at (4,0) {$u_i$};
\node (gf) at (0,-1) {$g_F$};
\node (vpi) at (2,-1) {$v_i'$};
\draw (gt) -- (gf);
\draw (vi) -- (ui);
\draw (vpi) -- (ui);
\draw[green!60!black, thick] (gt) -- (vi);
\draw[green!60!black, thick] (gt) -- (vpi);
\draw[red!70!black, thick] (gf) -- (vi);
\draw[red!70!black, thick] (gf) -- (vpi);
\end{tikzpicture}
\caption{Variable gadget in $G$}
\label{fig:variable-gadget-G}
\end{subfigure}
\hfill
\begin{subfigure}[t]{0.45\textwidth}
\centering
\begin{tikzpicture}[scale=0.75, every node/.style={draw, circle, minimum size=6mm, inner sep=0pt}, font=\scriptsize]
\node (gt) at (0,0) {$g_T$};
\node (ui) at (2,0) {$u_i$};
\node (gf) at (0,-1) {$g_F$};
\draw (gt) -- (gf);
\draw[blue!70!black, thick] (gt) -- (ui);
\draw[blue!70!black, thick] (gf) -- (ui);
\end{tikzpicture}
\caption{Variable gadget in $H$}
\label{fig:variable-gadget-H}
\end{subfigure}

\vspace{0.75em}

\begin{subfigure}[t]{0.45\textwidth}
\centering
\begin{tikzpicture}[scale=0.8, every node/.style={draw, circle, minimum size=6mm, inner sep=0pt}, font=\scriptsize]
\node (gt) at (0,0) {$g_T$};
\node (vi) at (2,0) {$v_i$};
\node (vj) at (2,-1) {$v_j$};
\node (vk') at (2,-2) {$v_k'$};
\node (wo) at (-1,-1) {$w_0$};
\node (w1) at (4,0) {$w_1$};
\node (w2) at (4,-1) {$w_2$};
\node (w3) at (4,-2) {$w_3$};
\node (gf) at (0,-2) {$g_F$};
\draw (gt) -- (gf);
\draw[green!60!black, thick] (gt) -- (vi);
\draw[green!60!black, thick] (gt) -- (vj);
\draw[green!60!black, thick] (gt) -- (vk');
\draw[red!70!black, thick] (gf) -- (vi);
\draw[red!70!black, thick] (gf) -- (vj);
\draw[red!70!black, thick] (gf) -- (vk');
\draw[teal!70!black, thick] (wo) -- (vi);
\draw[teal!70!black, thick] (wo) -- (vj);
\draw[teal!70!black, thick] (wo) -- (vk');
\draw[yellow!40!red, thick] (w1) -- (vi);
\draw[yellow!40!red, thick] (w2) -- (vi);
\draw[yellow!40!red, thick] (w2) -- (vj);
\draw[yellow!40!red, thick] (w3) -- (vk');
\draw[yellow!40!red, thick] (w1) -- (vk');
\draw[yellow!40!red, thick] (w3) -- (vj);
\end{tikzpicture}
\caption{Clause gadget in $G$}
\label{fig:clause-gadget-G}
\end{subfigure}
\hfill
\begin{subfigure}[t]{0.45\textwidth}
\centering
\begin{tikzpicture}[scale=1, every node/.style={draw, circle, minimum size=6mm, inner sep=0pt}, font=\scriptsize]
\node (gt) at (2,0) {$g_T$};
\node (wo) at (0,-1) {$w_0$};
\node (w1) at (4,0) {$w_1$};
\node (w2) at (4,-1) {$w_2$};
\node (w3) at (4,-2) {$w_3$};
\node (gf) at (2,-2) {$g_F$};
\draw (gt) -- (gf);
\draw[blue!70!black, thick] (gt) -- (wo);
\draw[blue!70!black, thick] (gt) -- (w1);
\draw[blue!70!black, thick] (gt) -- (w2);
\draw[blue!70!black, thick] (gt) -- (w3);
\draw[blue!70!black, thick] (gf) -- (wo);
\draw[blue!70!black, thick] (gf) -- (w1);
\draw[blue!70!black, thick] (gf) -- (w2);
\draw[blue!70!black, thick] (gf) -- (w3);
\end{tikzpicture}
\caption{Clause gadget in $H$}
\label{fig:clause-gadget-H}
\end{subfigure}

\caption{Illustration of the reduction. (Top) Variable gadgets in $G$ and $H$ for a variable $(v_i)$. (Bottom) Clause gadgets in $G$ and $H$ for some clause of the form $(v_i\lor v_j\lor v_k')$.}
\label{fig:combined-gadgets}
\end{figure}

We now show the equivalence between satisfiability of \( \psi \) and labeled 
contractibility of \( G \) into \( H \).

\noindent
(\( \Rightarrow \)) Suppose there exists a satisfying assignment 
\( \alpha: \{x_1, \dots, x_n\} \to \{\true, \false\} \) such that exactly 
one literal is true in each clause. Define the contraction sequence \( S \) 
as follows:
\begin{itemize}[nolistsep]
    \item For each \( x_i \), if \( \alpha(x_i) = \true \), then
    add edges \( (v_i, g_T)\), \( (v_i',g_F) \) to $S$;
    and 
    if \( \alpha(x_i) = \false \), 
    then add edges \((v_i, g_F) \), \( (v_i',g_T)\) to $S$.
\end{itemize}
This ensures the correct neighborhood for \( u_i \) in \( H \), as it will have edges to both \( g_T \) and \( g_F \).

Now consider any clause \( C_j \). Since exactly one literal is true under \( \alpha \), exactly one \( \mathsf{litv}(\ell_{jk}) \) is contracted into \( g_T \), and the remaining two are contracted into \( g_F \). The contraction of the \( w_{jk} \)-triangle via the 6-cycle then ensures:
\begin{itemize}[nolistsep]
    \item The edge \( (w_{j0}, g_T) \) is introduced by the literal assigned true;
    \item The edge \( (w_{j0}, g_F) \) is introduced by the literals assigned false;
    \item All three edges \( (g_F, w_{j1}), (g_F, w_{j2}), (g_F, w_{j3}) \) are created.
\end{itemize}
Hence, the resulting graph after the contraction is precisely \( H \).

\noindent
(\( \Leftarrow \)) Conversely, suppose \( H \) is a labeled contraction of \( G \). We argue that this implies a satisfying assignment for \( \psi \).

The only way to create both edges \( (u_i, g_T) \) and \( (u_i, g_F) \) in \( H \) is by contracting \( v_i \) and \( v_i' \) into \( g_T \) and \( g_F \), respectively, or vice versa. This naturally defines a valid assignment:
\[
\alpha(x_i) = 
\begin{cases}
\true & \text{if } v_i \text{ is contracted to } g_T, \\
\false & \text{if } v_i' \text{ is contracted to } g_T.
\end{cases}
\]
Next, observe that to produce all three edges 
\( (g_F, w_{j1}), (g_F, w_{j2}), (g_F, w_{j3}) \), at least 
two of the literals in clause \( C_j \) must contract to 
\( g_F \), while one contracts to \( g_T \). 
Thus, in each clause, exactly one literal must be set to true 
under \( \alpha \). 
Therefore, \( \alpha \) satisfies \( \psi \) as a \textsc{1-in-3-SAT} assignment.

Each vertex in the variable and clause gadgets has degree at most 3 in \( G \) and at most 2 in \( H \), so the degeneracy of both graphs is bounded by a constant.
The size of \( G \) is linear in \( n + m \), and the construction can be 
completed in polynomial time.
Hence, if \textsc{Labeled Contraction} can be solved in time 
\( 2^{o(|V(G)| + |E(G)|)} \), then \textsc{1-in-3-SAT} can 
be solved in time \( 2^{o(n + m)} \), which contradicts \ETH\
according to Proposition~\ref{prop:1in3sat-eth}.
This completes the proof of Theorem~\ref{thm:bounded-degeneracy-eth}.
\end{proof}

\subsection{Algorithmic Result}

We first present proof of Theorem~\ref{thm:degeneracy-algo} establishing
an alternate \FPT\ algorithm and later mention its consequences.
We restate the theorem.

\degeneracy*

\begin{proof}
Let \( (G, H) \) be an instance of the \textsc{Labeled Contraction} 
problem. 
With sanity check, we can assume that \( V(H) \subseteq V(G) \).

The goal is to decide whether there exists a valid witness structure 
\(\mathcal{W} = \{W_h \subseteq V(G) \mid h \in V(H)\}\) such that contracting each 
\( W_h \) into the vertex \( h \) yields the graph \( H \).

Let \( k = |V(G) \setminus V(H)| \) denote the number of vertices in 
\( G \) that are not in \( H \), and hence must be contracted. 
Each such vertex \( x \in V(G) \setminus V(H) \) must be assigned 
to exactly one vertex \( h \in V(H) \), corresponding to the 
witness set \( W_h \) into which \( x \) will be contracted.

Define a mapping \( \phi: V(G) \to V(H) \) as follows:
$(i)$ For each \( v \in V(H) \), set \( \phi(v) = v \);
$(ii)$ For each \( x \in V(G) \setminus V(H) \), \( \phi(x) \) denotes the vertex in \( V(H) \) to which \( x \) is assigned (i.e., the representative vertex of the witness set containing \( x \)).
A mapping \( \phi \) defines a candidate witness structure, and we wish to determine whether it corresponds to a valid contraction of \( G \) into \( H \).

To bound the number of candidate functions \( \phi \), we exploit structural 
properties of the graph \( H \). Let \( \delta(H) \) denote the degeneracy of \( H \). 
By definition, \( H \) admits an ordering \( v_1, v_2, \ldots, v_n \) 
such that each vertex has at most \( \delta(H) \) neighbors among its predecessors.
This implies that \( H \) is \( (\delta(H)+1) \)-colorable, and 
such a coloring can be computed in polynomial time.

Fix a proper coloring \( c: V(H) \to [\delta(H)+1] \). 
For each vertex \( x \in V(G) \setminus V(H) \), 
let \( N_H(x) := \{v \in V(H) \mid \{x,v\} \in E(G)\} \). 
If two vertices \( y_1, y_2 \in N_H(x) \) belong to the same color class 
(i.e., \( c(y_1) = c(y_2) \)) and \( \{y_1, y_2\} \notin E(H) \), 
then assigning \( x \) to either \( y_1 \) or \( y_2 \) 
would result in the edge \( \{y_1, y_2\} \) 
being introduced during the contraction process—contradicting 
the assumption that \( H \) is the target graph.
Therefore, for each \( x \in V(G) \setminus V(H) \), there are at most 
\( \delta(H) + 1 \) valid choices for \( \phi(x) \).
These choices corresponds at most one vertex in each color class.
Thus, the total number of candidate assignments \( \phi \) is bounded by 
\( (\delta(H)+1)^k \). 
For each such assignment, we can verify in polynomial time whether 
the induced contraction yields \( H \), by simulating 
the contraction and checking edge and label consistency.
This yields an algorithm with running time 
\( (\delta(H)+1)^k \cdot |V(G)|^{\mathcal{O}(1)} \).
This completes the proof for Theorem~\ref{thm:degeneracy-algo}
\end{proof}

The heuristic proper coloring in the proof can be replaced by 
by optimal proper coloring to obtain the following corollary.

\begin{corollary}
The \textsc{Labeled Contraction} problem admits
an algorithm with running time \( 2^{|V(H)|}\cdot |V(H)|^{\mathcal{O}(1)} +  \chi(H)^{k} \cdot |V(G)|^{\mathcal{O}(1)}\),
where \( \chi(H) \) is the chromatic number of \( H \).
\end{corollary}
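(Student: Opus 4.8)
The plan is to reuse, essentially verbatim, the algorithm developed in the proof of Theorem~\ref{thm:degeneracy-algo}, but to feed it an \emph{optimal} proper coloring of \( H \) instead of the greedy \( (\delta(H)+1) \)-coloring. The first step is to isolate the precise property of the coloring that the earlier argument relied upon. Inspecting that proof, the only fact used about the coloring \( c \colon V(H) \to [r] \) is that it is \emph{proper}: given \( x \in V(G)\setminus V(H) \), if two vertices \( y_1, y_2 \in N_H(x) \) share a color then they are non-adjacent in \( H \), and assigning \( \phi(x) \) to either would create the forbidden edge \( \{y_1,y_2\} \) after contraction; hence each color class contributes at most one admissible value of \( \phi(x) \), giving at most \( r \) choices per vertex and at most \( r^k \) candidate maps \( \phi \) overall. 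This bound is insensitive to how many colors \( c \) uses, so I would restate the correctness/running-time lemma underlying Theorem~\ref{thm:degeneracy-algo} in the form: ``for every proper coloring of \( H \) with \( r \) colors, the enumerate-and-verify procedure correctly decides the instance in time \( r^k \cdot |V(G)|^{\mathcal{O}(1)} \)'', where the polynomial factor absorbs the per-candidate check that simulates the contraction and verifies edge and label consistency.

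The second step is to obtain a proper coloring of \( H \) using \( r = \chi(H) \) colors. For this I would invoke the classical inclusion--exclusion algorithm for the chromatic number (Björklund--Husfeldt--Koivisto), which computes \( \chi(H) \) together with a witnessing optimal coloring in time \( 2^{|V(H)|}\cdot |V(H)|^{\mathcal{O}(1)} \) (and polynomial space); any deterministic exact-coloring routine of this flavour suffices. Feeding this coloring into the procedure from the first step, the enumeration phase runs in \( \chi(H)^k \cdot |V(G)|^{\mathcal{O}(1)} \). Since the coloring is computed once, up front, and then reused, the two costs add rather than multiply, yielding the stated total running time \( 2^{|V(H)|}\cdot |V(H)|^{\mathcal{O}(1)} + \chi(H)^k \cdot |V(G)|^{\mathcal{O}(1)} \).

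I do not anticipate any real obstacle: the corollary is a direct substitution. The only point worth stating carefully is that the coloring returned by the exact algorithm bears no relation to a degeneracy ordering, so the correctness of the enumeration must be phrased (as above) purely in terms of properness of the coloring, not of the specific greedy construction used in Theorem~\ref{thm:degeneracy-algo}; once phrased that way, instantiating with an optimal coloring is immediate. A minor bookkeeping remark is that the cited coloring algorithm is deterministic, so the resulting algorithm for \textsc{Labeled Contraction} is deterministic as well.
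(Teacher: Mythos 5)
Your proposal matches the paper's own proof: it also fixes an optimal proper coloring of \(H\) (computable in time \(2^{\mathcal{O}(|V(H)|)}\)), observes that the enumeration argument of Theorem~\ref{thm:degeneracy-algo} only uses properness of the coloring so each contracted vertex has at most \(\chi(H)\) admissible targets, and adds the two costs. Your extra remarks (isolating properness as the only property needed, citing an explicit exact-coloring algorithm) are just slightly more detailed bookkeeping of the same argument.
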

\begin{proof}
Let \( \chi(H) \) denote the chromatic number of \( H \). 
Clearly, \( \chi(H) \le \delta(H) + 1 \). 
Fix an optimal coloring of \( H \) with \( \chi(H) \) colors. 
As before, the coloring constrains the valid targets for each vertex 
\( x \in V(G) \setminus V(H) \) to at most \( \chi(H) \) choices, 
since assigning \( x \) to a vertex \( h \in V(H) \) may introduce 
forbidden adjacency depending on its neighbors.
Thus, the number of candidate assignments is bounded by 
\( \chi(H)^k\), and for each such assignment, 
validity can be checked in polynomial time. 
Recall that an optimal coloring of \( H \) can be found in time 
\( 2^{\mathcal{O}(|V(H)|)} \).
This yields an algorithm with overall running time as mentioned. 
\end{proof}

\cite[Lemma 6]{DBLP:journals/corr/abs-2502-16096} states that 
if $H$ can be obtained from a graph $G$ by at most $k$ edge contractions, then 
$\delta(H) \le \delta(G) + k$.
Motivated by this result, we present an alternative degree bound that 
depends on the size of the graph. 

\begin{lemma}
If $H$ can be obtained from $G$ by $k$ edge contractions, then
$$\delta(H) \le \delta(G) \cdot \frac{2 |V(G)|}{|V(G)| - k}.$$
\end{lemma}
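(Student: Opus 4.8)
The plan is to bound $\delta(H)$ by transferring an edge-density estimate from $G$ to every subgraph of $H$. Write $n = |V(G)|$; since each edge contraction decreases the vertex count by one, $|V(H)| = n-k$. Fix a witness structure $\mathcal{W} = \{W_h : h \in V(H)\}$ for the contraction of $G$ onto $H$ (which exists by the characterisation recalled in the preliminaries). The strategy is: for an arbitrary nonempty $S \subseteq V(H)$, relate the number of edges of $H[S]$ to the number of edges of $G$ induced on the union of the corresponding witness sets, and then exploit that this $G$-subgraph has degeneracy at most $\delta(G)$.

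Concretely, let $S \subseteq V(H)$ be nonempty and set $U_S := \bigcup_{h \in S} W_h \subseteq V(G)$. Every edge $(h,h') \in E(H[S])$ certifies at least one edge of $G$ between $W_h$ and $W_{h'}$; choosing one such $G$-edge per $H$-edge gives an injection $E(H[S]) \hookrightarrow E(G[U_S])$, because a single edge of $G$ joins at most one unordered pair of witness sets. Hence $|E(H[S])| \le |E(G[U_S])|$, and since $G[U_S]$ is a subgraph of $G$ its degeneracy is at most $\delta(G)$, so $|E(G[U_S])| \le \delta(G)\,|U_S|$. As the minimum degree of a graph is at most its average degree,
\[
\min_{v \in S} \deg_{H[S]}(v) \;\le\; \frac{2\,|E(H[S])|}{|S|} \;\le\; \frac{2\,\delta(G)\,|U_S|}{|S|}.
\]
Because $\delta(H)$ is by definition the maximum of $\min_{v \in S}\deg_{H[S]}(v)$ over nonempty $S$, this yields $\delta(H) \le 2\,\delta(G)\cdot \max_{S} |U_S|/|S|$. (One could additionally subtract the $\sum_{h\in S}(|W_h|-1)$ internal spanning-tree edges of the $W_h$'s to sharpen the constant, but the stated bound does not require it.)

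It then remains to show $|U_S|/|S| \le |V(G)|/(|V(G)|-k)$ for the $S$ attaining the maximum. For $S = V(H)$ this is an equality, since $U_S = V(G)$ and $|S| = n-k$, and it already gives $\delta(H) \le 2\,\delta(G)\cdot n/(n-k)$. For a general $S$ I would use that the witness sets partition $V(G)$: the $(n-k)-|S|$ witness sets indexed outside $S$ are nonempty, whence $|U_S| \le n - \big((n-k)-|S|\big) = |S|+k$ (and also $|U_S| \le n$). The hard part is exactly this last ratio estimate: the bound $|U_S| \le |S|+k$ alone only gives $|U_S|/|S| \le 1 + k/|S|$, which is helpful only when $|S|$ is large, so one must rule out that the density-maximising subset consists of disproportionately large witness sets — i.e. combine the density estimate with the trivial bound $\min_{v\in S}\deg_{H[S]}(v) \le |S|-1$ valid for small $S$ and balance the two. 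Everything else — the injection, the degeneracy bound for $G[U_S]$, and the bookkeeping $|V(H)| = n-k$ — is routine.
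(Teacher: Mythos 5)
The step you yourself flag as ``the hard part'' is a genuine gap, and it is not routine bookkeeping: the ratio estimate you need is unprovable, and in fact the target inequality fails on concrete instances. Take $G$ to be a clique $K_m$ with every edge subdivided once, together with a long path on $N \gg m^2$ further vertices (attach it to a branch vertex by an edge if you want $G$ connected), and contract each subdivision vertex into one of its two branch neighbours, so $k=\binom{m}{2}$ and $\delta(G)=2$. Then $H$ contains $K_m$ on the branch vertices, so $\delta(H)=m-1$, while $\delta(G)\cdot 2|V(G)|/(|V(G)|-k) \to 4$ as $N$ grows. In your notation, for $S$ the set of branch vertices one has $|U_S|/|S| = 1+(m-1)/2$, which is nowhere near $|V(G)|/(|V(G)|-k)\approx 1$, and on this same $S$ the trivial bound $\min_{v\in S}\deg_{H[S]}(v)\le |S|-1=m-1$ is exactly $\delta(H)$, so the balancing of the two estimates that you propose cannot recover the claimed bound. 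The density-maximising subset really can consist of disproportionately large witness sets sitting inside a much bigger sparse graph; that is not an artefact of your method. Everything before that point in your argument (the injection $E(H[S])\hookrightarrow E(G[U_S])$, the bound $|E(G[U_S])|\le\delta(G)|U_S|$, and $|V(H)|=|V(G)|-k$) is correct.

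It is worth seeing why you got stuck where the paper does not. The paper argues globally in one line, chaining $\tfrac12\,\delta(H)\,|V(H)| \le |E(H)| \le |E(G)| \le \delta(G)\,|V(G)|$ and substituting $|V(H)|=|V(G)|-k$; it never looks at subsets. The price is the first link, the assertion that $\delta(H)$ is at most the average degree of $H$. Your formulation $\delta(H)=\max_{S}\min_{v\in S}\deg_{H[S]}(v)$ is the definitionally faithful reading of degeneracy, and it is precisely what refuses to grant that assertion: degeneracy is governed by the densest subgraph, not by the global edge count, and contractions can concentrate all of $G$'s edges onto a few vertices of $H$ — the example above violates the paper's first inequality as well as the lemma's conclusion. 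So your instinct that the remaining ratio estimate is the crux was right, but it cannot be filled as stated; a correct statement in this direction needs either an extra hypothesis preventing a small dense contracted core inside a large sparse graph (e.g.\ a per-subset density condition of exactly the $|U_S|/|S|$ type you isolated), or one should fall back on the known bound $\delta(H)\le\delta(G)+k$.
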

\begin{proof}
    We use the following standard inequalities.
    For graph $G$ with degeneracy $\delta(G)$, it holds that 
    $|E(G)| \le \delta(G) \cdot |V(G)|$. 
    This is an upper bound on the number of edges. 
    Further for any graph $H$, 
    $\frac{1}{2} \cdot \delta(H) \cdot |V(H)| \le |E(H)|$. 
    This follows from the fact that $\delta(H)$ is a lower 
    bound on the average degree.
    Since, edge contraction does not increase the number 
    of edges, we have $|E(H)| \le |E(G)|$. 
    Furthermore, 
    the number of vertices in $H$ is $|V(H)| = |V(G)| - k$.
    Combining the bounds, we have:
    $
    \frac{1}{2} \cdot \delta(H) \cdot (|V(G)| - k) \le |E(H)| \le \delta(G) \cdot |V(G)|.
    $
    Solving for $\delta(H)$ gives $\delta(H) \le (\delta(G) \cdot 2 |V(G)|)/({|V(G)| - k})$.
\end{proof}

In particular, the above lemma implies that 
if $|V(G)| \ge (1 + \varepsilon) \cdot k$ for some $\varepsilon > 0$, 
then $\delta(H) \le c_{\varepsilon} \cdot \delta(G)$ for some constant 
$c_{\varepsilon}$.

\section{Brute-force Algorithm and its Optimality}
\label{sec:brute-force}

We now prove Theorem~\ref{thm:brute-force-optimal}. 
We restate it for reader's convenience.

\Bruteforce*

\subsection{Brute-force Algorithm}

We prove the following lemma.

\begin{lemma}
\label{thm:brute-force-algo-max-common-label-contraction}
The \textsc{Maximum Common Labeled Contraction} problem admits an algorithm with running time $|V(H)|^{\calO(|V(G)|)}$.
\end{lemma}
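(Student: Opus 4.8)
The plan is to enumerate all candidate common labeled contractions $M$ of $G$ and $H$, each equipped with a witness structure on the $G$-side and on the $H$-side, and to retain the largest $M$ for which both witness structures are valid. This also settles the decision version: if $G/S_1 = M = H/S_2$ then necessarily $|S_1| = |V(G)| - |V(M)|$ and $|S_2| = |V(H)| - |V(M)|$, so the instance is a \yes-instance exactly when some valid $M$ satisfies $|V(G)| + |V(H)| - 2|V(M)| \le k$. The structural fact driving everything is that $V(M) \subseteq V(G) \cap V(H)$ for every common contraction, hence $|V(M)| \le \min\{|V(G)|,|V(H)|\}$. We may assume $V(G) \cap V(H) \ne \emptyset$ (otherwise no common contraction exists) and, relabeling the two inputs if needed, that $|V(H)| \le |V(G)|$.

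First I would enumerate witness structures of $G$. A witness structure of $G$ into some graph is encoded by a function $\phi \colon V(G) \to V(G) \cap V(H)$ mapping each vertex to the representative of its witness set (the representatives being the elements of $\phi(V(G))$, each a fixed point of $\phi$); there are at most $|V(G)\cap V(H)|^{|V(G)|} \le |V(H)|^{|V(G)|}$ such functions. For each, a polynomial-time check verifies that every fiber $\phi^{-1}(r)$ is connected in $G$; if so, $\phi$ determines the labeled graph $M$ with $V(M) = \phi(V(G))$ and $uv \in E(M)$ iff the sets $\phi^{-1}(u), \phi^{-1}(v)$ are adjacent in $G$. In this way every labeled contraction $M$ of $G$ with $V(M) \subseteq V(G) \cap V(H)$ is generated.

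Next, for each such $M$ (which has $V(M) \subseteq V(H)$ and $|V(M)| \le |V(G)|$), I would decide whether $M$ is a labeled contraction of $H$ by a second brute force over witness structures of $H$ into $M$: assign each vertex of $V(H) \setminus V(M)$ to one of the $|V(M)|$ representatives, which gives at most $|V(M)|^{|V(H)| - |V(M)|} \le |V(H)|^{|V(H)|} \le |V(H)|^{|V(G)|}$ possibilities, and for each verify in polynomial time that the fibers are connected in $H$ and that the induced adjacency pattern equals $E(M)$. Keeping the largest $M$ that passes both checks — equivalently, the one minimizing $|V(G)|+|V(H)|-2|V(M)|$ — and comparing that quantity with $k$ answers the question. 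Multiplying the two enumeration bounds with the polynomial-time verifications gives total running time $|V(H)|^{|V(G)|}\cdot|V(H)|^{|V(G)|}\cdot|V(H)|^{\mathcal{O}(1)} = |V(H)|^{\mathcal{O}(|V(G)|)}$.

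The delicate point is the second enumeration together with the accompanying size hypothesis: the bound $|V(M)|^{|V(H)|} \le |V(H)|^{\mathcal{O}(|V(G)|)}$ relies on $|V(M)| \le |V(G)|$ combined with $|V(H)| \le |V(G)|$. If $H$ were much larger than $G$, this brute force over assignments of $V(H)$ would be exponential in $|V(H)|$ while the target running time is only polynomial in $|V(H)|$ for bounded $|V(G)|$; covering that regime would need a genuine algorithm for deciding labeled contraction onto a fixed target $M$ in time $|V(H)|^{\mathcal{O}(|V(M)|)}$ rather than the naive $|V(M)|^{\mathcal{O}(|V(H)|)}$, which I expect to be the real obstacle for an arbitrary-size statement. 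For the application to \textsc{Labeled Contractibility} we always have $V(H)\subseteq V(G)$, so $|V(H)| \le |V(G)|$ is automatic and the argument above is complete.
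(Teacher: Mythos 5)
Your proposal is correct in the same sense and to the same extent as the paper's own proof, and it follows the same brute-force philosophy: enumerate candidate common contractions via witness structures, verify connectivity of the parts and the induced adjacency pattern in polynomial time per candidate, and compare $|V(G)|+|V(H)|-2|V(M)|$ with $k$. The only structural difference is the order of enumeration. The paper first enumerates all labeled contractions $H'$ of $H$ (at most $|V(H)|^{\mathcal{O}(|V(H)|)}$ of them), then checks for each whether $H'$ is a contraction of $H$ and, via its \textsc{Labeled Contractibility} subroutine, of $G$; you instead enumerate contractions $M$ of $G$ directly through representative maps $\phi\colon V(G)\to V(G)\cap V(H)$ (which certifies $M\preceq G$ for free and is bounded by $|V(H)|^{|V(G)|}$ without any size assumption), and then brute-force the $H$-side witness structure for each $M$. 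The ``delicate point'' you flag is real but is not a gap relative to the paper: the paper's own analysis has the same blind spot, since its enumeration of contractions of $H$ (and its check that $H'$ is a contraction of $H$) already costs $|V(H)|^{\Theta(|V(H)|)}$, which it silently subsumes into $|V(H)|^{\mathcal{O}(|V(G)|)}$ — legitimate only when $|V(H)|=\mathcal{O}(|V(G)|)$, as in the intended application where $V(H)\subseteq V(G)$. One small caution: your ``relabel the inputs'' reduction does not literally rescue the asymmetric bound, because swapping $G$ and $H$ turns the running time into $|V(G)|^{\mathcal{O}(|V(H)|)}$, which can vastly exceed $|V(H)|^{\mathcal{O}(|V(G)|)}$ when $|V(G)|\ll|V(H)|$; so you should state the size assumption as a hypothesis (as you effectively do in your last paragraph) rather than as a without-loss-of-generality step. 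With that phrasing fixed, your argument matches the paper's in both strength and scope.
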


\begin{proof}
We begin by describing an algorithm for the 
\textsc{Labeled Contractibility} problem with running time 
$|V(H)|^{\calO(|V(G)|)}$, which we later use as a subroutine to solve 
\textsc{Maximum Common Labeled Contraction}.

As established in Observation~2 of~\cite{DBLP:journals/corr/abs-2502-16096}, 
a graph $H$ is a labeled contraction of a graph $G$ if and only 
if there exists a witness structure $W = \{W_1, \ldots, W_{|V(H)|}\}$ of $G$ into $H$.
Recall that a witness structure is a partition of $V(G)$ into $|V(H)|$ non-empty sets, 
where each $W_i$ corresponds to a distinct vertex of $H$. 

The algorithm exhaustively enumerates all such candidate partitions and 
checks their validity. 
Formally, it distributes all the vertices in $V(G) \setminus V(H)$ 
into the $|V(H)|$ parts. 
Since each of the $|V(G)| - |V(H)|$ vertices can be assigned to any of the 
$|V(H)|$ parts, the number of such candidate partitions is at most 
$(|V(H)|)^{|V(G)| - |V(H)|}$.

For each candidate witness structure $W$, the algorithm performs the following checks:
$(i)$ For each $W_j \in W$, verify that the induced subgraph $G[W_j]$ is connected.
$(ii)$ For all distinct $u, v \in V(H)$, verify that $uv \in E(H)$ if and only if there exists at least one edge in $G$ between some vertex in $W(u)$ and some vertex in $W(v)$. 
Both these steps take time which is polynomial in $|V(G)|$.
If a partition satisfies both conditions, then $W$ is a valid witness structure certifying 
that $H$ is a labeled contraction of $G$. 
If no such structure is found, the algorithm correctly 
concludes that $H$ is not a labeled contraction of $G$.
Thus, the total running time of this procedure is bounded by
$(|V(H)|)^{|V(G)| - |V(H)|} \cdot \calO(|V(G)|^2) = |V(H)|^{\calO(|V(G)|)}$.

We now describe the algorithm for \textsc{Maximum Common Labeled Contraction}. 
Let $G$ and $H$ be the two input graphs. 
If a graph $H'$ is a common labeled contraction of both $G$ and $H$, 
then $H'$ must be a labeled contraction of both. 
The algorithm enumerates all possible candidate graphs $H'$ that can be obtained 
as a labeled contraction of $H$, and for each candidate, 
checks whether it is also a labeled contraction of $G$.

To bound the number of such candidate graphs $H'$, we count the number of 
valid witness structures from $H$ to $H'$. 
For a fixed size $n' = |V(H')|$, 
there are $\binom{|V(H)|}{n'}$ ways to choose the 
representative vertices of $H'$, and the remaining $|V(H)| - n'$ 
vertices can be assigned to these parts in $(n')^{|V(H)| - n'}$ ways. 
Summing over all possible values of $n'$, the total number of such graphs is bounded by
$\sum_{n'=1}^{|V(H)|} \binom{|V(H)|}{n'} \cdot (n')^{|V(H)| - n} \le |V(H)|^{|V(H)|} \cdot 2^{|V(H)|} = |V(H)|^{\calO(|V(H)|)}$.

For each such candidate $H'$, we check:
$(i)$ whether $H'$ is a labeled contraction of $H$, which takes time $|V(H)|^{\calO(|V(H)|)}$,
and
$(ii)$ whether $H'$ is a labeled contraction of $G$, using the subroutine above in time $|V(H')|^{\calO(|V(G)|)} \le |V(H)|^{\calO(|V(G)|)}$.
If both checks succeed and the total number of contractions used in both $G$ and $H$ 
is at most $k$, the algorithm concludes that it is a \yes\ instance.  
Hence, the overall running time of the algorithm is $|V(H)|^{\mathcal{O}(|V(G)|)}$.
This concludes the proof of the lemma.
\end{proof}

\subsection{Conditional Lower Bound}

In this subsection, 
we present the conditional lower bound, demonstrating that the algorithm is 
essentially optimal unless the Exponential Time Hypothesis (\ETH) fails.

Our lower bound is established through a polynomial-time reduction from the \textsc{Cross Matching} problem, introduced by Fomin et al.~\cite{DBLP:journals/toct/FominLMSZ21}. This problem has been a hardness source for several other contraction-based graph problems.

\defproblem{\textsc{Cross Matching}}{A graph $G'$ together with a partition $(A, B)$ of $V(G')$ such that $|A| = |B|$.}{Does there exist a perfect matching $M \subseteq E(G')$ such that each edge in $M$ connects a vertex from $A$ to one from $B$, and the graph $G'/M$ is a clique?}

\begin{proposition}[Lemma 4.1 in \cite{DBLP:journals/toct/FominLMSZ21}]
\label{prop:cross-matching-lower-bound}
Unless the \ETH\ fails, \textsc{Cross Matching} cannot be solved in time $n^{o(n)}$, where $n = |A| = |B|$.
\end{proposition}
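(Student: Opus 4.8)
The target bound has the form $n^{o(n)} = 2^{o(n\log n)}$, a \emph{slightly superexponential} lower bound, so the plan is to give a reduction with \emph{logarithmic compression}: starting from an instance of an \ETH-hard problem of size $N$, I would produce a \textsc{Cross Matching} instance in which $n = |A| = |B|$ satisfies $n\log n = \Theta(N)$. Then any $n^{o(n)} = 2^{o(n\log n)}$ algorithm would solve the source in $2^{o(N)}$ time, contradicting \ETH. The natural starting point is \textsc{3-SAT} with $N$ variables and, after the Sparsification Lemma, $\Theta(N)$ clauses; grouping the variables into $n$ blocks of $\approx \log n$ variables each makes $2^{\log n}=n$ the number of assignments per block and yields $N = \Theta(n\log n)$, i.e.\ $n = \Theta(N/\log N)$. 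Crucially, since a solution to \textsc{Cross Matching} is a \emph{bijection} between two equal-size sets, the selection of one assignment per block is encoded by a permutation on $[n]$; accordingly I would route the reduction through a permutation-structured intermediate (a permutation CSP, equivalently a ``rainbow'' clique selecting one vertex per row with distinct columns), whose $2^{o(n\log n)}$ hardness is precisely what the grouping argument delivers.

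With the source fixed, the reduction maps its permutation variable to the cross matching and its pairwise constraints to the ``contract-to-clique'' requirement. I would let $A$ index the blocks/rows and $B$ index the possible assignments/columns, so that a perfect cross matching $M$ is a bijection $\sigma$ and each contracted super-vertex is a matched pair $\{a_i, b_{\sigma(i)}\}$, read as ``block $i$ receives assignment $\sigma(i)$''. The key structural fact I would exploit is that, after contracting a matching, two super-vertices $\{a,b\}$ and $\{a',b'\}$ are adjacent in $G'/M$ \emph{iff at least one} of the four possible edges $aa'$, $ab'$, $ba'$, $bb'$ is present in $G'$. Hence I would install edges inside $A$, inside $B$, and across $A$--$B$ so that the disjunction over these four edges holds exactly when the two chosen block-assignments are \emph{jointly compatible} (consistent on shared variables and jointly clause-satisfying). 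The contracted graph is then complete precisely when every pair of blocks is pairwise compatible, i.e.\ when the induced global assignment satisfies $\psi$.

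The hard part will be engineering this adjacency encoding, because the expressive power of ``adjacency after contracting a pair'' is both limited and \emph{coupled}: it is only a disjunction of four static edges, and the across edges $a_i b_c$ do double duty, since they simultaneously determine which bijections are feasible matchings at all (the matching may use only present edges). I therefore expect the main technical work to be (i) designing the block gadgets so that feasible perfect cross matchings biject with valid per-block selections and enforce the intended within-block consistency automatically, and (ii) choosing the inter-block edges so that the four-way disjunction captures exactly the (necessarily binary-decomposable) compatibility relation of the permutation CSP, while respecting that the solution must be a genuine permutation on $[n]$. This is precisely why the reduction should be routed through a carefully structured permutation problem rather than generic \textsc{3-SAT}. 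Once these gadgets are in place, correctness in both directions is routine, the construction is polynomial, and $n = \Theta(N/\log N)$ yields the claimed transfer: an $n^{o(n)}$-time algorithm for \textsc{Cross Matching} would refute \ETH.
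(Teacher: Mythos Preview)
The paper does not prove this proposition at all: it is imported verbatim as Lemma~4.1 of Fomin, Lokshtanov, Marx, Saurabh, and Zehavi~\cite{DBLP:journals/toct/FominLMSZ21} and used as a black box to derive the lower bound for \textsc{Labeled Contractibility} (Lemma~\ref{lemma:lower-bound-label-contraction}). So there is no ``paper's own proof'' to compare your proposal against.

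That said, your high-level plan is the right shape for how the cited result is actually obtained: a logarithmic-compression reduction from \textsc{3-SAT} (via the Sparsification Lemma) through a permutation-structured intermediate, so that $n\log n = \Theta(N)$ and an $n^{o(n)}$ algorithm yields $2^{o(N)}$ for the source. Your identification of the key structural fact---that two contracted super-vertices are adjacent iff at least one of four base edges is present---and of the main difficulty (the across edges simultaneously constrain which matchings are feasible and which post-contraction adjacencies hold) are both on point. What remains a genuine sketch rather than a proof is step~(ii): you have not specified the intermediate permutation problem or the edge-laying rule, and it is exactly there that the work lies. In the Fomin et al.\ argument this is handled by first reducing to a carefully chosen structured problem (a variant of $k\times k$-\textsc{Clique}/\textsc{Permutation Clique}) whose pairwise constraints are already in a form that the four-way disjunction can express; getting that encoding right is nontrivial and is not something your proposal yet supplies. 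If your goal were to reprove the proposition, you would need to either cite or reconstruct that intermediate step; for the purposes of the present paper, however, citing the result is all that is required.
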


We now present a polynomial-time reduction from \textsc{Cross Matching} to \textsc{Labeled Contractibility}.

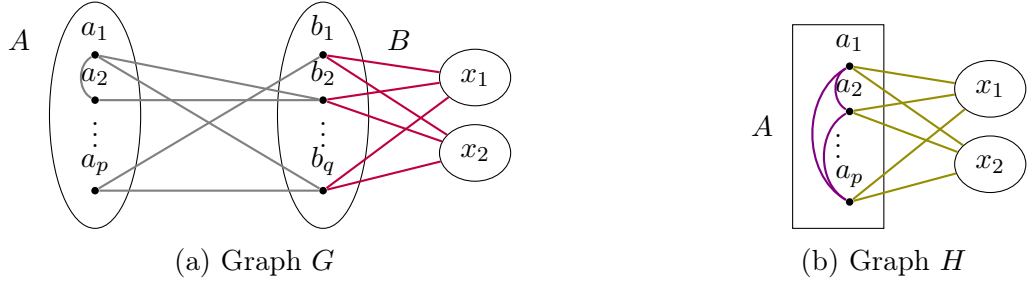
\begin{figure}[t]
\centering

\begin{subfigure}[t]{0.48\textwidth}
\centering
\begin{tikzpicture}[scale=1,
    dot/.style={circle, fill, inner sep=1pt},
    edge/.style={thick, purple},
    vc/.style={orange, thick},
    subs/.style={violet, thick},
    new/.style={olive, thick},
    side/.style={gray, thick},
    every node/.style={font=\small}
]

\node[ellipse, draw, minimum height=3cm, minimum width=1.2cm] (A) at (0,2.5) {};
\node at ($(A)+(-1,1)$) {$A$};

\node[dot, label=above:$a_p$] (a4) at ($(A)+(0,-1)$) {};
\node at ($(A)+(0,-0.15)$) {$\vdots$};
\node[dot, label=above:$a_2$] (a2) at ($(A)+(0,0.2)$) {};
\node[dot, label=above:$a_1$] (a1) at ($(A)+(0,0.8)$) {};

\node[ellipse, draw, minimum height=7.5mm, minimum width=7.5mm] (x1) at (5,3) {$x_1$};
\node[ellipse, draw, minimum height=7.5mm, minimum width=7.5mm] (x2) at (5,2) {$x_2$};

\node[ellipse, draw, minimum height=3cm, minimum width=1.2cm] (B) at (3,2.5) {};
\node at ($(B)+(1,1)$) {$B$};

\node[dot, label=above:$b_q$] (b4) at ($(B)+(0,-1)$) {};
\node at ($(B)+(0,-0.15)$) {$\vdots$};
\node[dot, label=above:$b_2$] (b2) at ($(B)+(0,0.2)$) {};
\node[dot, label=above:$b_1$] (b1) at ($(B)+(0,0.8)$) {};

\draw[edge] (x1) -- (b1);
\draw[edge] (x1) -- (b2);
\draw[edge] (x1) -- (b4);
\draw[edge] (x2) -- (b1);
\draw[edge] (x2) -- (b2);
\draw[edge] (x2) -- (b4);

\draw[side] (a1) -- (b4);
\draw[side] (a2) to[bend left = 60] (a1);
\draw[side] (a4) -- (b4);
\draw[side] (a2) -- (b2);
\draw[side] (a1) -- (b2);
\draw[side] (a4) -- (b1);

\end{tikzpicture}
\caption{Graph $G$}
\end{subfigure}
\hfill
\begin{subfigure}[t]{0.48\textwidth}
\centering
\begin{tikzpicture}[scale=1,
    dot/.style={circle, fill, inner sep=1pt},
    edge/.style={thick, purple},
    vc/.style={orange, thick},
    subs/.style={violet, thick},
    new/.style={olive, thick},
    side/.style={darkgray, thick},
    every node/.style={font=\small}
]

\node[draw, minimum height=2.7cm, minimum width=1.2cm] (A) at (0,2.5) {};
\node at ($(A)+(-1,0)$) {$A$};

\node[dot, label=above:$a_p$] (a4) at ($(A)+(0.15,-1)$) {};
\node at ($(A)+(0,-0.15)$) {$\vdots$};
\node[dot, label=above:$a_2$] (a2) at ($(A)+(0.15,0.2)$) {};
\node[dot, label=above:$a_1$] (a1) at ($(A)+(0.15,0.8)$) {};

\node[ellipse, draw, minimum height=7.5mm, minimum width=7.5mm] (x1) at (2,3) {$x_1$};
\node[ellipse, draw, minimum height=7.5mm, minimum width=7.5mm] (x2) at (2,2) {$x_2$};

\draw[new] (x1) -- (a1);
\draw[new] (x1) -- (a2);
\draw[new] (x1) -- (a4);
\draw[new] (x2) -- (a1);
\draw[new] (x2) -- (a2);
\draw[new] (x2) -- (a4);

\draw[subs] (a1) to[bend right=60] (a2);
\draw[subs] (a1) to[bend right=60] (a4);
\draw[subs] (a2) to[bend right=60] (a4);

\end{tikzpicture}
\caption{Graph $H$}
\end{subfigure}

\caption{Illustration of the graphs $G$ and $H$ in the reduction.
The required adjacency across $\{x_1, x_2\}$ and $A$ forces 
a candidate solution to form a matching across $A$ and $B$.}
\label{fig:GH-sidebyside}
\end{figure}

\begin{lemma}
\label{lemma:lower-bound-label-contraction}
Unless the \ETH\ fails, the \textsc{Labeled Contractibility} problem does not admit an algorithm running in time $|V(H)|^{o(|V(G)|)}$.
\end{lemma}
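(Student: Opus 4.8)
The plan is to reduce from \textsc{Cross Matching} in polynomial time and then invoke Proposition~\ref{prop:cross-matching-lower-bound}; the construction is the one illustrated in Figure~\ref{fig:GH-sidebyside}. Given an instance $(G', (A,B))$ of \textsc{Cross Matching} with $|A| = |B| = n$, I would form $G$ from $G'$ by adding two fresh vertices $x_1, x_2$ and joining each of them to every vertex of $B$ (and to no other vertex). The target $H$ has vertex set $A \cup \{x_1, x_2\}$; in $H$ the set $A$ is a clique, each of $x_1, x_2$ is adjacent to every vertex of $A$, and $x_1 \not\sim x_2$ (so $H = K_{n+2}$ minus the edge $x_1x_2$). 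Then $V(H) \subseteq V(G)$, the number of vertices to be contracted is $k = |V(G) \setminus V(H)| = |B| = n$, and the construction is clearly polynomial.

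For the forward direction, given a cross matching $M = \{a_i b_{\pi(i)} : i \in [n]\}$ of $G'$ with $G'/M$ a clique, I would contract each edge $(a_i, b_{\pi(i)})$ retaining $a_i$. This yields the witness structure with $W_{a_i} = \{a_i, b_{\pi(i)}\}$ for $i \in [n]$ and $W_{x_1} = \{x_1\}$, $W_{x_2} = \{x_2\}$. Each $G[W_{a_i}]$ is connected because $a_ib_{\pi(i)} \in E(G')$; the sets $W_{a_i}$ and $W_{a_j}$ are adjacent in $G$ precisely when the corresponding contracted vertices are adjacent in $G'/M$, so $A$ becomes a clique in the contraction; each $x_\ell$ is adjacent in $G$ to $b_{\pi(i)} \in W_{a_i}$, so $x_\ell$ is joined to all of $A$; and $x_1 \not\sim x_2$ in $G$ keeps their singleton witness sets non-adjacent. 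Hence $G$ contracts to $H$.

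For the converse, suppose $\mathcal{W} = \{W_h : h \in A \cup \{x_1, x_2\}\}$ witnesses that $H$ is a contraction of $G$. The step I expect to be the crux is showing $W_{x_1} = \{x_1\}$ and $W_{x_2} = \{x_2\}$. Since each $a_i$ is the representative of its own set, $W_{x_1} \cap A = \emptyset$; as $N_G(x_1) = B$ and $G[W_{x_1}]$ is connected, this forces $W_{x_1} \subseteq \{x_1\} \cup B$, and symmetrically $W_{x_2} \subseteq \{x_2\} \cup B$. If some $b \in B$ lay in $W_{x_2}$, then the edge $x_1 b$ of $G$ (recall $x_1$ is adjacent to all of $B$) would make $W_{x_1}$ and $W_{x_2}$ adjacent in $G$, hence force $x_1 x_2 \in E(H)$, a contradiction; so $W_{x_2} \cap B = \emptyset$ and likewise $W_{x_1} \cap B = \emptyset$. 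Therefore all of $B$ is distributed among $W_{a_1}, \dots, W_{a_n}$. Because $x_1 \sim a_i$ in $H$ and $W_{x_1} = \{x_1\}$, the set $W_{a_i}$ must meet $N_G(x_1) = B$, so every $W_{a_i}$ contains a vertex of $B$; pigeonhole then gives exactly one, say $b_{\pi(i)}$, with $\pi$ a bijection. Connectivity of $G[W_{a_i}] = G[\{a_i, b_{\pi(i)}\}]$ shows $a_ib_{\pi(i)} \in E(G')$, so $M := \{a_i b_{\pi(i)} : i \in [n]\}$ is a perfect matching across $(A,B)$; and since $A$ is a clique in $H$, the sets $W_{a_i}, W_{a_j}$ are adjacent in $G$ for all $i \neq j$, which is exactly the statement that $G'/M$ is a clique. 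Hence $(G', (A,B))$ is a yes-instance.

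Finally, $|V(G)| = 2n + 2$ and $|V(H)| = n + 2$, so an algorithm solving \textsc{Labeled Contractibility} in time $|V(H)|^{o(|V(G)|)}$ would decide \textsc{Cross Matching} in time $(n+2)^{o(2n+2)} = n^{o(n)}$, contradicting Proposition~\ref{prop:cross-matching-lower-bound} unless the \ETH\ fails. This proves Lemma~\ref{lemma:lower-bound-label-contraction}; together with the brute-force algorithm of Lemma~\ref{thm:brute-force-algo-max-common-label-contraction} it yields Theorem~\ref{thm:brute-force-optimal}.
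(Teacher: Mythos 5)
Your proposal is correct and follows essentially the same route as the paper: the identical reduction from \textsc{Cross Matching} (adding $x_1,x_2$ joined to $B$, with $H$ the clique on $A\cup\{x_1,x_2\}$ minus $x_1x_2$), the same forward direction, and the same key insight in the converse that merging a $B$-vertex toward $x_1$ or $x_2$ would create the forbidden edge $x_1x_2$. The only cosmetic difference is that you phrase the converse via witness structures (pinning $W_{x_1},W_{x_2}$ to singletons and applying pigeonhole) while the paper argues directly on the contraction sequence; the content is the same.
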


\begin{proof}
Let $(G', A, B)$ be an instance of \textsc{Cross Matching} with $|A| = |B| = n$. We construct an equivalent instance $(G, H)$ of \textsc{Labeled Contractibility} as follows.

The graph $G$ is obtained from $G'$ by introducing two new vertices, $x_1$ and $x_2$, and adding edges from each to all vertices in $B$. Formally, $V(G) = V(G') \cup \{x_1, x_2\}$ and $E(G) = E(G') \cup \{(x_1, b), (x_2, b) \mid b \in B\}$.

The target graph $H$ is defined on the vertex set $V(H) = A \cup \{x_1, x_2\}$, with the following edge set:
(i) all pairs $\{a_1, a_2\} \subseteq A$ are adjacent (i.e., $A$ induces a clique); and
(ii) both $x_1$ and $x_2$ are adjacent to every vertex in $A$.
Vertices $x_1$ and $x_2$ are not adjacent. Equivalently, $H$ is a complete graph on $A \cup \{x_1, x_2\}$ minus the edge $(x_1, x_2)$. See Figure~\ref{fig:GH-sidebyside} for an illustration.

We prove that $(G', A, B)$ is a \yes-instance of \textsc{Cross Matching} if and only if $(G, H)$ is a \yes-instance of \textsc{Labeled Contractibility}.

{($\Rightarrow$)} Assume $(G', A, B)$ is a \yes-instance. Then there exists a perfect matching $M$ between $A$ and $B$ such that the contracted graph $G'/M$ is a clique. We construct a labeled contraction sequence $S$ on $G$ by contracting each edge $(a, b) \in M$, with $a \in A$ and $b \in B$, in any order. Since $M$ is a matching, all contracted edges are vertex-disjoint.

After all contractions in $S$, each vertex $b \in B$ is removed, and the vertex set of $G/S$ is $A \cup \{x_1, x_2\}$. The subgraph induced by $A$ in $G/S$ is a clique, because $G'/M$ is a clique and the contractions only merge vertices from $B$ into $A$. Furthermore, since $x_1$ and $x_2$ were connected to every $b \in B$ in the original graph, and each $b$ was contracted into some $a \in A$, these contractions introduce edges from both $x_1$ and $x_2$ to every $a \in A$. However, since $x_1$ and $x_2$ were not adjacent in the original graph and were not part of any contraction, the edge $(x_1, x_2)$ is not introduced. The resulting graph $G/S$ is precisely $H$.

{($\Leftarrow$)} Assume $(G, H)$ is a \yes-instance, meaning there exists a sequence $S$ of labeled contractions such that $G/S = H$. By construction, $V(G) = A \cup B \cup \{x_1, x_2\}$ and $V(H) = A \cup \{x_1, x_2\}$, which implies that all vertices in $B$ must be removed during the contraction sequence. In the labeled contraction model, a vertex is removed only when it is the second vertex in a contracted pair $(u, v)$. Thus, each $b \in B$ must appear as the second vertex in some contraction $(u, b) \in S$.

We now show that each $b \in B$ must be contracted into a distinct vertex $a \in A$. Suppose for contradiction that some $a \in A$ does not serve as the representative of any contracted pair. Consider the edge $(a, x_1)$ in $H$. Since $a$ and $x_1$ are not adjacent in the original graph $G$, this edge must have been created by a contraction involving a common neighbor. The only common neighbors of $a$ and $x_1$ in $G$ are vertices in $B$. To create an edge $(a, x_1)$, a vertex $b \in B$ must be contracted into either $a$ or $x_1$.

Case 1: $b$ is contracted into $a$. This creates the edge $(a, x_1)$. Since $b$ is also a neighbor of $x_2$, this contraction also creates an edge $(a, x_2)$, which is present in $H$.
Case 2: $b$ is contracted into $x_1$. This creates an edge $(x_1, a)$. However, since $b$ is also a neighbor of $x_2$, this contraction would also create an edge $(x_1, x_2)$, which is a contradiction since $(x_1, x_2) \notin E(H)$.
Therefore, for each $b \in B$, there must be a contraction that merges it into some vertex in $A$.
Since $|A| = |B| = n$, this implies that each vertex in $B$ is contracted into a distinct vertex in $A$, which forms a perfect matching $M$ between $A$ and $B$.

Finally, we argue that $G'/M$ is a clique. Since $G/S = H$, and the subgraph induced by $A$ in $H$ is a clique, the same must hold in $G'/M$. This is because the contractions only merge vertices from $B$ into $A$, preserving the structure among the vertices of $A$. Thus, $(G', A, B)$ is a \yes-instance of \textsc{Cross Matching}.

The construction of $(G, H)$ from $(G', A, B)$ takes polynomial time. The sizes of the graphs are $|V(G)| = 2n + 2$ and $|V(H)| = n + 2$. Therefore, if \textsc{Labeled Contractibility} could be solved in time $|V(H)|^{o(|V(G)|)}$, then \textsc{Cross Matching} could be solved in time $n^{o(n)}$, which contradicts Proposition~\ref{prop:cross-matching-lower-bound}. This completes the proof of the lemma.
\end{proof}

\section{Conclusion}
\label{sec:conclusion}

In this article, we advanced the study of the parameterized complexity of \textsc{Labeled Contractibility}, building upon the foundational work of Lafond and Marchand~\cite{DBLP:journals/corr/abs-2502-16096}. We developed a constructive dynamic programming algorithm with a running time of \( 2^{\mathcal{O}(\tw^2)} \cdot n^{\mathcal{O}(1)} \), which is a significant improvement over the previously known non-constructive meta-algorithmic approach. Furthermore, we complemented this result with a tight, matching conditional lower bound under the Exponential Time Hypothesis (\ETH), which is a relatively rare 
phenomenon.
Our work resolves several open questions from~\cite{DBLP:journals/corr/abs-2502-16096}. 
We provided an improved \FPT\ algorithm with respect to 
the combined parameter of solution size and degeneracy. 
We also established that the straightforward brute-force approach for 
the problem is optimal under \ETH, thereby providing a complete complexity landscape for this particular perspective.

We conclude by highlighting some directions for future research. 
We conjecture that our dynamic programming framework over 
tree decompositions could be extended to solve the more 
general \textsc{Maximum Common Labeled Contraction} problem.
Additionally, it would be interesting to explore whether 
fixed-parameter algorithms with better running time
can be obtained when parameterized by
the feedback vertex set number or the vertex cover number.

\bibliographystyle{plain}
\bibliography{references}

\end{document}